\newcommand{\ie}{{\it i.e.}\ }
\titleformat{\chapter}[display]
  {\bfseries\Large}
  {\filright\MakeUppercase{\chaptertitlename} \Large\thechapter}
  {3.5ex}
  {\titlerule\vspace{1ex}\filleft}
  [\vspace{1ex}\titlerule]
\numberwithin{equation}{section}
\theoremstyle{plain}
\newtheorem{prp}{Proposition}[section]
\newtheorem{thrm}[prp]{Theorem}
\newtheorem{crll}[prp]{Corollary}
\newtheorem{lmm}[prp]{Lemma}
\theoremstyle{definition}
\newtheorem{dfn}[prp]{Definition}
\theoremstyle{remark}
\newtheorem*{remark}{Remark}
\DeclareMathOperator{\Tr}{Tr}
\newcommand{\id}{\mathbb{I}}
\newcommand{\Z}{\mathbb{Z}}
\newcommand{\N}{\mathbb{N}}
\newcommand{\R}{\mathbb{R}}
\newcommand{\C}{\mathbb{C}}
\def\cL{{\cal L}}
\newcommand{\Ham}{\mathcal{H}}
\newcommand{\Lag}{\mathscr{L}}
\newcommand{\Alg}{\mathcal{A}}
\newcommand{\cpb}[2]{\{\! | #1, #2| \! \}}
\newcommand{\pb}[2]{\{ #1, #2 \}}
\newcommand{\ip}[2]{ #1\lrcorner #2 }
\newcommand{\Omegaone}{\Omega^{(1)}}
\newcommand{\omegaone}{\omega^{(1)}}
\renewcommand{\epsilon}{\varepsilon}
\renewcommand{\tilde}{\widetilde}
\newcommand{\parder}[2]{\frac{\partial #1}{\partial #2}}
\newcommand{\tilpartial}{\tilde \partial}
\renewcommand{\phi}{\varphi}
\begin{document}
\title{\textbf{Multiform description of the AKNS hierarchy and classical r-matrix}}

\author{Vincent Caudrelier and Matteo Stoppato\footnote{Corresponding author \texttt{mmms@leeds.ac.uk}.}}
\date{ }

\maketitle
\begin{center}
    \vspace{-6ex}
	School of Mathematics, University of Leeds, LS2 9JT, UK  \\[3ex]
\end{center}

\vspace{0.5cm}
\begin{abstract}
    In recent years, new properties of space-time duality in the Hamiltonian formalism of certain integrable classical field theories have been discovered and have led to their reformulation using ideas from covariant Hamiltonian field theory: in this sense, the covariant nature of their classical $r$-matrix structure was unraveled. Here, we solve the open question of extending these results to a whole hierarchy. We choose the Ablowitz-Kaup-Newell-Segur (AKNS) hierarchy. To do so, we introduce for the first time a Lagrangian multiform for the entire AKNS hierarchy. We use it to construct explicitly the necessary objects introduced previously by us: a symplectic multiform, a multi-time Poisson bracket and a Hamiltonian multiform. Equipped with these, we prove the following results: $(i)$ the Lax form containing the whole sequence of Lax matrices of the hierarchy possesses the rational classical $r$-matrix structure; $(ii)$ The zero curvature equations of the AKNS hierarchy are multiform Hamilton equations associated to our Hamiltonian multiform and multi-time Poisson bracket; $(iii)$ The Hamiltonian multiform provides a way to characterise the infinite set of conservation laws of the hierarchy reminiscent of the familiar criterion $\{I,H\}=0$ for a first integral $I$. 
\end{abstract}

\section{Introduction}
 The seminal work of Gardner, Greene, Kruskal and Miura \cite{Gardner_Greene_Kruskal_Miura_1967} quickly followed by that of Zakharov and Shabat \cite{Shabat_Zakharov_1972} launched the modern era of integrable systems by introducing the Inverse Scattering Method (ISM). The work of Ablowitz, Kaup, Newell and Segur \cite{Ablowitz_Kaup_Newell_Segur_1974} generalised this method and introduced the idea of an integrable {\it hierarchy}. The breakthrough discovery of ISM provides one with a nonlinear analogue of the Fourier transform to solve an initial-value problem for particular classes of PDEs. The other breakthrough discovery, in \cite{Zakharov_Faddeev_1971} and \cite{Zakharov_Manakov_1974}, that these PDEs were also examples of infinite-dimensional Hamiltonian systems completely integrable in the sense of Liouville, established the other face of the coin of integrable systems: the ISM was also a means to obtain the map to action-angle variables in this infinite-dimensional setting. It also meant that these equations had a Lagrangian description and could be thought of as classical field theories. With this in mind, the classical $r$-matrix approach to classical integrable systems, initially introduced by Sklyanin \cite{Sklyanin_1979} to perform the canonical quantization of such systems into quantum field theories, quickly evolved into a theory of its own describing the Hamiltonian aspects of integrable systems while furnishing a geometric and algebraic framework in which the inverse scattering method could be reinterpreted (via the notion of Riemann-Hilbert factorization problems). It developed into a deep theory based on the work of Semenov-Tian-Shansky \cite{SemenovTianShansky_1983} and Drinfel'd \cite{Drinfeld_1983}. In the realm of classical field theories in $1+1$ dimensions, an integrable model comes as part of an entire hierarchy of commuting flows which can be viewed as the reason for its integrability, providing an infinite dimensional analog of the situation in classical mechanics giving rise to Liouville theorem, see e.g. \cite{Faddeev_Takhtajan_2007} for a detailed account of this point of view. \\
 
More recently, it became apparent that the Hamiltonian formulation of an integrable classical field theory, which breaks the natural symmetry between the independent variables enjoyed for instance by the Lax pair formulation and the zero curvature equation, represents a practical but also a conceptual limitation. Regarding the practical aspect, the limitation appeared in connection with integrable classical field theories in the presence of a defect \cite{Bowcock_Corrigan_Zambon_2004}, specifically when one tried to understand Liouville integrability of that context, see \cite{Caudrelier_2008, Avan_Doikou_2012} and references therein. This was explained in \cite{Caudrelier_Kundu_2015,Caudrelier_2015} where the idea of a ``dual'' Hamiltonian formulation of a given classical field theory was introduced to resolve the issue. An important observation, which was further developed in \cite{Avan_Caudrelier_Doikou_Kundu_2016}, is that in this dual formulation, both Lax matrices of the pair describing the model at hand possess the same classical $r$-matrix structure, pointing to a space-time duality of this structure. This was unlikely to be a coincidence and the question of the origin of this common structure led to the work \cite{Avan_Caudrelier_2017} where, in the case of the Ablowitz-Kaup-Newell-Segur (AKNS) hierarchy \cite{Ablowitz_Kaup_Newell_Segur_1974}, it was traced back to a Lie-Poisson bracket used by Flashka-Newell-Ratiu (FNR) in \cite{Flashka_Newell_Ratiu_1983}.\\
 
Regarding the conceptual limitation, this \enquote{dual} description was still unsatisfactory in the sense that one had to choose one of the independent variables or the other as a starting point, but it did not seem possible to construct a classical $r$-matrix formalism capable of including both independent variables simultaneously. This flaw of the (traditional) Hamiltonian formulation of a field theory has been known for a long time. This is what motivated the early work of De Donder and Weyl \cite{DeDonder_1930,Weyl_1935} aimed at establishing a {\it covariant Hamiltonian field theory}. A wealth of subsequent developments ensued, driven in particular by the desire to construct a covariant canonical quantization scheme. This area is too vast to review faithfully here and we only refer, somewhat arbitrarily and with apologies to missed authors, to \cite{Helein_2011} and the references in \cite{Caudrelier_Stoppato_2020}. As we showed in \cite{Caudrelier_Stoppato_2020}, some of the ideas in that area provide a solution to our conceptual problem: can we construct a covariant Poisson bracket such that the Lax form (sometimes also called Lax connection) possesses a classical $r$-matrix structure and such that the zero curvature equation can be written as a covariant Hamilton equation? This success in giving the $r$-matrix a covariant interpretation was illustrated explicitly on three models: the nonlinear Schr\"odinger (NLS) equation, the modified Korteweg-de Vries (mKdV) equation and the sine-Gordon model in laboratory coordinates. Each model was considered in its own right as an integrable classical field theory. However, as mentioned above, it is known that they come in hierarchies. Of those three models, two (NLS and mKdV) naturally fit into the AKNS hierarchy. We can now pose the problem that is addressed in the present work: how can we extend our results in \cite{Caudrelier_Stoppato_2020} to the whole AKNS hierarchy? In other words, denoting by $Q^{(k)}(\lambda)$, $k\ge 0$ the set of Lax matrices associated to the flows with respect to the variables $x^k$, $k\ge 0$ in the AKNS hierarchy, is it possible to construct a Poisson bracket which can take the Lax form $$
W(\lambda)=\sum_{k=0}^\infty Q^{(k)}(\lambda)\,dx^k
$$ as an argument and which exhibits the classical $r$-matrix structure known to exist for each individual $Q^{(k)}(\lambda)$ \cite{Avan_Caudrelier_2017}? Also, can we write the entire set of zero curvature equations of the hierarchy, \ie $$\partial_kQ^{(j)}(\lambda)-\partial_jQ^{(k)}(\lambda)+[Q^{(j)}(\lambda),Q^{(k)}(\lambda)]=0, \qquad j,k\ge 0$$
in Hamiltonian form? The appropriate generalisation of the covariant Poisson bracket and the covariant Hamiltonian used in \cite{Caudrelier_Stoppato_2020} to tackle this problem for individual classical field theories was introduced recently by the authors in \cite{Caudrelier_Stoppato_2020_2}. We call them {\it multi-time Poisson bracket} (for the obvious reason that it can be used to generate the flows with respect to all the ``times'' $x^k$ of the hierarchy) and {\it Hamiltonian multiform} respectively. The latter terminology comes from the fact that this object is derived from what is called a Lagrangian multiform, a notion introduced by Lobb and Nijhoff in \cite{Lobb_Nijhoff_2009} and which encodes integrability in a variational way. At first, Lagrangian multiforms appeared in the realm of fully discrete integrable systems, in order to encapsulate multidimensional consistency, which captures the analogue of the commutativity of Hamiltonian flows in continuous integrable systems. This original work stimulated further developments in discrete integrable systems \cite{Lobb_Nijhoff_Quispel_2009,Lobb_Nijhoff_2010,Bobenko_Suris_2010,Yoo-Kong_Lobb_Nijhoff_2011,Boll_Petrera_Suris_2014,Boll_Petrera_Suris_2015}, then progressively in continuous finite dimensional systems, see e.g. \cite{Suris_2013,Petrera_Suris_2017}, $1+1$-dimensional field theories, see e.g. \cite{Xenitidis_Nijhoff_Lobb_2011,Suris_2016,Suris_Vermeeren_2016,Vermeeren_2019,Sleigh_Nijhoff_Caudrelier_2019,Petrera_Vermeeren_2020}, and the first example in $2+1$-dimensions \cite{Sleigh_Nijhoff_Caudrelier_2019_2}. The proposed generalised variational principle produces the standard Euler-Lagrange equations for the various equations forming an integrable hierarchy as well as additional equations, originally called corner equations, which can be interpreted as determining the allowed integrable Lagrangians themselves. The set of all these equations is called multiform Euler-Lagrange equations. In \cite{Caudrelier_Stoppato_2020_2}, we realised that one could apply successfully the strategy we used in \cite{Caudrelier_Stoppato_2020} to obtain a covariant Poisson bracket and covariant Hamiltonian, which is based on the data of a single Lagrangian, to a Lagrangian multiform instead. The outcome is the desired multi-time Poisson bracket and Hamiltonian multiform. Our main results are:
\begin{enumerate}
    \item We introduce for the first time a Lagrangian multiform for the complete AKNS hierarchy, see equations \eqref{Lag_multi_eq1}-\eqref{Lag_multi_eq2}. This is achieved using a generating Lagrangian $\Lag(\lambda,\mu)$ which is a double formal series whose coefficients are the coefficients of the desired Lagrangian multiform. It is remarkable that we can produce a closed form expression in generating form while previous attempts involved complicated iterative procedures.
    \item Given this Lagrangian multiform $\Lag$, we obtain the multi-time Poisson bracket $\cpb{~}{~}$ and the Hamiltonian multiform $\Ham$ associated with it.
    \item We prove that the Lax form $\displaystyle W(\lambda)=\sum_{k=0}^\infty Q^{(k)}(\lambda)\,dx^k$ possesses a classical $r$-matrix structure with respect to the multi-time Poisson bracket above, \ie  $$\cpb{W_1(\lambda)}{W_2(\mu)} = [r_{12}(\lambda-\mu), W_1(\lambda)+W_2(\mu)],$$ where $r_{12}(\lambda)$ is the so-called rational $r$-matrix.  This shows, together with \cite{Caudrelier_Stoppato_2020} and \cite{Caudrelier_Stoppato_2020_2}, the potential of covariant Hamiltonian field theory to reproduce and generalise one of the crucial identities in integrable systems.
    \item We show the Hamiltonian multiform nature of the set of zero curvature equations $\partial_i Q^{(j)}(\lambda) - \partial_jQ^{(i)}(\lambda) + [Q^{(i)}(\lambda),Q^{(j)}(\lambda)] = 0$, $\forall i,j\ge 0$. We do this by proving that it is equivalent to the multiform Hamilton equations for the Lax form, \ie\footnote{Here and in the rest of the paper the notation $\displaystyle\sum_{i<j=1}^\infty $ means $\displaystyle \sum_{\substack{i,j=0\\i<j}}^\infty$.} 
    $$ 
    dW(\lambda) = \sum_{i<j=1}^\infty \cpb{H_{ij}}{W(\lambda)} \, dx^i \wedge dx^j,
    $$
    which take the suggestive form $\displaystyle dW(\lambda) = W(\lambda)\wedge W(\lambda)$ of the Maurer-Cartan equation. Moreover, we were able to reproduce the known series of conservation laws and conserved quantities of this hierarchy. This proves the success of Hamiltonian multiforms (and thus indirectly Lagrangian multiforms) in describing the main features of integrable systems.
\end{enumerate}

The paper is organised as follows. In Section \ref{AKNS section} we give a brief and not exhaustive description of the AKNS hierarchy and of the r-matrix theory. In Section \ref{sectionLag} we give the Lagrangian multiform description of the AKNS hierarchy, and we introduce the symplectic multiform. In Section \ref{Section_rmatrix} we treat the classical r-matrix structure of the multi-time Poisson bracket. In Section \ref{Section_Hammultiform} we give the Hamiltonian multiform description of the hierarchy, we show the Hamiltonian multiform nature of the zero curvature equations, and we obtain the familiar conservation laws in this framework. In Section \ref{Section_3times} we illustrate our results on the first three times of the hierarchy which include the nonlinear Schr\"odinger equation and the modified Korteweg-de Vries equation. Conclusions and the discussion of open problems are included in Section \ref{Conclusions section}. Appendix \ref{efcoordinates_section} contains some properties of the useful phase space coordinates $e(\lambda)$ and $f(\lambda)$ that we use to our advantage for many of our proofs. For a better flow, some of those long, and not necessarily illuminating, proofs are gathered in Appendix \ref{proof section}. 

\section{Algebraic construction of the AKNS hierarchy}\label{AKNS section}
In the 1983 paper \cite{Flashka_Newell_Ratiu_1983}, Flashka, Newell and Ratiu introduced an algebraic formalism to cast the soliton equations associated with the AKNS hierarchy into what is known as the Adler-Kostant-Symes scheme \cite{Adler_1979,Kostant_1979,Symes_1980}. At the same period, the russian school unraveled the structures underlying this type of construction which culminated in the classical $r$-matrix theory \cite{SemenovTianShansky_1983}, and the introduction of the notion of Poisson-Lie group \cite{Drinfeld_1983}. Here, we review some aspects of this topic, freely adapting and merging notations and notions coming from both sources. It had been known before \cite{Flashka_Newell_Ratiu_1983}, since the work of \cite{Ablowitz_Kaup_Newell_Segur_1974}, that the so-called AKNS hierarchy can be constructed by considering an auxiliary spectral problem of the form
\begin{eq}
\partial_x \psi = \begin{pmatrix}
-i \lambda &  q(x)\\
r(x) & i \lambda
\end{pmatrix}\psi\equiv P \psi\equiv (\lambda P_0+P_1)\psi\,,
\end{eq}
where
\begin{eq}
P_1 = \begin{pmatrix}
0& q\\
r & 0
\end{pmatrix}\,, \qquad P_0 = -i \sigma_3\,,
\end{eq}
as well as another equation of the form\footnote{Traditionally, the flows thus defined are associated to ``time'' variable $t^n$ but one of the main points of \cite{Flashka_Newell_Ratiu_1983} is that they all play the same role as $x$ which could be viewed as $t^1$ in this hierarchy. We simply denote them all by $x^n$ since whether they play the role of a space or time variable is really up to interpretation.}
\begin{eq}
\partial_{n}\psi =  Q^{(n)}(\lambda) \psi\,,\qquad \partial_n \coloneqq \parder{}{x^n}\,,
\end{eq}
with $Q^{(n)}(\lambda) = \lambda^n Q_0 + \lambda^{n-1} Q_1 + \dots + Q_n$ where each $Q_i$ is a $2\times 2$ traceless matrix. Then the compatibility condition $\partial_x \partial_{n} \psi = \partial_{n} \partial_x \psi$ translates into the well-known zero-curvature equation
\begin{eq}
\partial_{n} P(\lambda) - \partial_x Q^{(n)}(\lambda) + [P(\lambda),Q^{(n)}(\lambda)] = 0\,.
\end{eq}
By setting to zero every coefficient of powers of $\lambda$ one obtains a series of equations that allow to find $Q_0$, \dots, $Q_n$ recursively. This produces $Q_0 = P_0$, $Q_1 = P_1$ (up to some normalisation constants) and the entries of $Q_j$ with $j\ge 2$ are found to be polynomials in $q$, $r$ and their derivatives with respect to $x$. The last of these equations is
\begin{eq}
\partial_{n}{P_1} - \partial_x Q_n +[P_1,Q_n] = 0\,,
\end{eq}
and produces a partial differential equation for $q$ and $r$ viewed as functions of $x$ and $x^n$ which is integrable. Different values of $n$ gives the successive equations of the AKNS hierarchy. We list them for $n=0,1,2,3$, giving the name of the corresponding famous example (which is usually obtained by a further reduction, e.g. $r=\pm q^*$ for $n=2$ gives the (de)focusing nonlinear Schr\"odinger equation).
\begin{eq}\label{AKNS_FNR}
\begin{array}{ccr}
    q_0 = -2i q\,, & r_0 = 2i r & \mbox{Scaling}\,,\\
    q_1 = q_x\,,& r_1 = r_x  & \mbox{Translation }x \mapsto x + x^1\,,\\
    q_2 = \frac{i}{2}q_{xx} - i q^2 r \,,& r_2 = - \frac{i}{2}r_{xx} + i q r^2 & \mbox{Non-linear Schr\"odinger equation}\,,\\
    q_3 = -\frac{1}{4}q_{xxx} + \frac{3}{2} q r q_x  \,,& r_3 = - \frac{1}{4}r_{xxx} +\frac{3}{2}  q r r_x & \mbox{Modified Korteweg-de Vries equation}
    \end{array}
\end{eq}
It turns out that all these equations can be interpreted as Hamiltonian flows which commute with each other and can therefore be imposed simultaneously on the variable $q$ and $r$. This is ensured by that fact that the following zero curvature equations hold for any $k,n\ge 0$ (by setting $x=x^1$ and $Q^{(1)}=P$),
\begin{eq}
\partial_{n} Q^{(k)}(\lambda) - \partial_k Q^{(n)}(\lambda) + [Q^{(k)}(\lambda),Q^{(n)}(\lambda)] = 0\,.
\end{eq}
In \cite{Flashka_Newell_Ratiu_1983}, these facts and several others were cast into the algebraic setup of the Adler-Kostant-Symes scheme whereby one can introduce integrable Hamiltonian systems based on the decomposition of a Lie algebra into two Lie subalgebras which are isotropic with respect to an ad-invariant nondegenerate symmetric bilinear form on the Lie algebra. For the AKNS hierarchy, \cite{Flashka_Newell_Ratiu_1983} use the the Lie algebra $\cL$ of formal Laurent series in a variable $\lambda$ with coefficients in the Lie algebra $sl(2,\C)$, \ie the Lie algebra of elements of the form
\begin{eq}
X(\lambda)=\sum_{j=-\infty}^N X_j \lambda^j\,,~~X_j\in{\rm sl}(2,\C)\,;~~\text{for some integer $N$}\,,
\end{eq}
with the bracket given by
\begin{eq}
\label{Liebracket}
[X,Y](\lambda)=\sum_k\sum_{i+j=k}[X_i,Y_j]\lambda^k\,.
\end{eq}
There is a decomposition of $\cL$ into Lie subalgebras $\cL=\cL_-\oplus \cL_+$ where
\begin{eq}
\cL_-=\{\sum_{j=-\infty}^{-1} X_j \lambda^j\}\,,~~\cL_+=\{\sum_{j=0}^\infty X_j \lambda^j\,;~~X_j=0~~\forall j>N~~\text{for some integer $N\ge 0$}\}\,.
\end{eq}
This yields two projectors $P_+$ and $P_-$. The following ad-invariant nondegenerate symmetric bilinear form is used, for all $X,Y\in\cL$,
\begin{eq}
\label{bilinear_form}
\left(X,Y\right)=\sum_{i+j=-1}{\rm Tr}(X_i,Y_j)\equiv {\rm Res}_{\lambda}{\rm Tr}(X(\lambda)Y(\lambda))\,,
\end{eq}
Without entering the details of the construction, we present the summarised results of interest for us. The entire AKNS hierarchy can be obtained by considering $Q(\lambda)$ as the following formal series 
\begin{gather}
\label{defQ}
Q(\lambda) = \sum_{i=0}^\infty Q_i \lambda^{-i} = Q_0 +  \frac{Q_1}{\lambda}+ \frac{Q_2}{\lambda^2} + \frac{Q_3}{\lambda^3} + \dots\,,\\
Q_i = \begin{pmatrix}
a_i & b_i\\
c_i & -a_i
\end{pmatrix}\, \qquad a(\lambda) = \sum_i a_i \lambda^{-i}\,,\quad  b(\lambda) = \sum_i b_i \lambda^{-i}\,,\quad  c(\lambda) = \sum_i c_i \lambda^{-i}\,,
\end{gather}
and introducing the vector fields $\partial_n$ by
\begin{eq}\label{FNR_tNequations}
\partial_{n} Q(\lambda) = [P_+(\lambda^nQ(\lambda)), Q(\lambda)]=-[P_-(\lambda^nQ(\lambda)), Q(\lambda)]=[R(\lambda^nQ(\lambda)), Q(\lambda)]\,,
\end{eq}
where $R=\frac{1}{2}(P_+-P_-)$ is the endomorphism form of the classical $r$-matrix. It is well-known that this operator satisfies the modified classical Yang-Baxter equation 
and allows one to define a second Lie bracket $[~,~]_R$ on $\cL$ (see e.g. \cite{SemenovTianShansky_2008})
\begin{eq}
\label{R_bracket}
[X,Y]_R=[RX,Y]+[X,RY]\,.
\end{eq}
The significance of this reformulation is that the authors achieved several important results:
\begin{enumerate}
    \item The equations \eqref{FNR_tNequations} are commuting Hamiltonian flows associated to the Hamiltonian functions 
    \begin{eq}
\label{Casimirs}
h_k(X)=-\frac{1}{2}(S^k(X),X)\,,~~k\in\Z\,,~~(S^kX)(\lambda)=\lambda^kX(\lambda)\,.
    \end{eq} 
    which are Casimir functions with respect to the Lie-Poisson bracket associated to the Lie bracket \eqref{Liebracket}. As a consequence, these functions are in involution with respect to the Lie-Poisson bracket associated to the second Lie bracket \eqref{R_bracket} on $\cL$ and their Hamilton equations take the form of the Lax equation \eqref{FNR_tNequations};
    
    \item In this construction, one can get rid of the special role of the $x$ variable, which is now the variable $x^1$, no different from any of the other $x^n$. They then propose to define a hierarchy of integrable PDEs as follows: use \eqref{FNR_tNequations} for a fixed $n$ as a starting point to determined all the $Q_j$. This yields that $b_j$, $c_j$ for $j>n$ and $a_j$, $j>1$ are polynomials in $b_j,c_j$, $j=1,\dots,n$, which are now viewed as functions of $x^n$, and in their derivatives with respect to $x^n$. Then, one can use any one of the other variables $x^k$ to induce a Hamiltonian flow on the infinite dimensional phase space $b_j(x^n),c_j(x^n)$, $j=1,\dots,n$. The Hamilton equations takes the form of a zero curvature equation
\begin{eq}
\partial_kQ^{(n)}(x^n,\lambda)-\partial_nQ^{(k)}(x^n,\lambda)+[Q^{(n)}(x^n,\lambda),Q^{(k)}(x^n,\lambda)]=0
\end{eq}
where $Q^{(n)}(x^n,\lambda)$ denotes $P_+(\lambda^nQ(\lambda))$ where the above substitution for $a_j,b_j,c_j$ in terms of the finite number of fields $b_j(x^n),c_j(x^n)$, $j=1,\dots,n$ and their $x^n$ derivatives has been performed. See \cite{Avan_Caudrelier_2017} for more details about this.
    
    \item There exist generalised conservation laws $\parder{F_{jk}}{x^\ell} = \parder{F_{\ell k}}{x^j}$ for all $j,k,\ell\ge 0$ where $F_{kj}$ can be obtained efficiently from a generating function. For $j=1$, they reproduce the usual AKNS conservation laws with $F_{1k}$ being the conserved densities and $F_{\ell k}$ the corresponding fluxes.
\end{enumerate}
Those results are reviewed in detail in \cite{Avan_Caudrelier_2017} where the observation that one can start from an arbitrary flow $x^n$ is used to prove a general result on the $r$-matrix structure of dual Lax pairs which was first observed in \cite{Caudrelier_Kundu_2015} and \cite{Caudrelier_2015}. More explicitly, recall that the first $r$-matrix structure appeared in \cite{Sklyanin_1979} with the objective of quantizing the nonlinear Schr\"odinger equation. The fundamental observation is that one can encode the Poisson bracket $\{q(x),r(y)\}=i\delta(x-y)$ allowing to describe NLS as a Hamiltonian field theory into the (linear ultralocal) Sklyanin bracket \cite{Sklyanin_1979}
\begin{eq}
\label{SklyaninPB}
\{Q^{(1)}_1(x,\lambda)~,~Q^{(1)}_2(y,\mu)\}=\delta(x-y)\left[r_{12}(\lambda-\mu), Q^{(1)}_1(x,\lambda)+Q^{(1)}_2(y,\mu)\right]\,,
\end{eq}
where the indices $1$ and $2$ denote in which copy of the tensor product a matrix acts non trivially, e.g. $M_1=M\otimes\id$, $M_2=\id\otimes M$, and $r_{12}(\lambda)$ is the classical $r$-matrix related to the operator $R$ by
\begin{eq}
\forall X\in\cL\,,~~(RX)(\lambda)={\rm res}_{\mu}\Tr_2 (r_{12}(\lambda-\mu) X_{2}(\mu))\,,
\end{eq}
where $\Tr_2$ means that the trace is only taken over the second space in the tensor product. The main results of \cite{Avan_Caudrelier_2017} is that starting from a given $x^n$ and inducing the $x^k$ flow by considering the zero curvature equation 
\begin{eq}
\partial_k Q^{(n)}(x^n,\lambda)-\partial_n Q^{(k)}(x^n,\lambda)+[Q^{(n)}(x^n,\lambda),Q^{(k)}(x^n,\lambda)]=0
\end{eq}
or starting from $x^k$ and inducing the $x^n$ flow by considering the zero curvature equation 
\begin{eq}
\partial_nQ^{(k)}(x^k,\lambda)-\partial_kQ^{(n)}(x^k,\lambda)+[Q^{(k)}(x^k,\lambda),Q^{(n)}(x^k,\lambda)]=0
\end{eq}
yields the same set of integrable PDEs which possesses two Hamiltonian formulations with respect to two distinct Poisson brackets $\{~,~\}_n$ and $\{~,~\}_k$ for which one has
\begin{eq}
\{Q^{(n)}_1(x^n,\lambda)~,~Q^{(n)}_2(y^n,\mu)\}_n=\delta(x^n-y^n)\left[r_{12}(\lambda-\mu), Q^{(n)}_1(x^n,\lambda)+Q^{(n)}_2(y^n,\mu)\right]\,,
\end{eq}
and
\begin{eq}
\{Q^{(k)}_1(x^k,\lambda)~,~Q^{(k)}_2(y^k,\mu)\}_k=\delta(x^k-y^k)\left[r_{12}(\lambda-\mu), Q^{(k)}_1(x^k,\lambda)+Q^{(k)}_2(y^k,\mu)\right]\,.
\end{eq}

We want to stress that despite the deep observation that all independent variables $x^j$ play the same role, both in \cite{Flashka_Newell_Ratiu_1983} and \cite{Avan_Caudrelier_2017}, the authors still implement the step of using \eqref{FNR_tNequations} {\it first} for a fixed (but arbitrary) $x^n$ in order to produce a phase space for a {\it field theory} consisting of a finite number of fields $b_j(x^n)$, $c_j(x^n)$ $j=1,\dots,n$. This leads to a rather complicated construction of the Poisson brackets $\{~,~\}^n$ and $\{~,~\}^k$ in \cite{Avan_Caudrelier_2017} whose common $r$-matrix structure is traced back to the original Lie-Poisson bracket associated to the second Lie bracket \eqref{R_bracket}. This also points to the need of a truly covariant Poisson bracket capable of accommodating any pair of independent variables $x^n$ and $x^k$ {\it simultaneously} and producing an $r$-matrix structure for the associated {\it Lax form} $W(\lambda)=Q^{(n)}(\lambda)\,dx^n+Q^{(k)}(\lambda)\,dx^k$. This was achieved by us in \cite{Caudrelier_Stoppato_2020}. Another essential question was still pending: how to go beyond only a pair of times $x^n$ and $x^k$, corresponding to a single zero curvature equation, in order to include the entire hierarchy of flows? How to construct a Poisson structure capable of dealing with the corresponding Lax form $\displaystyle W(\lambda)=\sum_{j=0}^\infty Q^{(j)}(\lambda)\,dx^j$? 

In this paper, we answer these questions by avoiding altogether the first step of fixing a given time $x^n$ and by working with all the equations \eqref{FNR_tNequations} at once. They are interpreted as commuting Hamiltonian flows on a phase space with a countable number of coordinates $b_j$, $c_j$, $j\ge 1$. This interpretation is less known than the standard field theory viewpoint but it provides a deeper insight into the structure of the hierarchy. It is possible thanks to the very ideas behind a covariant approach to field theories which promote a {\it finite dimensional} phase space over a space-time manifold of dimension greater than one. Classical mechanics is viewed as a particular case where the space time manifold is reduced to $\R$ for a single time only. In our opinion, our interpretation is also a true implementation of the original observation that all independent variables $x^0$, $x^1$, $x^2\dots$ play a symmetric role. This is captured by our use of a Lagrangian and Hamiltonian multiform which do not distinguish any particular independent variable as being special.

Our main objective is to construct a Poisson bracket $\cpb{~}{~}$, called multi-time Poisson bracket, and a Hamiltonian multiform $\displaystyle H=\sum_{i<j=1}^\infty H_{ij}\,dx^i\wedge dx^j$ such that:
\begin{enumerate}
    \item It is possible to compute $\cpb{W_1(\lambda)}{W_2(\mu)}$ for the Lax form $\displaystyle W(\lambda)=\sum_{j=0}^\infty Q^{(i)}(\lambda)\,dx^j$ associated to the {\it entire} hierarchy, and to prove that it possesses the $r$-matrix structure;
    
    \item The collection of all the equations
    $\partial_{k} Q(\lambda) = [Q^{(k)}(\lambda), Q(\lambda)]$, $k\ge 0$ or, equivalently\footnote{This equivalence does not seem to be well-known but we use it all along and deal interchangeably with the FNR equations \eqref{FNR_tNequations} and the zero curvature equations \eqref{ZC}. The implication \eqref{FNR_tNequations}$\Rightarrow$\eqref{ZC} is shown for instance in \cite[Lemma 3.13]{Avan_Caudrelier_2017}. The converse is discussed in \cite[Chapter 5]{Newell_1985}.}, of all the zero curvature equations 
    \begin{eq}
    \label{ZC}
     \partial_{i} Q^{(j)}(\lambda)-\partial_{j}Q^{(i)}(\lambda)+[Q^{(j)}(\lambda), Q^{(i)}(\lambda)]=0\,,~~i,j\ge 0\,,
    \end{eq}
    can be written in Hamiltonian form as $\displaystyle dW(\lambda)=\sum_{i<j=1}^\infty\cpb{H_{ij}}{W(\lambda)}\,dx^i\wedge dx^j$.
    \end{enumerate}

It is remarkable that in order to achieve the first objective, one can apply the construction of \cite{Caudrelier_Stoppato_2020}, which was based on a single Lagrangian, to a Lagrangian multiform which contains in particular the collection of Lagrangians corresponding to all flows in the hierarchy. Then, using the construction introduced in \cite{Caudrelier_Stoppato_2020_2}, we are able to obtain the required Hamiltonian multiform. In our exposition, the use of generating functions in the form of formal (Laurent) series will turn out to be extremely efficient. With this in mind, 
we collect the following set of compatible Lax equations for $Q(\lambda)$ as defined in \eqref{defQ},
\begin{eq}
\partial_{k} Q(\lambda) = [Q^{(k)}(\lambda), Q(\lambda)]\,,~~k= 0,1,2,\dots\,,
\end{eq}
where $Q^{(k)}(\lambda)=P_+(\lambda^k Q(\lambda))$, into 
\begin{eq}\label{AKNS_generating}
D_\mu Q(\lambda) = \frac{[Q(\mu),Q(\lambda)]}{ \mu - \lambda}\,,
\end{eq}
where we introduced the derivation
\begin{eq}
\label{gen_der}
D_\mu \coloneqq \sum_{k=0}^\infty \frac{1}{\mu^{k+1}} \partial_k\,,
\end{eq}
and used the formal series identity
\begin{eq}
\sum_{k=0}^\infty \frac{Q^{(k)}(\lambda)}{\mu^{k+1}} = \frac{Q(\mu)}{\mu - \lambda}\,.
\end{eq}
It is important not to get confused by the notation $D_\mu$ which is not meant to be the partial derivative with respect to $\mu$, but simply the generating expression \eqref{gen_der}.
We remark that writing the AKNS hierarchy in the generating form \eqref{AKNS_generating} allows us to reproduce quickly known results. From the symmetry of the right-hand side in \eqref{AKNS_generating}, we have
\begin{eq}
D_\mu Q(\lambda) = D_\lambda Q(\mu)\,,
\end{eq}
which in component is
\begin{eq}
\partial_k Q_{j+1} = \partial_j Q_{k+1}\,,~~j,k\ge 0\,.
\end{eq}
Moreover, by means of the Jacobi identity we have
\begin{eq}
D_\lambda D_\mu Q(\nu) = D_\mu D_\lambda Q(\nu)\,,
\end{eq}
which means that the flows $\partial_j$ and $\partial_k$ commute\footnote{Of course, this had to be the case in the first place so as to allow us to consider those flows simultaneously and to define $D_\mu$, but this is a good check of the generating function formalism and an argument in favour of its efficiency.}. Finally, noting that the generating function of the Hamiltonian functions \eqref{Casimirs} is given by 
\begin{eq}
g(\lambda)\equiv-\frac{1}{2}\Tr{Q^2(\lambda)}=-\frac{1}{2}\Tr{Q^2_0}+\sum_{k=0}^\infty\frac{1}{\lambda^{k+1}}g_k\,,
\end{eq}
we find 
\begin{eq}
D_\mu g(\lambda)=0\,.
\end{eq}
This shows that the flows take place on the level surface $g(\lambda)=C(\lambda)$ where $C(\lambda)$ is a series in $\lambda^{-1}$ with constant coefficients. Therefore, in line with \cite{Flashka_Newell_Ratiu_1983}, we fix 
\begin{eq}
\Tr{Q^2(\lambda)}=-2\,,
\end{eq}
in the rest of this paper.

\section{Lagrangian and symplectic multiforms for the AKNS hierarchy}\label{sectionLag}
The framework of this work is the \emph{variational bi-complex}, whose algebraic description is giving in \cite{Dickey_2003}. Very roughly, let $M= \R^\N$ be the multi-time manifold with coordinates $(x_0,x_1,x_2,\dots)$, and let it be the base manifold of a fibred manifold, in which the section of the fibres are the fields and their derivatives $u^k_{(j)}$, $(j) = (j_0, j_1, j_2, \dots)$ where only a finite number of $j_i$'s are non zero. The base coordinates will be called \emph{horizontal}, and the fibred coordinates will be called \emph{vertical}. We will model the fibres using a differential algebra $\Alg$, with derivations $\partial_i$ acting in the usual way, $\partial_i u^k_{(j)} = u^k_{(j) +e_i}$. We will use two different differentials, a vertical one $\delta$, which resembles the variational derivative, and a horizontal one $d$, which is the usual total differential, such that $(\delta +d)^2=0$. With the symbol $\Alg^{(p,q)}$ we mean the set of forms with vertical degree $p$ and horizontal degree $q$ of the form
\begin{equation}
    \omega =\sum_{(i),(k),(j)} f^{(i)}_{(k),(j)}\delta u^{k_1}_{(i_1)} \wedge \dots \wedge \delta u^{k_p}_{(i_p)} \wedge dx^{j_1} \wedge \dots \wedge dx^{j_q}, \qquad f^{(i)}_{(k),(j)}\in \Alg\,.
\end{equation}
The reader can find more details on how to use these tools in the context of multiforms in \cite{Caudrelier_Stoppato_2020_2}.\\  

Lagrangian multiforms were first introduced by Lobb and Nijhoff in 2019 \cite{Lobb_Nijhoff_2009} to describe integrable hierarchies using a variational principle, in a way that preserves multidimensional consistency. For $1+1$-dimensional field theories, the starting point is to consider a two-form
\begin{eq}
\Lag[u] = \sum_{i<j=1}^\infty L_{ij}[u] \,dx^{ij}\,.
\end{eq}
The fields $u$ themselves are functions of the multi-time variables $x^0,x^1,x^2,\dots$ For each $i,j$, the notation $L_{ij}[u]$ means that $L_{ij}$ is a function of a finite set of fields collectively denoted by $u$ for simplicity, and their derivatives with respect to the multi-time variables up to some finite order. We used the notation $dx^{ij} = dx^i \wedge dx^j$, and the convention $L_{ij}[u] = - L_{ji}[u]$. In this paper, the coefficients $L_{ij}$ do not depend explicitly on the multi-time variables. The equations of the hierarchy, together with the so called \emph{corner equations}, are then obtained by setting
\begin{eq}
\delta d \Lag[u] = 0 \,,
\end{eq}
that are called \emph{multiform Euler-Lagrange equations}.  These are obtained via a generalised variational principle for an action associated to $\Lag[u]$. The reader is referred to \cite{Sleigh_Nijhoff_Caudrelier_2019_2, Vermeeren_2018} for details. The main sign of integrability put forward in the original work \cite{Lobb_Nijhoff_2009} was the \emph{closure relation}. In our context, it means that $d \Lag =0$ ``on-shell''  \ie when the multiform Euler-Lagrange equations $\delta d \Lag = 0$ hold. We include the closure relation as a property of a Lagrangian multiform in the present work.\\

In \cite{Caudrelier_Stoppato_2020_2} we introduced and developed the Hamiltonian counterpart called \emph{Hamiltonian multiforms}, which for $1+1$-field theories are two-forms
\begin{eq}
\Ham[u] = \sum_{i<j=1}^\infty H_{ij}[u] \,dx^{ij}\,.
\end{eq}
When paired with a \emph{symplectic multiform} $\Omega \in \Alg^{(2,1)}$, we can obtain the so-called \emph{multiform Hamilton equations} as
 \begin{eq}
\label{Ham_eqs}
\delta \Ham = \sum_{i=0}^\infty dx^i \wedge \ip{\tilpartial_i}\Omega\,,~~\text{where}~~\tilpartial_i=\sum_{(k)} u_{(k) + e_i} \parder{}{u_{(k)}}\,.
\end{eq}
We remark that for a given hierarchy both $\Ham$ and $\Omega$ can be obtained from the Lagrangian multiform $\Lag$ through a \emph{Legendre transform}-type procedure \cite[Definition 2.3]{Caudrelier_Stoppato_2020_2}. For a specific class of horizontal forms called \emph{Hamiltonian forms}, we can introduce the multiform analog of a Poisson structure, that we called \emph{multi-time Poisson bracket}: for any two Hamiltonian forms $F,G$ we define
\begin{eq}
 \cpb{F}{G} = (-1)^r\ip{\xi_F}{\delta G}\,,
\end{eq}
where $r$ is the horizontal degree of $F$ and $\xi_F$ is the (multi)vector field satisfying $\ip{\xi_F}{\Omega} = \delta F$. We refer the reader to the paper \cite{Caudrelier_Stoppato_2020_2} for an introduction and detailed explanation of these concepts.\\


\subsection{Lagrangian multiform}

We now introduce a Lagrangian multiform which allows us to implement the strategy \cite{Caudrelier_Stoppato_2020_2} recalled briefly above and obtain $\Ham$ and $\Omega$ for the AKNS hierarchy. 
Recall that the collection of flows in the AKNS hierarchy is written in generating form as 
\begin{subgather}
\label{eq_Q}
    D_\mu Q(\lambda) =\frac{[Q(\mu),Q(\lambda)]}{\mu-\lambda}\,, \qquad Q(\lambda) = \sum_{i=0}^\infty \frac{Q_i}{\lambda^i}\,,\\
    Q(\lambda) = \begin{pmatrix}
    a(\lambda) & b(\lambda)\\
    c(\lambda) & -a(\lambda)
    \end{pmatrix}\,,\qquad  Q_i = \begin{pmatrix}
    a_i & b_i\\
    c_i & -a_i
    \end{pmatrix}\,,\\
    \frac{1}{2}\Tr Q(\lambda)^2 = a^2(\lambda) + b(\lambda) c(\lambda) = -1\,.
\end{subgather}
where $\lambda$ and $\mu$ are formal parameters. In order to find an appropriate Lagrangian multiform, it is convenient to note that we can write $Q(\lambda)$ as
\begin{eq}
Q(\lambda) = \phi(\lambda) Q_0 \phi(\lambda)^{-1}
\end{eq}
with $Q_0= - i \sigma_3$ being constant and 
\begin{eq}
\phi(\lambda)=\id+\sum_{j=1}^\infty\frac{\phi_j}{\lambda^j}\,.
\end{eq} 
This has been established independently from various angles, in relation to the factorization theorem, see e.g. \cite{SemenovTianShansky_2008} or in relation to vertex operators, see e.g. \cite[Chapter 5]{Newell_1985}. Contrary to the parametrization used in the latter book, we find it useful to use the following remarkable set of coordinates found in \cite{Flashka_Newell_Ratiu_1983}
\begin{eq}
e(\lambda)  = \frac{b(\lambda)}{\sqrt{i-a(\lambda)}}= \sum_{i=1}^\infty \frac{ e_i}{\lambda^i}\,,~~f(\lambda) = \frac{c(\lambda)}{\sqrt{i-a(\lambda)}} = \sum_{i=1}^\infty \frac{f_i}{\lambda^i}, \qquad \text{(note: }e_0 = f_0 = 0),
\end{eq}
 and set
\begin{eq}
\label{form_phi}
\phi(\lambda) = \frac{1}{\sqrt{2i}} \begin{pmatrix}
\sqrt{2i-e(\lambda)f(\lambda)}&e(\lambda)\\
-f(\lambda) & \sqrt{2i-e(\lambda)f(\lambda)}
\end{pmatrix}\,.
\end{eq}
A direct calculation using $a^2(\lambda)+b(\lambda)c(\lambda)=-1$ shows that $\det \phi(\lambda)=1$ and $ \phi(\lambda) (-i\sigma_3) \phi(\lambda)^{-1}=Q(\lambda)$ as required.
The reader can find more about the coordinates $e(\lambda)$, $f(\lambda)$ in Appendix \ref{efcoordinates_section}. Their main property is that they provide Darboux coordinates for all the single-time Poisson brackets $\pb{~}{~}_k$ constructed from the single-time symplectic forms $\omega_k$, see Corollary \ref{corollary darboux coordinates} and Proposition \ref{proposition decomposition} below. We can now formulate the first main result of this section.
 We obtain the desired Lagrangian multiform $\displaystyle \Lag=\sum_{i<j=1}^\infty L_{ij}\,dx^{ij}$ using the generating function formalism and collecting the coefficients $L_{ij}$ into a formal series in $\lambda^{-1}$ and $\mu^{-1}$ as follows
\begin{eq}
\Lag(\lambda,\mu) = \sum_{i,j=0}^\infty \frac{L_{ij}}{\lambda^{i+1} \mu^{j+1}}\,.
\end{eq}
By a slight abuse of language, we will also call $\Lag(\lambda,\mu)$ a Lagrangian multiform.
\begin{thrm}[Lagrangian multiform and multiform Euler-Lagrange equations]
\label{Lag_multi}
Define
\begin{eq}\label{Lag_multi_eq1}
\Lag(\lambda,\mu) = K(\lambda,\mu) - V(\lambda,\mu)\,,
\end{eq}
where
\begin{equation}\label{Lag_multi_eq2}
    K(\lambda,\mu) = \Tr{\left( \phi(\mu)^{-1} D_\lambda \phi(\mu)Q_0 - \phi(\lambda) ^{-1} D_\mu \phi(\lambda)Q_0     \right) }\,, \qquad V(\lambda,\mu)= - \frac{1}{2} \Tr{ \frac{(Q(\lambda)-Q(\mu))^2}{\lambda-\mu} }\,.
\end{equation}
Then $\Lag(\lambda,\mu)$ is a Lagrangian multiform for the AKNS hierarchy equations \eqref{eq_Q}. 

\end{thrm}
Indeed, the multiform Euler-Lagrange equations $\delta d \Lag =0$ are given by
\begin{eq}
D_\mu Q(\lambda) = \frac{[Q(\mu),Q(\lambda)]}{\mu-\lambda}\,,
\end{eq}
and the closure relation $d\Lag=0$ is satisfied on those equations. In generating form, the latter is equivalent to 
\begin{equation}
    D_\nu \Lag(\lambda,\mu) + D_\lambda \Lag(\mu,\nu) + D_\mu \Lag (\nu,\lambda) =0.
\end{equation}
The proof is given in Appendix \ref{Proof_Lag_multi}.
\begin{remark}
Although we discovered it differently, the Lagrangian multiform \eqref{Lag_multi_eq1}-\eqref{Lag_multi_eq2} bears some striking resemblance with the Zakharov-Mikhailov (ZM) Lagrangian appearing in \cite{Zakharov_Mikhailov_1980}, despite the fact that the latter is a standard Lagrangian and not a multiform. The ZM Lagrangian was introduced to provide a variational description of the system of compatibility conditions (zero curvature equations) corresponding to a Lax pair of matrices which are rational functions of the spectral parameter with distinct simple poles. Interestingly, one can formally relate our Lagrangian multiform to the ZM Lagrangian but we do not elaborate on this here as we have not gained any insight on either our results or the ZM Lagrangian by doing so. A Lagrangian multiform constructed on the ZM Lagrangian was obtained in  \cite{Sleigh_Nijhoff_Caudrelier_2019} and used to obtain a variational derivation of Lax pair equations themselves. In that same paper, the authors presented the first few coefficients of the Lagrangian multiform for the AKNS hierarchy but it was not clear how these derive directly from the ZM Lagrangian multiform. Our Lagrangian multiform and Theorem \ref{Lag_multi} fill in this gap and provide the complete set of coefficients $L_{ij}$ of the Lagrangian multiform for the AKNS hierarchy. We note that Lagrangians producing the zero curvature equations \eqref{ZC} in potential form were obtained in \cite{Nijhoff_1986}. They involved a potential function denoted by $H$ in that paper which produces the Lax matrices $Q^{(k)}$ we use here via the relation $Q^{(k)}=\partial_{k-1}H$. However, assembling all those Lagrangians into a two-form does not seem to provide a Lagrangian multiform for the set of AKNS equations. The closure relation does not hold for instance.  
\end{remark}

To help the reader recognize the most familiar models, we write some of the coefficients of the Lagrangian multiform explicitly using our formula. Using the expansion $\displaystyle\Lag(\lambda,\mu) = \sum_{i<j=1}^\infty L_{ij}/\lambda^{i+1}\mu^{j+1}$ we have, for all $i,j\ge 0$
\begin{eq}
    L_{ij} = \frac{1}{2}\sum_{k=1}^j (f_k \partial_i e_{j+1-k} - e_k \partial_i f_{j+1-k}) - \frac{1}{2} \sum_{k=1}^i(f_k \partial_j e_{i+1-k} - e_k \partial_j f_{i+1-k}) - V_{ij}\,.
\end{eq}
where the coefficients $V_{ij}$ are given by
 \begin{eq}
V_{ij} = \sum_{k=0}^i (2 a_k a_{i+j+1-k} + b_k c_{i+j+1-k} + c_k b_{i+j+1-k} )\,.
\end{eq} 
Recall that the elements $a_j$, $b_j$ and $c_j$ can all be expressed in terms of the coordinates $e_j$ and $f_j$ (see Appendix \ref{efcoordinates_section}). At this stage, no particular choice of time has been made to write these Lagrangians as field theory Lagrangian, in the spirit of \cite{Avan_Caudrelier_Doikou_Kundu_2016} for instance. Hence, as an example, we simply have
\begin{eq}
    L_{12} = \frac{1}{2} (f_1 \partial_1 e_2 - e_1 \partial_1 f_2 + f_2 \partial_1 e_1 - e_2 \partial_1 f_1) - \frac{1}{2} (f_1 \partial_2 e_1 - e_1 \partial_2 f_1) - V_{12}\,,
\end{eq}
and
\begin{eq}
    L_{13} = \frac{1}{2} (f_1 \partial_1 e_3 - e_1 \partial_1 f_3 + f_2 \partial_1 e_2 - e_2 \partial_1 f_2 + f_3 \partial_1 e_1 - e_3 \partial_1 f_1) - \frac{1}{2} (f_1 \partial_3 e_1 - e_1 \partial_3 f_1) - V_{13}\,,
\end{eq}
which produce partial differential equations for the phase space coordinates $e_j$, $f_j$, $j=1,2,3$. 
Now to make contact with the more familiar form of these Lagrangians and the corresponding equations of motion, we express the phase space coordinates in terms of $b_1=q$, $c_1=r$ and their $x^1$ derivatives\footnote{The reader can find the relations between the $e_i$'s and $f_i$'s and $q$ and $r$ and their derivative with respect to $x^1$ in Appendix \ref{efcoordinates_section}.}. Note that this amounts to choosing the $x^1$ equation in \eqref{FNR_tNequations} and use it to {\it solve} for $Q_j$ (standard field theory point of view). Doing so yields,
\begin{eq}
\label{NLS_Lag}
 L_{12} = \frac{i}{4} (q_2r - qr_2 ) + \frac{1}{8} (r q_{11} + q r_{11}) - \frac{1}{4} q^2 r^2\,,
\end{eq}
and
\begin{eq}
 L_{13} = \frac{i}{4}(r q_3 - q r_3) + \frac{i}{16} ( q_{111} r - q r_{111}) + \frac{3i}{16} qr (qr_1 - r q_1)\,,
\end{eq}
which are known Lagrangians whose Euler-Lagrange equations are
\begin{subgather}
\label{eq_L12}
iq_2 + \frac{1}{2}q_{11} -  q^2 r = 0\,,\qquad i r_2 - \frac{1}{2}r_{11} + q r^2\,,\\
\label{eq_L13}
q_3 + \frac{1}{4}q_{111} - \frac{3}{2}qrq_1 = 0\,,\qquad r_3 + \frac{1}{4}r_{111} - \frac{3}{2}qrr_1 = 0\,.
\end{subgather}
These are the (unreduced) NLS and mKdV systems respectively. We can just as easily produce the Lagrangian $ L_{23}$, first in the $e$ and $f$ coordinates and then, if desired, in the $q$ and $r$ coordinates as before. It reads
\begin{eqsplit}
\label{eq_L23}
    L_{23} =& \frac{i}{16}(r q_{112}  - qr_{112} )+ \frac{i}{16}(q_1 r_{12} - q_{12} r_1) - \frac{i}{16}(q_{11} r_2 - q_2 r_{11})\\
    &- \frac{3i}{16}qr(r q_2 - qr_2)- \frac{1}{8}(q_{13}r + q r_{13}) + \frac{1}{8}(r_1 q_3 + q_1 r_3)\\
    & + \frac{1}{16} q_{11} r_{11} - \frac{qr}{8}(q r_{11} + q_{11} r) + \frac{1}{16}(q r_1 - q_1 r)^2 + \frac{1}{4} q^3 r^3\,,
\end{eqsplit}
and its Euler-Lagrange equations are just consequence of \eqref{eq_L12}-\eqref{eq_L13}.
\begin{remark}
The partial Lagrangian multiform thus derived here for the first three times $L_{12}\, dx^{12} + L_{23}\, dx^{23} + L_{13}\, dx^{13}$ is equivalent to the one first obtained in \cite{Sleigh_Nijhoff_Caudrelier_2019}, up to an overall coefficient $\frac{i}{2}$ and the (total) differential of $- \frac{1}{8}(q_1 r + r_1 q)\, dx^2 - \frac{i}{16} (q_{11}r - q r_{11}) \, dx^3$. 
\end{remark}

\subsection{Symplectic multiform}
Equipped with a Lagrangian multiform for the AKNS hierarchy, we now construct the associated symplectic multiform $\Omega$. As always, it is very convenient to work with generating functions so we introduce 
\begin{eq}
\Omegaone(\lambda)=\sum_{j=0}^\infty\frac{\omegaone_j}{\lambda^{j+1}}\,,~~\Omega(\lambda)=\sum_{j=0}^\infty\frac{\omega_j}{\lambda^{j+1}}\,,
\end{eq}
to represent respectively 
\begin{eq}
\Omegaone=\sum_{j=0}^\infty\omegaone_j\wedge dx^j\,,~~\Omega=\sum_{j=0}^\infty\omega_j\wedge dx^j\,.
\end{eq}
As before, by a slight abuse of language, we also call $\Omega(\lambda)$ symplectic multiform. 
\begin{prp}\label{prop_symplecticmultiform}
The symplectic multiform associated to $\Lag(\lambda,\mu)$ is given by 
\begin{eq}
\label{form_Omega}
\Omega(\lambda) = - \Tr\left( Q_0 \phi(\lambda)^{-1} \delta \phi(\lambda) \wedge \phi(\lambda)^{-1} \delta \phi(\lambda)\right)\,.
\end{eq}
\end{prp}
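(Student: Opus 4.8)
The plan is to apply the two-step recipe of Definitions~\eqref{def_omegaone} and~\eqref{def_Omega}: first exhibit a $(1,1)$-form $\Omegaone$ obeying $d\Omegaone=-\delta\Lag$ on the multiform Euler--Lagrange equations, then set $\Omega=\delta\Omegaone$. Working with the generating functions throughout, I expect the symplectic potential to be
\[
\Omegaone(\lambda)=\Tr\!\left(Q_0\,\phi(\lambda)^{-1}\delta\phi(\lambda)\right),
\]
read off as a $(1,0)$-form generating function whose $\lambda^{-(j+1)}$ coefficients are the single-time potentials $\omegaone_j$. Establishing this identification is the substance of the proof; the passage to $\Omega$ is then immediate.

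To pin down $\Omegaone$, I would compute $\delta\Lag(\lambda,\mu)=\delta K(\lambda,\mu)-\delta V(\lambda,\mu)$ from~\eqref{Lag_multi_eq1}--\eqref{Lag_multi_eq2}. The potential $V$ depends on the fields only algebraically, through $Q(\lambda)$ and $Q(\mu)$, so $\delta V$ contains no derivatives of the variational generators. The kinetic term $K$ is linear in the derivations $\partial_\lambda,\partial_\mu$; using $\delta(\phi^{-1})=-\phi^{-1}\delta\phi\,\phi^{-1}$ and the fact that $\delta$ commutes with $\partial_\lambda,\partial_\mu$, its variation produces terms in $\delta\phi$ together with terms in $\partial_\lambda\delta\phi$ and $\partial_\mu\delta\phi$. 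Integrating the latter by parts, i.e.\ shifting the horizontal derivative by means of~\eqref{ddeltagenerator}, splits $\delta K$ into a horizontally exact piece, which I claim is exactly $-d\Omegaone$ for the $\Omegaone$ above, and a bulk remainder. The bulk remainder, combined with $-\delta V$, must reproduce the multiform Euler--Lagrange equation $\partial_\lambda Q(\mu)=[Q(\lambda),Q(\mu)]/(\lambda-\mu)$ of Proposition~\ref{Lag_multi}; on these equations one has $\delta\Lag=-d\Omegaone$, which is precisely~\eqref{def_omegaone}.

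Granting this, the final step is a one-line computation. Since $Q_0=-i\sigma_3$ is constant, $\delta^2\phi=0$, and $\delta(\phi^{-1})=-\phi^{-1}\delta\phi\,\phi^{-1}$, the graded Leibniz rule gives
\[
\Omega(\lambda)=\delta\Omegaone(\lambda)=\Tr\!\left(\delta(Q_0\phi(\lambda)^{-1})\wedge\delta\phi(\lambda)\right)=-\Tr\!\left(Q_0\,\phi(\lambda)^{-1}\delta\phi(\lambda)\wedge\phi(\lambda)^{-1}\delta\phi(\lambda)\right),
\]
which is the asserted formula~\eqref{form_Omega}.

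I expect the only genuine difficulty to be the integration-by-parts bookkeeping for $\delta K$: correctly handling the two spectral parameters (the $dx^i\wedge dx^j$ antisymmetry, reflected in $\Lag(\mu,\lambda)=-\Lag(\lambda,\mu)$, against the single parameter carried by $\Omega$), fixing the overall sign through~\eqref{def_omegaone}, and carrying the square-root nonlinearity $\sqrt{2i-e(\lambda)f(\lambda)}$ hidden in $\phi$ in~\eqref{form_phi}. The concluding $\delta\Omegaone$ step is routine. As an independent check, one can expand $\Omega(\lambda)$ in powers of $\lambda^{-1}$ using $\phi=\id+\sum_j\phi_j\lambda^{-j}$ and verify that the coefficients $\omega_j$ take the canonical form in the coordinates $e_j,f_j$, consistent with $e,f$ being Darboux coordinates for each single-time bracket.
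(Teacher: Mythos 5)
Your proposal is correct and takes essentially the same approach as the paper's own proof: the same ansatz $\Omegaone(\lambda)=\Tr\left(Q_0\,\phi(\lambda)^{-1}\delta\phi(\lambda)\right)$, the same verification that $\delta K-\delta V+d\Omegaone$ collapses to terms proportional to $\partial_\mu Q(\lambda)-[Q(\lambda),Q(\mu)]/(\lambda-\mu)$ and hence vanishes on the multiform Euler--Lagrange equations, and the same one-line conclusion $\Omega(\lambda)=\delta\Omegaone(\lambda)$ via $\delta(\phi^{-1})=-\phi^{-1}\delta\phi\,\phi^{-1}$ and $\delta^2=0$. The paper simply executes in full the integration-by-parts bookkeeping that you outline.
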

The proof is in Appendix \ref{proof_symplecticmultiform}.
\begin{remark}
The expression for $\Omega(\lambda)$ is reminiscent of the well-known expression for the (pull-back to the group of the) Kostant-Kirillov symplectic form on a coadjoint orbit of the loop algebra $\cL$ through the element $Q_0$. To make this more precise, let us use for instance the formulas in \cite[Section 3.3]{Babelon_Bernard_Talon_2003} giving the expression of the pull-back to the group of the Kostant-Kirillov form for the orbit through a diagonal matrix polynomial $A(\lambda)$,
\begin{eq}
\omega={\rm Res}_{\lambda}\Tr\left(A(\lambda)g^{-1}(\lambda)\delta g(\lambda)\wedge g^{-1}(\lambda)\delta g(\lambda)\right)\,.
\end{eq}
Here, choosing $A(\lambda)=-i\lambda^k\sigma_3$, $k\ge 0$, and $g(\lambda)=\phi(\lambda)$, we get the connection between our symplectic multiform and the Kostant-Kirillov form
\begin{eq}
\omega={\rm Res}_{\lambda}\lambda^k\Omega(\lambda)=\omega_k\,.
\end{eq}
In particular, each single-time symplectic form $\omega_k$ corresponds to $\omega$ on the orbit of the element $-i\lambda^k\sigma_3$. Therefore, our symplectic multiform contains in a single object all those symplectic forms. This is the first time such an object is derived and, to our knowledge, it is the first time that a Kostant-Kirillov symplectic form is derived from a Lagrangian perspective.
\end{remark}
As a consequence of the explicit formula for $\Omega$, we get the following remarkable result that the $e,f$ coordinates provide Darboux coordinates.
\begin{crll}\label{corollary darboux coordinates}
\begin{eq}
\Omega(\lambda) = \delta f(\lambda) \wedge \delta e(\lambda)\,,
\end{eq}
and hence, $\omega_0=0$ and,
\begin{eq}
\label{form_omegak}
\omega_k = \sum_{i=1}^k \delta f_i \wedge \delta e_{k+1-i}\,,~~\forall k\ge 1\,.
\end{eq}
\end{crll}
\begin{proof}
Direct calculation by inserting \eqref{form_phi} into \eqref{form_Omega}.
\end{proof}

\section{Classical r-matrix structure}\label{Section_rmatrix}

\subsection{Hamiltonian forms and multi-time Poisson bracket}
As explained in \cite{Caudrelier_Stoppato_2020_2}, only for a specific class of horizontal forms called \emph{Hamiltonian forms}, \ie a form $F$ such that there exist a (multi)vector field $\xi_F$ (called \emph{Hamiltonian vector field}) that satisfies the relation $\ip{\xi_F}{\Omega} = \delta F$. Having the symplectic multiform $\Omega$ at our disposal, we can investigate in detail under which conditions a horizontal form is Hamiltonian and then compute the multi-time Poisson bracket for two such forms. Recall (see \cite{Caudrelier_Stoppato_2020_2}) that in our case, only $0$- and $1$-forms can be non-trivial Hamiltonian forms. We have the following two propositions, the proofs of which are given in Appendix \ref{proof_Hamiltonian1forms} and \ref{proof_Hamiltonian0forms}.
\begin{prp}\label{Hamiltonian1forms}
A $1$-form $\displaystyle F = \sum_{k=0}^\infty F_k \, dx^k$ is Hamiltonian with respect to $\Omega$ if and only if $F_0$ is constant and, for all $k\ge 1$, $F_k$ depends only on the coordinates $(e_1,\dots,e_k,f_1,\dots,f_k)$ and 
\begin{equation}
   \parder{F_k}{e_j} = \parder{F_{k+1}}{e_{j+1}}\,,\qquad \parder{F_k}{f_j} = \parder{F_{k+1}}{f_{j+1}}\,,~~j=1,\dots,k\,.
\end{equation}
Its Hamiltonian vector field is given by
\begin{equation}
    \xi_F = \sum_{k=1}^\infty \left( - \parder{F_k}{f_1} \partial_{e_k} + \parder{F_k}{e_1} \partial_{f_k} \right)\,.
\end{equation}
\end{prp}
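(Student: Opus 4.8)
The plan is to characterize when the $1$-form $F=\sum_{k=0}^\infty F_k\,dx^k$ is Hamiltonian by directly confronting the defining relation $\delta F=\ip{\xi_F}\Omega$ with the explicit Darboux expression $\Omega=\sum_k\omega_k\wedge dx^k$ where $\omega_0=0$ and $\omega_k=\sum_{i=1}^k\delta f_i\wedge\delta e_{k+1-i}$ from the Corollary above. First I would write out both sides degree by degree in the horizontal forms $dx^k$. The left-hand side is $\delta F=\sum_k\delta F_k\wedge dx^k$, where $\delta F_k=\sum_j(\partial F_k/\partial e_j\,\delta e_j+\partial F_k/\partial f_j\,\delta f_j)$ a priori runs over all coordinates on which $F_k$ may depend. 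The right-hand side requires an ansatz for the multivector field $\xi_F$; guided by the stated answer I would posit $\xi_F=\sum_{k\ge1}(\alpha_k\,\partial_{e_k}+\beta_k\,\partial_{f_k})$ with coefficients $\alpha_k,\beta_k\in\mathcal{A}$ to be determined, and contract it into $\Omega$.

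The central computation is therefore $\ip{\xi_F}\Omega=\sum_k\ip{\xi_F}(\omega_k\wedge dx^k)$. Since $\xi_F$ is purely vertical, the interior product lands on $\omega_k$ and leaves $dx^k$ as a factor, giving $\sum_k(\ip{\xi_F}\omega_k)\wedge dx^k$. Using $\ip{\partial_{e_k}}\delta e_l=\ip{\partial_{f_k}}\delta f_l=\delta_{kl}$ and the antisymmetry rules for the graded interior product, I would compute $\ip{\xi_F}\omega_k=\sum_{i=1}^k(\beta_i\,\delta e_{k+1-i}-\alpha_{k+1-i}\,\delta f_i)$, reindexing carefully so the surviving vertical one-forms are displayed in a uniform basis. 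Matching the coefficient of $dx^k$ on each side forces $\delta F_k=\ip{\xi_F}\omega_k$ as an identity in $\mathcal{A}^{(1,0)}$; comparing coefficients of each independent $\delta e_j$ and $\delta f_j$ then yields, for every $k$, the conditions $\partial F_k/\partial f_j=\beta_{?}$-type and $\partial F_k/\partial e_j=-\alpha_{?}$-type equations. Crucially, because $\omega_k$ only involves $\delta e,\delta f$ with indices running $1,\dots,k$, the matching immediately shows $F_k$ cannot depend on $e_j,f_j$ for $j>k$ (else $\delta F_k$ would contain basis elements absent from $\ip{\xi_F}\omega_k$), giving the dependence restriction. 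The $k=0$ slot has $\omega_0=0$, forcing $\delta F_0=0$, i.e. $F_0$ constant.

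Reading off the coefficients at each level gives relations like $\partial F_k/\partial e_j=-\alpha_{k+1-j}$ and $\partial F_k/\partial f_j=\beta_{k+1-j}$ (with the precise index shift dictated by the reindexing of $\omega_k$). Consistency across different values of $k$ is exactly what produces the compatibility equations: since $\alpha_{k+1-j}$ and $\beta_{k+1-j}$ depend only on the difference $k-j$, the quantity $\partial F_k/\partial e_j$ must equal $\partial F_{k+1}/\partial e_{j+1}$ (both equal $-\alpha_{k+1-j}$), and likewise for $f$. This is the stated system $\partial F_k/\partial e_j=\partial F_{k+1}/\partial e_{j+1}$, $\partial F_k/\partial f_j=\partial F_{k+1}/\partial f_{j+1}$ for $j=1,\dots,k$. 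Finally, solving for the coefficients by setting $j=1$ (the lowest index, which reaches all $\alpha,\beta$) gives $\alpha_k=-\partial F_k/\partial f_1$ and $\beta_k=\partial F_k/\partial e_1$, yielding the claimed $\xi_F=\sum_{k\ge1}(-\partial_{f_1}F_k\,\partial_{e_k}+\partial_{e_1}F_k\,\partial_{f_k})$. The converse direction is then immediate: given the compatibility relations, the exhibited $\xi_F$ reproduces $\delta F=\ip{\xi_F}\Omega$ by running the same coefficient comparison backwards.

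I expect the main obstacle to be the bookkeeping of indices in the reindexing of $\omega_k=\sum_{i=1}^k\delta f_i\wedge\delta e_{k+1-i}$ under the interior product, together with keeping the signs straight in the graded contraction of a vertical multivector field against a $(2,1)$-form. The content is genuinely just linear algebra on the Darboux basis, but because the sum $i+ (k+1-i)=k+1$ couples the two indices, one must verify that every $\alpha_m,\beta_m$ is pinned down consistently by all the $k\ge m$ that involve it, and that no overdetermination arises; the role of the kernel of $\Omega$ noted in the earlier remark is what guarantees the freedom is harmless and a single representative $\xi_F$ suffices. Since these manipulations are routine but lengthy, I would relegate the detailed sign and index verification to the appendix, as the paper indeed does.
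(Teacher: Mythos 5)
Your proposal is correct and follows essentially the same route as the paper's own proof: posit a purely vertical vector field $\xi_F=\sum_{k\ge 1}(\alpha_k\,\partial_{e_k}+\beta_k\,\partial_{f_k})$, contract it with the Darboux expression $\Omega=\sum_k\omega_k\wedge dx^k$, and match coefficients of $\delta e_j$, $\delta f_j$ level by level, which simultaneously forces $F_0$ constant (since $\omega_0=0$), the dependence restriction (since $\omega_k$ only involves indices $1,\dots,k$), the shift relations (since the coefficients depend only on $k+1-j$), and the stated formula for $\xi_F$. One small inconsistency to fix in the write-up: your prose twice pairs the coefficients the wrong way around (asserting $\parder{F_k}{e_j}=-\alpha_{k+1-j}$ and $\parder{F_k}{f_j}=\beta_{k+1-j}$), which contradicts your own correctly displayed contraction $\ip{\xi_F}{\omega_k}=\sum_{i=1}^k\bigl(\beta_i\,\delta e_{k+1-i}-\alpha_{k+1-i}\,\delta f_i\bigr)$; the correct matching is $\parder{F_k}{e_j}=\beta_{k+1-j}$ and $\parder{F_k}{f_j}=-\alpha_{k+1-j}$, which is exactly what your final answer $\alpha_k=-\parder{F_k}{f_1}$, $\beta_k=\parder{F_k}{e_1}$ uses.
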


\begin{prp}\label{Hamiltonian0forms}
Every 0-form $H(e_1, \dots, f_1, \dots)$ is Hamiltonian with respect to $\Omega$, with Hamiltonian vector field given by
\begin{equation}
\label{xi_H}
    \xi_H = \sum_{i=1}^\infty\left( - \parder{H}{f_i} \partial_{e_1} \wedge \partial_i +  \parder{H}{e_i} \partial_{f_1} \wedge \partial_i \right)\,.
\end{equation}
\end{prp}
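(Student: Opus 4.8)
The plan is to prove the statement by directly exhibiting the stated bivector field and verifying that it satisfies the defining relation $\delta H = \ip{\xi_H}{\Omega}$ of a Hamiltonian form; since such a $\xi_H$ can be written down for \emph{any} function $H$ of the phase space coordinates, this simultaneously establishes that every $0$-form is Hamiltonian. First I would record the two ingredients. On the one hand, since $H$ depends only on the coordinates $e_i,f_i$, its vertical differential is the $(1,0)$-form
\begin{equation}
\delta H = \sum_{i\ge 1}\left(\parder{H}{e_i}\,\delta e_i + \parder{H}{f_i}\,\delta f_i\right)\,.
\end{equation}
On the other hand, the symplectic multiform is $\Omega=\sum_{m\ge 0}\omega_m\wedge dx^m$ with the Darboux expression $\omega_m=\sum_{k=1}^m \delta f_k\wedge\delta e_{m+1-k}$ (and $\omega_0=0$) furnished by the Corollary. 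Because $\Omega\in\mathcal{A}^{(2,1)}$ while $\delta H\in\mathcal{A}^{(1,0)}$, the contracting object must lower both degrees by one, which is precisely why $\xi_H$ is taken to be a sum of bivectors of the type $\partial_{e_1}\wedge\partial_i$ and $\partial_{f_1}\wedge\partial_i$.

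The core of the argument is then a two-stage contraction, carried out with the graded interior-product rules of the variational bicomplex. I would first contract the horizontal leg: since each $\omega_m$ is purely vertical, $\ip{\partial_i}{\Omega}=\omega_i$ picks out a single single-time symplectic form. I would then contract the vertical leg against $\omega_i$. The key point is that the Darboux pairing makes this contraction select exactly one summand: $\partial_{e_1}$ hits the term $\delta f_i\wedge\delta e_1$ (the one with $k=i$), giving $\ip{\partial_{e_1}}{\omega_i}=-\delta f_i$, while $\partial_{f_1}$ hits the term $\delta f_1\wedge\delta e_i$ (the one with $k=1$), giving $\ip{\partial_{f_1}}{\omega_i}=\delta e_i$. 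Assembling these contractions,
\begin{equation}
\ip{\xi_H}{\Omega}=\sum_{i\ge 1}\left(-\parder{H}{f_i}\,(-\delta f_i)+\parder{H}{e_i}\,\delta e_i\right)=\delta H\,,
\end{equation}
which is exactly the required identity.

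The step I expect to demand the most care is the sign bookkeeping in the bivector contraction: one must contract the second factor first, as in the worked example $\ip{(\partial_{u_k^{(i)}}\wedge\partial_\ell)}{\cdot}=\ip{\partial_{u_k^{(i)}}}{\ip{\partial_\ell}{\cdot}}$, and then track the sign incurred when the vertical vector field is graded-commuted past the leftover $\delta$-factor. It is this sign that turns the naive $\delta f_i$ into $-\delta f_i$ and conspires with the explicit coefficient $-\parder{H}{f_i}$ in $\xi_H$ to reproduce the term $+\parder{H}{f_i}\,\delta f_i$ in $\delta H$. Finally I would observe that no solvability obstruction arises, since $\xi_H$ is built explicitly for arbitrary $H$ (the apparent infinite sum being effectively finite, as $H\in\mathcal{A}$ depends on finitely many coordinates), so every $0$-form is Hamiltonian; and that $\xi_H$ is merely one representative, the ambiguity being the kernel of $\Omega$ flagged in the earlier remark.
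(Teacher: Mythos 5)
Your proposal is correct and takes essentially the same route as the paper's own proof: a direct verification of $\ip{\xi_H}{\Omega}=\delta H$ by contracting the stated bivector against the Darboux form $\Omega=\sum_k\omega_k\wedge dx^k$, $\omega_k=\sum_{m=1}^k\delta f_m\wedge\delta e_{k+1-m}$, with the same sign bookkeeping. The only difference is organisational — you stage the contraction (horizontal leg first, giving $\ip{\partial_i}{\Omega}=\omega_i$, then the vertical leg), whereas the paper runs the triple sum with Kronecker deltas in one pass — which is purely cosmetic.
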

Note that in practice, we will deal with $0$-forms that depend only on a finite number of coordinates $e_j$, $f_j$ in which case the sum in \eqref{xi_H} truncates accordingly.\\

We can now construct the multi-time Poisson bracket with respect to $\Omega$ between two Hamiltonian forms $F$ and $G$ as
\begin{eq}
 \cpb{F}{G} = (-1)^r\ip{\xi_F}{\delta G}
\end{eq}
where $r$ is the horizontal degree of $F$. Theorem 2.18 in \cite{Caudrelier_Stoppato_2020_2} gives the decomposition of the multi-time Poisson brackets in terms of the single-time Poisson brackets $\pb{~}{~}_k$ constructed from the symplectic forms $\omega_k$: for every $F$, $G$ functions of $e_1,\dots,e_k, f_1,\dots,f_k$ we have $\pb{F}{G}_k = - \ip{\xi_F}{\delta G}$ where $\ip{\xi_F}{\omega_k} = \delta F$. Given that we know the explicit form of the single-time symplectic forms $\omega_k$, see \eqref{form_omegak}, we obtain the following specialisation as a consequence.
\begin{prp}[Decomposition of the multi-time Poisson brackets]\label{proposition decomposition}
The multi-time Poisson brackets with respect to $\Omega$ of two Hamiltonian $1$-forms $\displaystyle F=\sum_{k=0}^\infty F_k \, dx^k$ and $\displaystyle  G = \sum_{k=0}^\infty G_k \, dx^k$ satisfies the following decomposition:
\begin{equation}
    \cpb{F}{G} = \sum_{k=0}^\infty \pb{F_k}{G_k}_k \, dx^k\,, \qquad  \pb{F_k}{G_k}_k = \sum_{j=1}^k \left( \parder{F_k}{f_j}\parder{G_k}{e_{k-j+1}} - \parder{F_k}{e_j} \parder{G_k}{f_{k-j+1}}\right),\quad k \ge 1 
\end{equation}
and $\pb{F_0}{G_0}_0=0$.
\end{prp}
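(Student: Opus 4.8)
The plan is to read this off as a specialisation of the general decomposition, Proposition~\ref{decomp}, so that the only real work is to compute the single-time Poisson bracket $\pb{~}{~}_k$ attached to the explicit symplectic form $\omega_k$ of \eqref{form_omegak}. First I would apply Proposition~\ref{decomp} to write $\cpb{F}{G}=\sum_{k=0}^\infty\pb{F_k}{G_k}_k\,dx^k$. The $k=0$ summand vanishes: by Proposition~\ref{Hamiltonian1forms} the component $F_0$ (and likewise $G_0$) is constant, in agreement with the fact that $\omega_0=0$; hence the sum effectively starts at $k=1$.

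The core step is to invert $\omega_k=\sum_{i=1}^k\delta f_i\wedge\delta e_{k+1-i}$. I would first note that this is non-degenerate on the $2k$ coordinates $(e_1,\dots,e_k,f_1,\dots,f_k)$ on which $F_k$ and $G_k$ depend, because it pairs each $f_i$ with $e_{k+1-i}$ and, as $i$ runs from $1$ to $k$, the index $k+1-i$ sweeps all of $1,\dots,k$; thus $\{(f_i,e_{k+1-i})\}_{i=1}^k$ is a complete set of Darboux pairs. Reading off the induced bracket in these coordinates gives $\pb{A}{B}_k=\sum_{i=1}^k\bigl(\parder{A}{f_i}\parder{B}{e_{k+1-i}}-\parder{A}{e_{k+1-i}}\parder{B}{f_i}\bigr)$, and reindexing the second sum by $j=k+1-i$ turns this into $\sum_{j=1}^k\bigl(\parder{A}{f_j}\parder{B}{e_{k-j+1}}-\parder{A}{e_j}\parder{B}{f_{k-j+1}}\bigr)$. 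Taking $A=F_k$, $B=G_k$ is exactly the claimed formula.

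To pin down the overall sign without relying on the precise convention chosen for $\pb{~}{~}_k$, I would run the computation once more straight from the definition, $\cpb{F}{G}=(-1)^1\,\ip{\xi_F}{\delta G}=-\ip{\xi_F}{\delta G}$. Since the Hamiltonian vector field $\xi_F=\sum_{\ell\ge1}\bigl(-\parder{F_\ell}{f_1}\partial_{e_\ell}+\parder{F_\ell}{e_1}\partial_{f_\ell}\bigr)$ of Proposition~\ref{Hamiltonian1forms} is purely vertical, the contraction with $\delta G=\sum_k\delta G_k\wedge dx^k$ collapses coefficientwise to $\cpb{F}{G}=\sum_k\bigl[\sum_{\ell=1}^k\bigl(\parder{F_\ell}{f_1}\parder{G_k}{e_\ell}-\parder{F_\ell}{e_1}\parder{G_k}{f_\ell}\bigr)\bigr]dx^k$, the $\ell$-sum truncating at $k$ because $G_k$ depends only on $e_1,\dots,e_k,f_1,\dots,f_k$. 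Iterating the compatibility relations of Proposition~\ref{Hamiltonian1forms} gives $\parder{F_\ell}{e_1}=\parder{F_k}{e_{k+1-\ell}}$ and $\parder{F_\ell}{f_1}=\parder{F_k}{f_{k+1-\ell}}$, and the substitution $j=k+1-\ell$ lands on the stated sum, with the correct sign.

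The main obstacle is purely bookkeeping: $\omega_k$ uses the \emph{reversed} pairing $f_i\leftrightarrow e_{k+1-i}$ rather than the naive $f_i\leftrightarrow e_i$, so the inversion is not diagonal in the obvious indexing and one must carry the reversal $j\mapsto k+1-j$ (together with the accompanying minus signs) carefully through to match the stated summand. The relations $\parder{F_k}{e_j}=\parder{F_{k+1}}{e_{j+1}}$ are exactly what convert the $\partial_{e_1},\partial_{f_1}$ derivatives appearing in $\xi_F$ into level-$k$ derivatives, reconciling the direct route with the one through Proposition~\ref{decomp}.
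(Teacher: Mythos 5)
Your proposal is correct and matches the paper's (largely implicit) argument: the paper states this proposition as an immediate specialisation of Proposition \ref{decomp} once the Darboux form $\omega_k=\sum_{i=1}^k\delta f_i\wedge\delta e_{k+1-i}$ from \eqref{form_omegak} is known, which is precisely your first route, including the vanishing of the $k=0$ term and the reindexing $j=k+1-i$. Your additional direct verification from $\cpb{F}{G}=-\ip{\xi_F}{\delta G}$ using the Hamiltonian vector field and the compatibility relations of Proposition \ref{Hamiltonian1forms} is a sound consistency check that goes beyond what the paper records, but it does not change the underlying approach.
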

Thanks to the the propositions above, we can prove by direct but long calculations that the multi-time Poisson bracket $\cpb{~}{~}$ satisfies the Jacobi identity.
\begin{prp}[Jacobi identity]\label{prp_Jacobiidentity}
If $F,G,K \in \Alg^{(0,1)}$ and $H \in \Alg$ are Hamiltonian forms, we have that
\begin{enumerate}
    \item $\cpb{F}{G}$ and $\cpb{F}{H}$ are respectively a Hamiltonian 1-form and a Hamiltonian 0-form,
    \item $\cpb{\cpb{F}{G}}{K} + \cpb{\cpb{K}{F}}{G} + \cpb{\cpb{G}{K}}{F} =0$,
    \item $\cpb{\cpb{F}{G}}{H} + \cpb{\cpb{H}{F}}{G} + \cpb{\cpb{G}{H}}{F} =0$.
\end{enumerate}
\end{prp}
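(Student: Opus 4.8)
The plan is to prove the three parts in sequence, reducing everything to explicit computations in the Darboux coordinates $e_j$, $f_j$ using the formulas for Hamiltonian forms and their Hamiltonian vector fields from Propositions \ref{Hamiltonian1forms} and \ref{Hamiltonian0forms}, together with the decomposition of the multi-time bracket into single-time brackets. First I would dispatch part $1$, the closure property, since it is a prerequisite for parts $2$ and $3$ to even make sense. For $\cpb{F}{H}$ with $F$ a Hamiltonian $1$-form and $H$ a Hamiltonian $0$-form, the result is a $0$-form, and by Proposition \ref{Hamiltonian0forms} every $0$-form is automatically Hamiltonian, so nothing is required here. For $\cpb{F}{G}$ with $F,G$ both Hamiltonian $1$-forms, the result is the $1$-form $\sum_k \pb{F_k}{G_k}_k\,dx^k$; I would verify that its coefficients satisfy the compatibility conditions of Proposition \ref{Hamiltonian1forms}, namely that $\partial_{e_j}\pb{F_k}{G_k}_k = \partial_{e_{j+1}}\pb{F_{k+1}}{G_{k+1}}_{k+1}$ and similarly for $f$. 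This should follow by differentiating the single-time bracket formula and invoking the compatibility relations satisfied individually by $F$ and $G$, carefully tracking the index shifts.

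Next I would turn to part $3$, the mixed Jacobi identity, as it is conceptually the bridge between the two worlds and the most instructive. The key observation is the decomposition result (Proposition \ref{decomp}): every multi-time bracket in the identity decomposes over the single times $x^k$, and within each $x^k$ the relevant bracket $\pb{\,}{\,}_k$ is the ordinary Poisson bracket attached to the symplectic form $\omega_k = \sum_{i=1}^k \delta f_i \wedge \delta e_{k+1-i}$. My strategy is therefore to reduce the multi-time Jacobi identity, component by component in $dx^k$, to the \emph{ordinary} Jacobi identity for each single-time bracket $\pb{\,}{\,}_k$. The latter holds because each $\omega_k$ is a constant-coefficient (hence closed) symplectic form in Darboux-type coordinates, so the associated Poisson bracket automatically satisfies Jacobi. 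The subtlety is that the coefficients entering the three cyclic terms are themselves derived brackets, so I must confirm that the component-wise reduction is clean — in particular that the coefficient functions appearing in each $dx^k$ component are exactly the single-time brackets of the corresponding components, with no cross-time contamination. Part $2$, the pure $1$-form Jacobi identity, I would handle identically: decompose all three cyclic terms into their $dx^k$ components and reduce to the single-time Jacobi identity for $\pb{\,}{\,}_k$.

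The main obstacle I expect is bookkeeping rather than conceptual: when computing iterated brackets such as $\cpb{\cpb{F}{G}}{K}$, one must apply the decomposition twice and ensure that the single-time bracket $\pb{\,}{\,}_k$ used in the outer bracket matches the one used in the inner bracket, i.e. that the entire cyclic sum genuinely lives within a single symplectic leaf labelled by $k$. Because each $\omega_k$ involves a \emph{sum} $\sum_{i=1}^k \delta f_i \wedge \delta e_{k+1-i}$ rather than a diagonal pairing $\delta f_i \wedge \delta e_i$, the single-time bracket $\pb{\,}{\,}_k$ is a convolution-type bracket, and one should check that this convolution structure still defines a genuine Poisson bracket satisfying Jacobi. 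I would verify this by the standard argument that any bracket of the form $\{F,G\} = \sum_{a,b} \Pi^{ab}\,\partial_a F\,\partial_b G$ with \emph{constant} antisymmetric $\Pi$ automatically satisfies Jacobi, which applies here since the coefficients in $\omega_k$ are constants in the $e,f$ coordinates. Once this single-time fact is in hand, both parts $2$ and $3$ follow term by term, and the only remaining work is the routine but lengthy verification in part $1$ that derived brackets remain Hamiltonian; since the paper explicitly flags these as ``direct but long calculations,'' I would state the reduction clearly and relegate the index-chasing to a computation.
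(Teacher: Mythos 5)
Your plan for parts 1 and 2 is sound, and it is essentially the route the paper itself has in mind (the paper never writes this proof out; it only states that it follows by ``direct but long calculations'' from Propositions \ref{Hamiltonian1forms}, \ref{Hamiltonian0forms} and \ref{decomp}): $\cpb{F}{H}$ is Hamiltonian for free by Proposition \ref{Hamiltonian0forms}; the shift and dependence conditions of Proposition \ref{Hamiltonian1forms} do pass to the coefficients $\pb{F_k}{G_k}_k$ (the dependence condition is what kills the boundary terms when you compare the sum of length $k$ with that of length $k+1$); and part 2 then reduces, by applying Proposition \ref{decomp} twice, to the Jacobi identity of each $\pb{~}{~}_k$, which holds because its bivector $\sum_{j=1}^k \partial_{f_j}\wedge\partial_{e_{k+1-j}}$ has constant coefficients. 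The genuine gap is in part 3. Proposition \ref{decomp} applies \emph{only} to a pair of Hamiltonian $1$-forms; there is no ``$dx^k$-component'' decomposition of the mixed brackets. Indeed $\cpb{F}{H}$, $\cpb{H}{F}$ and $\cpb{G}{H}$ are $0$-forms,
\begin{equation*}
\cpb{F}{H} \;=\; -\ip{\xi_F}{\delta H} \;=\; \sum_{m=1}^\infty \left(\parder{F_m}{f_1}\parder{H}{e_m} - \parder{F_m}{e_1}\parder{H}{f_m}\right)\,,
\end{equation*}
i.e.\ single scalars obtained by summing over \emph{all} times. Consequently all three terms of the cyclic sum in part 3 are $0$-forms with no $dx^k$ components to compare, and expanding, say, $\cpb{\cpb{H}{F}}{G}$ produces double sums containing genuinely mixed-time terms such as $\parder{F_m}{f_1}\,\parder{^2 H}{f_i \partial e_m}\,\parder{G_i}{e_1}$ with $i\neq m$. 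The ``cross-time contamination'' you hoped to rule out is exactly what happens, so the identity cannot be verified leaf by leaf at fixed $k$ as you propose.

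The gap can be closed, but it needs an ingredient you have not stated: a stabilisation lemma replacing Proposition \ref{decomp} for mixed brackets. If $H$ depends only on $e_1,\dots,e_N,f_1,\dots,f_N$, the shift relations of Proposition \ref{Hamiltonian1forms} give $\parder{F_k}{f_j}=\parder{F_{k+1-j}}{f_1}$ and $\parder{F_k}{e_j}=\parder{F_{k+1-j}}{e_1}$, so that after relabelling $m=k+1-j$,
\begin{equation*}
\pb{F_k}{H}_k \;=\; \sum_{m=1}^{k}\left(\parder{F_m}{f_1}\parder{H}{e_m}-\parder{F_m}{e_1}\parder{H}{f_m}\right) \;=\; \cpb{F}{H}\qquad\text{for every }k\ge N\,,
\end{equation*}
since the terms with $m>N$ vanish. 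Moreover $\cpb{H}{F}=-\cpb{F}{H}$ again depends only on coordinates of index at most $N$, so choosing one common $k\ge N$ all three terms of part 3 become iterated single-time brackets at that one time, and the identity follows from the Jacobi identity of $\pb{~}{~}_k$, exactly as in your part 2. Without this step (or the brute-force expansion the paper alludes to), your argument for part 3 is incomplete.
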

\begin{remark}
It is known (see e.g. \cite{Forger_Salles_2015}) that the Jacobi identity is not necessarily satisfied by a covariant Poisson bracket. This problem could therefore be present in general for a multi-time Poisson bracket (which can be viewed as a generalisation of a covariant Poisson bracket). This is why the Jacobi identity was not discussed in \cite{Caudrelier_Stoppato_2020_2} and why we checked it here directly.
\end{remark}

\subsection{Classical r-matrix structure of the multi-time Poisson bracket}
\begin{dfn}
We call Lax form the following horizontal $1$-form with matrix coefficient
\begin{eq}
W (\lambda) =\sum_{i=0}^\infty Q^{(i)}(\lambda) \, dx^i
\end{eq}
where, for $i\ge 0$, 
\begin{eq}
Q^{(i)}(\lambda)\coloneqq P_+(\lambda^iQ(\lambda))\,.
\end{eq}
\end{dfn}
Note that the definition of a Hamiltonian form extends naturally to the case of matrix coefficients by requiring that each entry be a Hamiltonian form. We also extend the tensor notation used in the Sklyanin bracket, as reviewed in Section \ref{AKNS section}, to the present situation as follows
\begin{gather}
W_1(\lambda)\equiv \sum_{i=0}^\infty Q^{(i)}(\lambda)\otimes\id \, dx^i=\sum_{i=0}^\infty\,\sum_{k=+,-,3} Q^{(i)}_k(\lambda)\,\sigma_k \otimes\id \, dx^i \equiv \sum_{k=+,-,3} W^k(\lambda) \sigma_k \otimes \id \\
W_2(\lambda)\equiv \sum_{i=0}^\infty \id\otimes Q^{(i)}(\lambda) \, dx^i =  \sum_{i=0}^\infty\,\sum_{k=+,-,3}Q^{(i)}_k(\lambda)\, \id\otimes \sigma_k  \, dx^i \equiv \sum_{k=+,-,3}W^k(\lambda) \id \otimes \sigma_k\,,
\end{gather}
where we have written the matrices in terms of the $sl(2,\mathbb{C})$ basis $\sigma_+$, $\sigma_-$, $\sigma_3$. We define the multi-time Poisson bracket between $W_1(\lambda)$ and $W_2(\mu)$ by
\begin{eq}
 \cpb{W_1(\lambda)}{W_2(\mu)} = \sum_{k,\ell=+,-,3} \cpb{W^k(\lambda)}{W^\ell (\mu)} \,\sigma_k \otimes \sigma_\ell\,.
\end{eq}
Finally, we define the commutator of a matrix $0$-form $M$ and a matrix $1$-form $W$ by
\begin{eq}
[M,W]\equiv \sum_{i=0}^\infty[M,W_i]\,dx^i\,.
\end{eq}
We are now ready to formulate the main result of this section, the proof of which is long but straightforward and is given in Appendix \ref{proof_rmatrix}.
\begin{thrm}\label{rmatrix}
The Lax form $W (\lambda)$ is Hamiltonian, with Hamiltonian vector field
\begin{eq}
\label{Lax_vector_field}
\xi_W(\lambda) = \sum_{k=1}^\infty \left( - \parder{Q^{(k)}(\lambda)}{f_1} \partial_{e_k} + \parder{Q^{(k)}(\lambda)}{e_1} \partial_{f_k}\right)\,.
\end{eq}
Its multi-time Poisson bracket possesses the linear Sklyanin bracket structure \ie
\begin{eq}
\label{PBLax}
    \cpb{W_1(\lambda)}{W_2(\mu)} = [r_{12}(\lambda-\mu),{W_1(\lambda)}+{W_2(\mu)}]\,,
\end{eq}
where $r_{12}(\lambda,\mu)$ is the so-called rational classical $r$-matrix given by
\begin{eq}
r_{12}(\lambda) = -\frac{P_{12}}{\lambda}\,.
\end{eq}
\end{thrm}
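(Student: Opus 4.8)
The plan is to split the statement into its two assertions—that $W(\lambda)$ is Hamiltonian with the stated vector field, and that its bracket has the Sklyanin form—and to treat the second by the decomposition into single-time brackets combined with generating-function resummations. For the Hamiltonian nature of $W(\lambda)$ I would apply Proposition \ref{Hamiltonian1forms} directly. The coefficient $Q^{(0)}(\lambda)=P_+(Q(\lambda))=Q_0$ is constant, so the condition on $F_0$ holds. The domain-of-dependence condition and the shift relations $\parder{Q^{(k)}(\lambda)}{e_j}=\parder{Q^{(k+1)}(\lambda)}{e_{j+1}}$ (and likewise for $f$) I would read off from the generating identity $\sum_{k\ge0}Q^{(k)}(\lambda)\,\nu^{-(k+1)}=Q(\nu)/(\nu-\lambda)$: differentiating it in $e_j$ and using $\parder{e(\nu)}{e_j}=\nu^{-j}$ gives $\sum_k \parder{Q^{(k)}(\lambda)}{e_j}\,\nu^{-(k+1)}=\nu^{-j}\,\partial_e Q(\nu)/(\nu-\lambda)$, where $\partial_e Q(\nu)$ is the pointwise scalar derivative. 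Comparing coefficients of $\nu^{-(k+1)}$ yields both the vanishing for $j>k$ and the shift relation (the factor $\nu^{-j}$ trades $j\mapsto j+1$ against $k\mapsto k+1$). Proposition \ref{Hamiltonian1forms} then delivers $\xi_W$ as in \eqref{Lax_vector_field}.

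For the bracket I would invoke the decomposition of Proposition \ref{decomp} together with the explicit single-time forms $\omega_k=\sum_i\delta f_i\wedge\delta e_{k+1-i}$, which reduces the claim to the per-time identity $\pb{Q^{(k)}_1(\lambda)}{Q^{(k)}_2(\mu)}_k=[r_{12}(\lambda-\mu),Q^{(k)}_1(\lambda)+Q^{(k)}_2(\mu)]$ for each $k\ge1$; the $k=0$ contribution is absent on the left since $\omega_0=0$ and vanishes on the right because $P_{12}$ commutes with $Q_0\otimes\id+\id\otimes Q_0$. The key computational device is $\parder{Q^{(k)}(\lambda)}{e_a}=P_+(\lambda^{k-a}\,\partial_e Q(\lambda))$, and similarly for $f$. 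Substituting these into the single-time bracket, the index sum dictated by $\omega_k$ turns into a Cauchy product, and the elementary resummation $\sum_{m\ge0}P_+(\lambda^m\,\partial_e Q(\lambda))\,\nu^{-(m+1)}=\partial_e Q(\nu)/(\nu-\lambda)$—valid because $\partial_e Q(\nu)$ is a formal power series in $\nu^{-1}$—collapses everything to the statement that $\pb{Q^{(k)}_1(\lambda)}{Q^{(k)}_2(\mu)}_k$ is the coefficient of $\nu^{-(k+1)}$ in $N(\nu)/\big((\nu-\lambda)(\nu-\mu)\big)$, where $N(\nu)=\partial_f Q(\nu)\otimes\partial_e Q(\nu)-\partial_e Q(\nu)\otimes\partial_f Q(\nu)$.

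The crux is then the finite algebraic identity $N(\nu)=-[P_{12},Q(\nu)\otimes\id]$ in $sl(2)\otimes sl(2)$. I would establish it by inserting the explicit $\phi(\nu)$ from \eqref{form_phi}, so that $Q$, $\partial_e Q$, $\partial_f Q$ become explicit matrices in the scalars $e=e(\nu)$, $f=f(\nu)$, and then checking the few nonzero tensor components, using $a^2(\nu)+b(\nu)c(\nu)=-1$ to simplify. Granting this identity, the partial-fraction split $\frac{1}{(\nu-\lambda)(\nu-\mu)}=\frac{1}{\lambda-\mu}\big(\frac{1}{\nu-\lambda}-\frac{1}{\nu-\mu}\big)$, together with $[P_{12},Q(\nu)\otimes\id]=-[P_{12},\id\otimes Q(\nu)]$ and the generating identity $[\nu^{-(k+1)}]\,Q(\nu)/(\nu-\lambda)=Q^{(k)}(\lambda)$, reassembles the coefficient into $-\frac{1}{\lambda-\mu}[P_{12},Q^{(k)}_1(\lambda)+Q^{(k)}_2(\mu)]=[r_{12}(\lambda-\mu),Q^{(k)}_1(\lambda)+Q^{(k)}_2(\mu)]$. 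Summing over $k$ against $dx^k$ produces \eqref{PBLax}.

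I expect the two main obstacles to be bookkeeping rather than conceptual. The first is the $P_+$-resummation: one must check that the projection produces exactly a simple pole at $\nu=\lambda$ and that the Cauchy product matches the convolution $a+b=k+1$ forced by $\omega_k$. The second is the tensor identity $N=-[P_{12},Q\otimes\id]$, a short but error-prone $sl(2)\otimes sl(2)$ verification. Once these are in hand, everything downstream is forced by the partial-fraction decomposition, which is precisely the mechanism that manufactures the rational $r$-matrix $r_{12}(\lambda)=-P_{12}/\lambda$ out of the two simple poles at $\nu=\lambda$ and $\nu=\mu$.
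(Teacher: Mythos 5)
Your proof is correct, and its two halves line up with the paper's to different degrees. The first half (the Hamiltonian property of $W(\lambda)$ via Proposition \ref{Hamiltonian1forms}, the chain rule $\parder{Q(\nu)}{e_j}=\nu^{-j}\,\partial_e Q(\nu)$, and the generating identity for the $Q^{(k)}$) is essentially the paper's own argument, only organised through the generating series rather than coefficient by coefficient. For the $r$-matrix half, both you and the paper reduce the claim to the per-time identity \eqref{rmatrix_singletime} via Proposition \ref{decomp} and the explicit $\omega_k$, and both ultimately rest on the Appendix \ref{efcoordinates_section} derivative formulas together with the constraint $a^2(\lambda)+b(\lambda)c(\lambda)=-1$; but the execution genuinely differs. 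The paper first proves Lemma \ref{lemmaPB}, computing the entry-wise single-time brackets $\pb{a_i}{b_j}_k=b_{i+j-k-1}$, $\pb{a_i}{c_j}_k=-c_{i+j-k-1}$, $\pb{b_i}{c_j}_k=2a_{i+j-k-1}$ through a finite convolution identity, and then verifies the Sklyanin relation for each fixed $k$ tensor component by tensor component, matching Laurent polynomials in $\lambda,\mu$ by explicit double sums. You instead resum over $k$ against a third formal variable $\nu$, which converts the convolution forced by $\omega_k$ into the double pole kernel $1/\big((\nu-\lambda)(\nu-\mu)\big)$, and reduce everything to the single finite identity $N(\nu)=\partial_f Q(\nu)\otimes\partial_e Q(\nu)-\partial_e Q(\nu)\otimes\partial_f Q(\nu)=-[P_{12},Q(\nu)\otimes\id]$ followed by partial fractions. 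Both of the obstacles you flag do go through: the $P_+$-resummation $\sum_{m\ge 0}P_+(\lambda^m X(\lambda))\,\nu^{-(m+1)}=X(\nu)/(\nu-\lambda)$ is valid precisely because $\partial_e Q(\lambda)$ and $\partial_f Q(\lambda)$ contain only non-positive powers of $\lambda$, and the tensor identity checks out using $b(\nu)c(\nu)=(i-a(\nu))(i+a(\nu))$ — indeed its components are exactly the three formulas of Lemma \ref{lemmaPB}, so your key identity and the paper's key lemma are the same fact in different clothing. What your route buys is economy and transparency: all times $k$ and all tensor components are handled at once, and the rational $r$-matrix $r_{12}(\lambda)=-P_{12}/\lambda$ visibly emerges from the partial-fraction split of the two simple poles at $\nu=\lambda$ and $\nu=\mu$. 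What the paper's route buys is a set of entry-wise bracket formulas of independent interest (the paper remarks they coincide with the single-time brackets appearing in earlier work on dual Lax pairs) and a verification that never leaves the level of finite sums.
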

\begin{remark}
We have already shown directly that our multi-time Poisson bracket $\cpb{~}{~}$ satisfies the Jacobi identity for $0$- and $1$-forms. In the case of $1$-forms, this is also a corollary of Theorem \ref{rmatrix} since $W(\lambda)$ contains all the coordinates of our phase space and it is known that the rational $r$-matrix satisfies the classical Yang-Baxter equation which implies the Jacobi identity.
\end{remark}

\section{Hamiltonian multiform description of the AKNS hierarchy}\label{Section_Hammultiform}

\subsection{Multiform Hamilton equations for the AKNS hierarchy}

According to formula \cite[Definition 2.3]{Caudrelier_Stoppato_2020_2}, the coefficients of the Hamiltonian multiform $\displaystyle \Ham = \sum_{i<j=1}^\infty H_{ij} \, dx^{ij}$ associated to $\Lag$ and $\Omegaone$ are given by
\begin{eq}
H_{ij} = \ip{\tilpartial_i}\omegaone_j - \ip{\tilpartial_j}\omegaone_i - L_{ij}\,.
\end{eq}
As is now customary, we rewrite this in generating form as
\begin{eq}
\Ham(\lambda,\mu) = \ip{{\tilde D}_\lambda} \Omegaone(\mu) - \ip{{\tilde D}_\mu} \Omegaone(\lambda) - \Lag(\lambda,\mu)\,,
\end{eq}
where we introduce the notation $\displaystyle{\tilde D}_\lambda = \sum_{i=0}^\infty \tilpartial_i / \lambda^{i+1}$ in line with \eqref{gen_der}. 
\begin{lmm}\label{Legendre_lmm} 
The following holds
\begin{eq}
\Ham(\lambda,\mu) = V(\lambda,\mu) = - \frac{1}{2} \Tr{ \frac{(Q(\lambda)-Q(\mu))^2}{\lambda-\mu} }\,,
\end{eq}
or, in components,
\begin{eq}
\label{Hpq}
H_{ij} = \Tr \sum_{k=0}^i Q_k Q_{i+j-k+1}\,.
\end{eq}
Hence, $\Ham(\lambda,\mu)$ satisfies the closure relation.
\end{lmm}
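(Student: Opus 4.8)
The plan is to evaluate the Legendre-type term $\ip{\tilpartial_\lambda}\Omegaone(\mu)-\ip{\tilpartial_\mu}\Omegaone(\lambda)$ explicitly and show that it reproduces exactly the kinetic term $K(\lambda,\mu)$ of the Lagrangian multiform. Since $\Lag(\lambda,\mu)=K(\lambda,\mu)-V(\lambda,\mu)$ by \eqref{Lag_multi_eq1}, the definition of $\Ham(\lambda,\mu)$ then gives $\Ham=K-(K-V)=V$, which is the first claim. Thus the whole statement collapses to the single identity
\begin{equation*}
\ip{\tilpartial_\lambda}\Omegaone(\mu)-\ip{\tilpartial_\mu}\Omegaone(\lambda)=K(\lambda,\mu)\,.
\end{equation*}

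The key computation proceeds as follows. I would use the primitive $\Omegaone(\lambda)=\Tr\!\left(Q_0\,\phi(\lambda)^{-1}\delta\phi(\lambda)\right)$ delivered by the construction of the symplectic multiform: it is the natural representative satisfying $\delta\Omegaone(\lambda)=\Omega(\lambda)$, as one checks from $\delta(\phi^{-1}\delta\phi)=-\phi^{-1}\delta\phi\wedge\phi^{-1}\delta\phi$, which recovers \eqref{form_Omega}. In generating form this $\Omegaone(\lambda)$ is a pure vertical $1$-form whose $dx$-content is carried by the powers of $\lambda^{-1}$. The contraction is then immediate: because $\phi$ carries no explicit dependence on the $x^i$, one has $\tilpartial_i\phi=\partial_i\phi$ on the generators, hence $\ip{\tilpartial_\lambda}{\delta\phi(\mu)}=\partial_\lambda\phi(\mu)$ with $\partial_\lambda=\sum_i\partial_i/\lambda^{i+1}$ as in \eqref{gen_der}. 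Since $Q_0$ is constant this yields
\begin{equation*}
\ip{\tilpartial_\lambda}\Omegaone(\mu)=\Tr\!\left(Q_0\,\phi(\mu)^{-1}\partial_\lambda\phi(\mu)\right)\,,
\end{equation*}
and the symmetric expression for $\ip{\tilpartial_\mu}\Omegaone(\lambda)$. Moving $Q_0$ to the right by cyclicity of the trace, the difference of these two terms is precisely $K(\lambda,\mu)$ as written in \eqref{Lag_multi_eq2}, establishing $\Ham=V$.

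For the component form, the identification $\Ham(\lambda,\mu)=V(\lambda,\mu)$ gives $H_{ij}=V_{ij}$ term by term, and the explicit formula for $V_{ij}$ recorded earlier, together with $\Tr(Q_mQ_n)=2a_ma_n+b_mc_n+c_mb_n$, is exactly $\Tr\sum_{k=0}^iQ_kQ_{i+j-k+1}$, so \eqref{Hpq} is automatic. The closure relation then follows at once: $\Ham(\lambda,\mu)$ is the Hamiltonian multiform built by the Legendre transform, so Proposition \ref{properties_Ham} gives $d\Ham=-2d\Lag$, and $d\Lag=0$ on the multiform Euler--Lagrange equations by Proposition \ref{Lag_multi}. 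Alternatively, the closure can be checked directly from $\Ham=V$ by computing the cyclic sum $\partial_\nu V(\lambda,\mu)+\partial_\lambda V(\mu,\nu)+\partial_\mu V(\nu,\lambda)$, using the flows \eqref{AKNS_generating} and the Jacobi identity after a partial-fraction rearrangement of the denominators.

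The step I expect to be the main obstacle is pinning down the correct primitive $\Omegaone(\lambda)$ and justifying the contraction rule $\ip{\tilpartial_\lambda}{\delta\phi(\mu)}=\partial_\lambda\phi(\mu)$ at the level of generating functions. One must make sure the representative of $\Omegaone$ used is exactly the one produced by the integration-by-parts construction of the symplectic multiform, so that no spurious $\delta$-exact term survives the contraction, and one must track the factors of $\tfrac12$ and the $i\leftrightarrow j$ antisymmetrisation hidden in the kinetic coefficients $L_{ij}$. Once the contraction is carried out, the remainder is routine cyclic-trace manipulation and the identity $\Ham=V$ drops out.
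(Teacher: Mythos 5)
Your proposal is correct and follows essentially the same route as the paper: a direct computation showing $\ip{\tilpartial_\lambda}\Omegaone(\mu)-\ip{\tilpartial_\mu}\Omegaone(\lambda)=K(\lambda,\mu)$ using the primitive $\Omegaone(\lambda)=\Tr\left(Q_0\,\phi(\lambda)^{-1}\delta\phi(\lambda)\right)$, hence $\Ham=V$, followed by series expansion for the components and the closure relation obtained either from Proposition \ref{properties_Ham} or from the fact, established in the proof of Proposition \ref{Lag_multi}, that $V$ is closed on the equations of motion separately from $K$. The extra care you take over the contraction rule $\ip{\tilpartial_\lambda}{\delta\phi(\mu)}=\partial_\lambda\phi(\mu)$ and the choice of representative for $\Omegaone$ is exactly the "direct calculation" the paper leaves implicit.
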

\begin{proof}
A direct calculation shows that $\ip{{\tilde D}_\lambda} \Omegaone(\mu) - \ip{{\tilde D}_\mu} \Omegaone(\lambda) = K(\lambda,\mu)$ hence $\Ham(\lambda,\mu) = V(\lambda,\mu)$. Performing a series expansion of the explicit formula for $V(\lambda,\mu)$, one obtains \eqref{Hpq}. Finally, the closure relation of $\Ham$ is a general result \cite[Corollary 2.6]{Caudrelier_Stoppato_2020_2} but here, we get a direct confirmation from the structure of the proof of Theorem \ref{Lag_multi} which established that $V$ is closed on the equations of motion, separately from $K$.
\end{proof}
For completeness, we now check the validity of the general result of Equation \eqref{Ham_eqs} in our case. 
\begin{prp}
The multiform Hamilton equations associated to $\Ham$ and $\Omega$
are equivalent to 
\begin{eq}
D_\lambda Q(\mu)=\frac{[Q(\lambda),Q(\mu)]}{\lambda-\mu}\,.
\end{eq}
\end{prp}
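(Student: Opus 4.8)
The plan is to specialise the general multiform Hamilton equation of Proposition \ref{multi_Ham_eqs}, namely $\delta\Ham=\sum_i dx^i\wedge\ip{\tilpartial_i}{\Omega}$, to the AKNS case and recast it in generating form. First I would translate the abstract identity into a relation between the generating functions $\Ham(\lambda,\mu)$, $\Omega(\lambda)$ and $\tilpartial_\lambda=\sum_i\tilpartial_i/\lambda^{i+1}$. Writing $\Omega=\sum_j\omega_j\wedge dx^j$, contracting with the purely vertical field $\tilpartial_i$, reordering the horizontal $dx^i$ past the resulting vertical one-form (which costs a sign), and reading off the coefficient of $dx^{ij}$, one is led to
\begin{eq}
\delta\Ham(\lambda,\mu)=\ip{\tilpartial_\mu}{\Omega(\lambda)}-\ip{\tilpartial_\lambda}{\Omega(\mu)}\,.
\end{eq}
Into this I would substitute the two explicit expressions already at our disposal: $\Ham(\lambda,\mu)=V(\lambda,\mu)$ from Lemma \ref{Legendre_lmm}, and $\Omega(\lambda)=\delta f(\lambda)\wedge\delta e(\lambda)=-\Tr(Q_0\,\phi(\lambda)^{-1}\delta\phi(\lambda)\wedge\phi(\lambda)^{-1}\delta\phi(\lambda))$ from \eqref{form_Omega} and the corollary above.

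For the left-hand side I would compute $\delta V$ directly. Since the constraint $\Tr Q(\lambda)^2=-2$ is constant, the squared terms in $V$ drop out and one finds
\begin{eq}
\delta V(\lambda,\mu)=\frac{1}{\lambda-\mu}\left(\Tr(Q(\mu)\,\delta Q(\lambda))+\Tr(Q(\lambda)\,\delta Q(\mu))\right)\,.
\end{eq}
Using $Q(\lambda)=\phi(\lambda)Q_0\phi(\lambda)^{-1}$, so that $\delta Q(\lambda)=[\delta\phi(\lambda)\phi(\lambda)^{-1},Q(\lambda)]$, together with cyclicity of the trace, each term rewrites as $\Tr\!\left(\delta\phi(\lambda)\phi(\lambda)^{-1}[Q(\lambda),Q(\mu)]\right)/(\lambda-\mu)$ and its $\lambda\leftrightarrow\mu$ partner.

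For the right-hand side I would introduce the Maurer--Cartan form $\Theta(\lambda)=\phi(\lambda)^{-1}\delta\phi(\lambda)$, so that $\Omega(\lambda)=-\Tr(Q_0\,\Theta\wedge\Theta)$ and $\ip{\tilpartial_\mu}{\Theta(\lambda)}=\phi(\lambda)^{-1}\partial_\mu\phi(\lambda)$. A short computation with the graded Leibniz rule for $\lrcorner$ and cyclicity then yields the \emph{off-shell} identity
\begin{eq}
\ip{\tilpartial_\mu}{\Omega(\lambda)}=\Tr\!\left(\delta\phi(\lambda)\phi(\lambda)^{-1}\,\partial_\mu Q(\lambda)\right)\,,
\end{eq}
and likewise for $\ip{\tilpartial_\lambda}{\Omega(\mu)}$, both holding independently of the equations of motion. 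Comparing the two sides of the generating Hamilton equation and isolating the part carrying $\delta\phi(\lambda)\phi(\lambda)^{-1}$ reduces the whole equation to $\Tr\!\left(\delta\phi(\lambda)\phi(\lambda)^{-1}\left(\frac{[Q(\lambda),Q(\mu)]}{\lambda-\mu}-\partial_\mu Q(\lambda)\right)\right)=0$ together with its swap.

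To close the argument I would invoke that $e,f$ are Darboux coordinates: the pairing $\Tr(\delta\phi\,\phi^{-1}\,\cdot\,)$ is nondegenerate on directions tangent to the level surface $\Tr Q^2=-2$, and both $\partial_\mu Q(\lambda)$ and $[Q(\lambda),Q(\mu)]/(\lambda-\mu)$ are traceless and constraint-preserving. Hence the bracketed matrix must vanish, which is exactly the generating AKNS equation $\partial_\mu Q(\lambda)=[Q(\lambda),Q(\mu)]/(\lambda-\mu)$, equivalent to $\partial_\lambda Q(\mu)=[Q(\lambda),Q(\mu)]/(\lambda-\mu)$ by the symmetry $\partial_\mu Q(\lambda)=\partial_\lambda Q(\mu)$ noted earlier. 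The two steps I expect to be delicate are the sign bookkeeping in passing from the abstract form equation to its generating-function version, and the justification that the trace pairing is nondegenerate so that matching coefficients delivers the full matrix equation rather than merely a projection of it; both should follow from the explicit form \eqref{form_phi} of $\phi(\lambda)$ and the Darboux property.
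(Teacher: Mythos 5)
Your proposal is correct and follows essentially the same route as the paper's proof: the same generating-form equation $\delta \Ham(\lambda,\mu)=\ip{\tilpartial_\mu}{\Omega(\lambda)}-\ip{\tilpartial_\lambda}{\Omega(\mu)}$, the same computation of $\delta\Ham=\delta V$ via $\delta Q(\lambda)=[\delta\phi(\lambda)\phi(\lambda)^{-1},Q(\lambda)]$ and constancy of $\Tr Q^2$, the same contraction identity $\ip{\tilpartial_\mu}{\Omega(\lambda)}=\Tr\left(\phi(\lambda)^{-1}\partial_\mu Q(\lambda)\,\delta\phi(\lambda)\right)$, and the conclusion by matching the terms carrying $\delta\phi(\lambda)$ and $\delta\phi(\mu)$. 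Your added care about nondegeneracy of the trace pairing on constraint-tangent directions addresses a point the paper glosses over (it simply ``reads the coefficient of $\delta\phi$''), but this is a refinement of, not a departure from, the same argument.
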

\begin{proof}
The multiform Hamilton equations read
\begin{eq}
\delta \Ham = \sum_j dx^j \wedge \ip{\tilpartial_j} \Omega\,,
\end{eq}
or, in components,
\begin{eq}
\delta H_{ij} = \ip{\tilpartial_j}\omega_i - \ip{\tilpartial_i}\omega_j\,.
\end{eq}
This is reformulated in generating form as,
\begin{eq}
\delta \Ham (\lambda,\mu) = \ip{{\tilde D}_\mu}\Omega(\lambda) - \ip{{\tilde D}_\lambda}\Omega(\mu)\,.
\end{eq}
We have already computed $\delta \Ham (\lambda,\mu) = \delta V(\lambda,\mu) $ as
\begin{eq}
\delta \Ham(\lambda,\mu) =  \Tr \bigg( \frac{1}{\mu-\lambda}\phi(\lambda)^{-1} [Q(\mu),Q(\lambda)] \delta\phi(\lambda)  - \frac{1}{\lambda-\mu}\phi(\mu)^{-1}[Q(\lambda),Q(\mu)]\delta \phi(\mu)\bigg)\,.
\end{eq}
Now the right hand-side is
\begin{eqsplit}
    \ip{{\tilde D}_\mu}\Omega(\lambda) - \ip{{\tilde D}_\lambda}\Omega(\mu) & = \Tr \bigg( - Q_0 \phi(\lambda)^{-1} D_\mu \phi(\lambda) \phi(\lambda)^{-1} \delta \phi(\lambda) + Q_0 \phi(\lambda)^{-1} \delta \phi(\lambda) \phi(\lambda)^{-1} D_\mu \phi(\lambda)\\
   & +Q_0 \phi(\mu)^{-1} D_\lambda \phi(\mu) \phi(\mu)^{-1} \delta \phi(\mu) - Q_0 \phi(\mu)^{-1} \delta \phi(\mu) \phi(\mu)^{-1} D_\lambda \phi(\mu)\bigg)\\
    & = \Tr \left( \phi^{-1}(\lambda) D_\mu Q(\lambda) \delta \phi(\lambda) - \phi^{-1}(\mu) D_\lambda Q(\mu) \delta \phi(\mu) \right)\,.
\end{eqsplit}
The result follows by reading the coefficient of $\delta \phi(\mu)$ or equivalently $\delta \phi(\lambda)$.
\end{proof}

\subsection{Zero curvature equations as multiform Hamilton equations}

It is one of the most important results of the theory of integrable classical field theories that their zero curvature representation admits a Hamiltonian formulation. This is established for all famous models and goes as follows, for the example of the NLS equation, see e.g. \cite{Faddeev_Takhtajan_2007}. We choose $Q^{(1)}(x,\lambda)=-i\lambda\sigma_3+Q_1(x)$ satisfying the Sklyanin linear Poisson algebra \eqref{SklyaninPB}, and $H$ being the (appropriate) Hamiltonian extracted from the monodromy matrix of the spectral problem $(\partial_x-Q^{(1)})\Psi=0$. One can then show that the time flow induced on $Q^{(1)}(x,\lambda)$ by $H$ \ie
\begin{eq}
\partial_tQ^{(1)}(x,\lambda)=\{H,Q^{(1)}(x,\lambda)\}
\end{eq}
takes the form of a zero curvature equation
\begin{eq}
\partial_tQ^{(1)}(x,\lambda)=\{H,Q^{(1)}(x,\lambda)\}=\partial_x Q^{(2)}(x,\lambda)+[Q^{(2)}(x,\lambda),Q^{(1)}(x,\lambda)]\,.
\end{eq}
$Q^{(2)}(x,\lambda)$ is derived from the classical $r$-matrix and the monodromy matrix.
In \cite{Caudrelier_Stoppato_2020} the authors cast this result into a covariant framework, for the NLS and mKdV equations separately: the covariant Hamilton equations for the Lax form associated to each equation (thus containing only the two relevant $Q^{(j)}(x,\lambda)$) produce the respective zero curvature condition. \\

Here, we are in a position to prove the analogous result for the whole AKNS hierarchy at once, thanks to our Hamiltonian multiform and multi-time Poisson bracket. The following is the main result of this section
\begin{thrm}\label{HamiltonianZCE}
The multiform Hamilton equations for the Lax form $\displaystyle W(\lambda) = \sum_{k=0}^\infty Q^{(k)}(\lambda) \, dx^k$, \ie
\begin{eq}
    d W(\lambda) = \sum_{i<j=1}^\infty \cpb{H_{ij}}{W(\lambda)} \, dx^{ij}\,,
\end{eq}
are equivalent to the complete set of zero curvature equations of the AKNS hierarchy\footnote{Recall that we pointed out that in turn, this set of zero curvature equations is equivalent to the set of equations \eqref{FNR_tNequations}.}
\begin{equation}
    \partial_i Q^{(j)}(\lambda) - \partial_j Q^{(i)}(\lambda) = [Q^{(i)}(\lambda),Q^{(j)}(\lambda)]\qquad \forall i<j \,.
\end{equation}
\end{thrm}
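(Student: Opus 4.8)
The plan is to evaluate both sides of the multiform Hamilton equation for $W(\lambda)$ as identities on phase space and then match them coefficient by coefficient in $dx^{ij}$. First I would compute the left-hand side directly. Since the horizontal differential acts by $dQ^{(k)}(\lambda)=\sum_i\partial_iQ^{(k)}(\lambda)\,dx^i$ and kills the $dx^k$, antisymmetrising the summation indices gives
\[
dW(\lambda)=\sum_{i,k}\partial_iQ^{(k)}(\lambda)\,dx^i\wedge dx^k=\sum_{i<j}\left(\partial_iQ^{(j)}(\lambda)-\partial_jQ^{(i)}(\lambda)\right)dx^{ij}\,.
\]
On the other hand, expanding the matrix-valued wedge product gives $W(\lambda)\wedge W(\lambda)=\sum_{i<j}[Q^{(i)}(\lambda),Q^{(j)}(\lambda)]\,dx^{ij}$, so the complete set of zero curvature equations is exactly the statement $dW(\lambda)=W(\lambda)\wedge W(\lambda)$. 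Hence the theorem reduces to the single \emph{off-shell} identity
\[
\cpb{H_{ij}}{W(\lambda)}=[Q^{(i)}(\lambda),Q^{(j)}(\lambda)]\,,\qquad i<j\,,
\]
after which equating the coefficients of $dx^{ij}$ on the two sides of the multiform Hamilton equation reproduces the zero curvature equations, and the equivalence is immediate in both directions.

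To prove this identity I would pass to generating functions, which is where the formalism pays off. Introducing auxiliary parameters $\mu,\nu$ and using the formal series identity $\sum_m Q^{(m)}(\lambda)/\eta^{m+1}=Q(\eta)/(\eta-\lambda)$ on the right-hand side, the family of identities above is equivalent to the single compact relation
\[
\cpb{\Ham(\mu,\nu)}{W(\lambda)}=\frac{[Q(\mu),Q(\nu)]}{(\mu-\lambda)(\nu-\lambda)}\,,
\]
where, by Lemma \ref{Legendre_lmm}, $\Ham(\mu,\nu)=V(\mu,\nu)=-\frac{1}{2}\Tr\frac{(Q(\mu)-Q(\nu))^2}{\mu-\nu}$. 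Since $\Ham(\mu,\nu)$ is a $0$-form, Proposition \ref{Hamiltonian0forms} gives its Hamiltonian multivector field explicitly, and contracting it with $\delta W(\lambda)=\sum_k\delta Q^{(k)}(\lambda)\wedge dx^k$ produces
\[
\cpb{\Ham(\mu,\nu)}{W(\lambda)}=\sum_{m\ge 1}\left(\parder{\Ham(\mu,\nu)}{f_m}\parder{Q^{(m)}(\lambda)}{e_1}-\parder{\Ham(\mu,\nu)}{e_m}\parder{Q^{(m)}(\lambda)}{f_1}\right)\,.
\]
I would then evaluate the two types of derivative separately: those of $\Ham(\mu,\nu)$ follow from the explicit form of $\delta V$ (noting that $\Tr Q(\mu)^2=\Tr Q(\nu)^2=-2$, so the field-independent part drops out of the bracket), while $\partial_{e_1}Q^{(m)}$ and $\partial_{f_1}Q^{(m)}$ are computed and resummed through the Darboux relations between the $e_i,f_i$ and the entries $a,b,c$ gathered in Appendix \ref{efcoordinates_section}. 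Performing the two sums and the $P_+$-truncations should collapse the expression precisely to $[Q(\mu),Q(\nu)]/((\mu-\lambda)(\nu-\lambda))$.

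With both off-shell identities in hand, namely $dW(\lambda)=\sum_{i<j}(\partial_iQ^{(j)}-\partial_jQ^{(i)})\,dx^{ij}$ and $\sum_{i<j}\cpb{H_{ij}}{W(\lambda)}\,dx^{ij}=\sum_{i<j}[Q^{(i)},Q^{(j)}]\,dx^{ij}$, the multiform Hamilton equation $dW(\lambda)=\sum_{i<j}\cpb{H_{ij}}{W(\lambda)}\,dx^{ij}$ holds if and only if $\partial_iQ^{(j)}(\lambda)-\partial_jQ^{(i)}(\lambda)=[Q^{(i)}(\lambda),Q^{(j)}(\lambda)]$ for every $i<j$, which is the full AKNS zero curvature system. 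The step I expect to be the main obstacle is the off-shell identity itself: the coefficient $H_{ij}=\Tr\sum_{k=0}^iQ_kQ_{i+j+1-k}$ is an \emph{asymmetric} partial sum, yet its Hamiltonian vector field differentiates only along the first coordinates $e_1,f_1$, so the clean antisymmetric commutator $[Q^{(i)},Q^{(j)}]$ can emerge only after the $P_+$-projections implicit in $Q^{(m)}=P_+(\lambda^mQ)$ and the Darboux pairing of Appendix \ref{efcoordinates_section} conspire correctly; keeping track of this bookkeeping at the level of the generating relation above, rather than component by component, is where the real work lies. As a consistency check, the same identity is forced on solutions by the general multiform Hamilton machinery reviewed in Section \ref{background} (the relation $dF=\sum_{i<j}\cpb{H_{ij}}{F}\,dx^{ij}$ for a Hamiltonian $1$-form $F$), and since both of its sides depend on the $e_i,f_i$ alone it must then hold identically.
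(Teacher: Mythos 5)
Your reduction coincides exactly with the paper's: both turn the theorem into the off-shell identity $\cpb{H_{ij}}{W(\lambda)}=[Q^{(i)}(\lambda),Q^{(j)}(\lambda)]$ and then into its generating form, which is precisely the paper's equation \eqref{constructionH:generating}; the only cosmetic difference is that you contract $\xi_{\Ham}$ of Proposition \ref{Hamiltonian0forms} into $\delta W(\lambda)$, while the paper contracts $\xi_W(\lambda)$ of \eqref{Lax_vector_field} into $\delta \Ham$, and by antisymmetry of the bracket both give the same double sum. But at that point the paper's real work begins: Appendix \ref{proof_HamiltonianZCE} evaluates the right-hand side of \eqref{constructionH:generating} in closed form, using Lemma \ref{Legendre_lmm}, the derivative formulas \eqref{formula_der_e1}--\eqref{formula_der_f}, the resummation $\sum_{k\ge 1}Q^{(k)}(\lambda)/\mu^k=\mu\bigl(Q(\mu)/(\mu-\lambda)-Q_0/\mu\bigr)$, and identities such as $bc=(i-a)(i+a)$, to collapse the matrix algebra component by component. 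Your main route simply asserts that this ``should collapse''; taken on its own, that is a genuine gap, and it is exactly the part that constitutes the paper's proof.

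What rescues the proposal is your final paragraph, which is stronger than a ``consistency check'': properly spelled out, it is a complete and genuinely different proof of the off-shell identity. On solutions of the multiform Hamilton equations (equivalently, by Proposition \ref{multi_Ham_eqs}, of the generating AKNS equations \eqref{FNR_tNequations}), the general machinery gives $dW(\lambda)=\sum_{i<j}\cpb{H_{ij}}{W(\lambda)}\,dx^{ij}$ --- this uses that $W(\lambda)$ is Hamiltonian (the first half of Theorem \ref{rmatrix}, proved independently of the present theorem) and Proposition \ref{Hamiltonian0forms} for the $H_{ij}$ --- while the implication \eqref{FNR_tNequations}$\Rightarrow$\eqref{ZC} gives $dW(\lambda)=W(\lambda)\wedge W(\lambda)$ on the same solutions. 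Hence $\cpb{H_{ij}}{W(\lambda)}=[Q^{(i)}(\lambda),Q^{(j)}(\lambda)]$ holds on shell; and since both sides are functions of the $e_k,f_k$ alone, whereas the equations of motion only constrain the derivative coordinates (each $\partial_k e_i$, $\partial_k f_i$ is a polynomial in the $e$'s and $f$'s, and arbitrary values of the $e_k,f_k$ at a point occur on solutions), an on-shell equality of such functions holds identically in $\mathcal{A}$. This on-shell-to-off-shell promotion deserves more than the single clause you give it, but once justified it closes the argument with no matrix computation at all. The trade-off is clear: the paper's route is self-contained and verifies the formalism concretely; yours is far shorter but leans on the general propositions of \cite{caudrelier_Stoppato_2020_2}, on Theorem \ref{rmatrix}, and on the FNR/zero-curvature equivalence cited in the footnote.
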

The proof is given in Appendix \ref{proof_HamiltonianZCE}.

\subsection{Conservation laws}\label{conservation}

An important aspect of the theory of integrable PDEs is the presence of an infinite number of conservation laws. When formulated in the traditional Hamiltonian framework, this means that the Hamiltonian $h$ is one element in a countable family of independent Hamiltonians functions $h_k$, $k\ge 1$ say, which are in involution with respect to the Poisson bracket $\pb{~}{~}$ used to write the PDE of interest as a Hamiltonian system
\begin{eq}
\label{involution}
\pb{h_n}{h_m}=0\,,~~n,m\ge 1\,.
\end{eq}
When using the classical $r$-matrix formalism to tackle this problem, the strategy is to compute the Poisson bracket of the entries of the monodromy matrix, starting from the linear Sklyanin bracket and deduce that \eqref{involution} holds, where the $h_k$'s are extracted in an appropriate way from the monodromy matrix. The details depend on the model and on the boundary conditions imposed on the fields, see e.g. \cite{Faddeev_Takhtajan_2007} for a full account of this procedure. 

It is natural to ask what happens to this approach in our context and where to find the standard Hamiltonians $h_n$, which are rather different from the coefficients $H_{ij}$ of our Hamiltonian multiform. In our opinion, it is rather remarkable to we do not need to resort to any notion of monodromy matrix to answer this question. Instead, all the required information is encoded in our Hamiltonian multiform $\Ham$ and the notion of conservation law which we naturally define as follows.
\begin{dfn}
A conservation law is a Hamiltonian $1$-form $\displaystyle A=\sum_{i=0}^\infty A_i \, dx^i$ such that $dA =0$ on the equations of motion.
\end{dfn}
In components, this yields of course the familiar form of an infinite family of conservation laws
\begin{eq}
\partial_i A_j - \partial_j A_i=0\,.
\end{eq}
We have proved in \cite{Caudrelier_Stoppato_2020_2} that $A$ is a conservation law if it Poisson-commutes with the Hamiltonian multiform, \ie
\begin{eq}
\sum_{i<j}^\infty\cpb{H_{ij}}{A}\,dx^{ij}=0\,.
\end{eq}
The familiar conservation laws and related Hamiltonians $h_k$ are obtained by considering the following $1$-form.
\begin{prp}[Conservation laws]
The form 
\begin{eq}
A= \sum_{k=0}^\infty A_k\,dx^k\,,~~A_k=a_{k+1}
\end{eq}
is a conservation law.
\end{prp}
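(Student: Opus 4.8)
The plan is to unpack the definition of a conservation law into its two constituent requirements and verify each in turn: first that $A$ is genuinely a Hamiltonian $1$-form in the sense of Proposition \ref{Hamiltonian1forms}, and second that $dA=0$ on the equations of motion. Both become transparent once the coefficients $a_{k+1}$ are re-expressed in the Darboux coordinates $e_i,f_i$. Indeed, combining the definitions $e(\lambda)=b(\lambda)/\sqrt{i-a(\lambda)}$, $f(\lambda)=c(\lambda)/\sqrt{i-a(\lambda)}$ with the constraint $a^2(\lambda)+b(\lambda)c(\lambda)=-1$ collapses to the clean identity $a(\lambda)=e(\lambda)f(\lambda)-i$ (see Appendix \ref{efcoordinates_section}). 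Expanding in powers of $\lambda^{-1}$ gives $a_0=-i$, $a_1=0$, and $a_{k+1}=\sum_{j=1}^{k}e_j f_{k+1-j}$ for $k\ge 1$, so that $A_0=a_1=0$ and $A_k=\sum_{j=1}^k e_j f_{k+1-j}$.

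With this expansion in hand, the first step is short. Since $A_0=a_1=0$ is constant and, for $k\ge 1$, the coefficient $A_k=a_{k+1}$ depends only on $(e_1,\dots,e_k,f_1,\dots,f_k)$, it remains to check the compatibility conditions of Proposition \ref{Hamiltonian1forms}. Differentiating directly I find $\partial A_k/\partial e_j=f_{k+1-j}$ and $\partial A_{k+1}/\partial e_{j+1}=f_{(k+1)+1-(j+1)}=f_{k+1-j}$, and likewise $\partial A_k/\partial f_j=e_{k+1-j}=\partial A_{k+1}/\partial f_{j+1}$ for $j=1,\dots,k$. This establishes that $A$ is Hamiltonian, and the proposition then hands me its Hamiltonian vector field explicitly should it be needed elsewhere.

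For the second step I would use the component identity recorded after the multi-time Poisson bracket was introduced, namely that on the multiform Hamilton equations a Hamiltonian $1$-form satisfies $\cpb{H_{ij}}{A}=\partial_i A_j-\partial_j A_i$. Hence the conservation-law condition $\sum_{i<j}\cpb{H_{ij}}{A}\,dx^{ij}=0$ is equivalent to $\partial_i a_{j+1}-\partial_j a_{i+1}=0$ for all $i,j$. But this is precisely the $(1,1)$-entry of the component identity $\partial_i Q_{j+1}=\partial_j Q_{i+1}$, which was derived earlier from the symmetry $\partial_\mu Q(\lambda)=\partial_\lambda Q(\mu)$ of the generating form \eqref{eq_Q} of the hierarchy. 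Therefore $dA=0$ on the equations of motion, and $A$ is a conservation law.

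I do not expect a serious obstacle: the only mild point is establishing (or correctly locating in Appendix \ref{efcoordinates_section}) the coordinate identity $a(\lambda)=e(\lambda)f(\lambda)-i$, after which both the Hamiltonian property and the closedness reduce to elementary index bookkeeping. The one thing to watch is that the shift-by-one in $A_k=a_{k+1}$ must match the shifted compatibility conditions $\partial A_k/\partial e_j=\partial A_{k+1}/\partial e_{j+1}$; it is exactly this alignment that makes the specific family $A_k=a_{k+1}$, rather than some unshifted combination of the $a_j$, assemble into a \emph{Hamiltonian} $1$-form whose closure is governed by the symmetry $\partial_i a_{j+1}=\partial_j a_{i+1}$.
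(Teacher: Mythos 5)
Your proof is correct, and its first half (the verification that $A$ is Hamiltonian via $\parder{A_k}{e_j}=f_{k+1-j}=\parder{A_{k+1}}{e_{j+1}}$ and its $f$-counterpart) coincides with the paper's argument. The second half, however, takes a genuinely different route. The paper never invokes the hierarchy symmetry $\partial_i Q_{j+1}=\partial_j Q_{i+1}$: it computes $dA=\ip{\xi_A}{\delta \Ham}$ explicitly, using $\parder{A_k}{e_1}=f_k$ and $\parder{A_k}{f_1}=e_k$ to reduce it to $\sum_{m<n}\sum_{k}\left(-e_k\parder{H_{mn}}{e_k}+f_k\parder{H_{mn}}{f_k}\right)dx^{mn}$, and then kills this expression \emph{identically} by a grading argument: every monomial of $H_{mn}$ has equal total degree in the $e$'s and in the $f$'s, so the difference of the two Euler operators annihilates it. In effect the paper exhibits $A$ as the generator of the phase rotation $e\mapsto e^{-\theta}e$, $f\mapsto e^{\theta}f$ leaving each $H_{mn}$ invariant, which is the true analogue of the off-shell criterion $\{I,H\}=0$; the price is the ``direct but tedious'' structural lemma about $H_{mn}$. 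Your route instead imports the known on-shell identity $\partial_i a_{j+1}=\partial_j a_{i+1}$ (the $(1,1)$ entry of $\partial_\mu Q(\lambda)=\partial_\lambda Q(\mu)$, valid on the FNR equations, which the paper has shown are equivalent to the multiform Hamilton equations) and transfers it to the bracket condition through the on-shell formula $\cpb{H_{ij}}{A}=\partial_iA_j-\partial_jA_i$. This is shorter, avoids the structural lemma entirely, and meets the paper's definition, which only demands $dA=0$ on the equations of motion. What you get less of is strength: you obtain the vanishing of $\cpb{H_{ij}}{A}$ only on shell, whereas the paper's computation yields it as an identity on phase space. The gap is easy to close and worth stating explicitly: $\cpb{H_{ij}}{A}$ is a polynomial in the coordinates $e_k,f_k$ alone, and the equations of motion, being evolutionary, impose no pointwise constraints on these coordinates, so on-shell vanishing forces identical vanishing; with that one remark your argument recovers the full, $\{I,H\}=0$-like form of the result.
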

\begin{proof}
From \eqref{a_ef}, we find $\displaystyle A_k = a_{k+1} = \sum_{i=1}^k e_i f_{k+1-i}$ so that $\parder{A_i}{f_j}=e_{i+1-j} = \parder{A_{i+1}}{f_{j+1}}$  and $\parder{A_i}{e_j}=f_{i+1-j} = \parder{A_{i+1}}{e_{j+1}}$. Hence $A$ is Hamiltonian. Now,
\begin{equation}
dA = \ip{\xi_A}\delta \Ham=  \sum_{m<n}  \sum_{k=1}^\infty \left( - \parder{A_k}{f_1} \parder{H_{mn}}{e_k} + \parder{A_k}{e_1} \parder{H_{mn}}{f_k} \right)\, dx^{mn} =     \sum_{m<n} \sum_{k=1}^n \left(-e_k \parder{H_{mn}}{e_k} +f_k \parder{H_{mn}}{f_k}\right)\, dx^{mn}
\end{equation}
where we have used $\parder{A_k}{e_1} = f_k$, $\parder{A_k}{f_1} = e_k$, and the fact that $\parder{H_{mn}}{e_k}=\parder{H_{mn}}{f_k}=0$ if $k>n$ (without loss of generality, we consider $m<n$).
From the explicit expression of $\Ham(z,w)$, a direct but tedious argument shows that each $H_{mn}$ is in fact a polynomial in $e_1,\dots,e_n,f_1,\dots,f_n$ of the form 
\begin{equation}
    H_{mn} = \sum_{(i),(j) \in \N^n}h_{(i)(j)} (e)^{(i)} (f)^{(j)}\,,
\end{equation}
where the sum is finite (only a finite number of coefficients $h_{(i)(j)}\in\C$ are non zero) and we have used the notations $(e)^{(i)} = e_1^{i_1}e_2^{i_2}\dots e_m^{i_m}$, $(f)^{(j)}= f_1^{j_1}f_2^{j_2}\dots f_n^{j_n}$, and has the property that 
$\displaystyle \sum_{k=1}^n i_k = \sum_{k=1}^n j_k$. The result then follows since $\displaystyle\sum_{k=1}^n e_k \parder{}{e_k}$ and $\displaystyle\sum_{k=1}^n f_k \parder{}{f_k}$ are Euler operators with respect to the coordinates $e_k$ and $f_k$ respectively.
\end{proof}
This result provides a reinterpretation of the known fact the quantities $h_k=\frac{1}{k}\int a_{k+1}\,dx^1$, viewed as the traditional hierarchy of standard, single-time, Hamiltonians are indeed constant of the motion and in involution with respect to the traditional (single-time) Poisson bracket $\pb{~}{~}_1$ (see e.g. \cite[Section 9.3]{Dickey_2003}).

\section{Recovering previous results and the first three times}\label{Section_3times}

It is straightforward to recover our previous results \cite{Caudrelier_Stoppato_2020} by ``freezing'' all times except a given pair. This singles out a single $1+1$-dimensional field theory within the hierarchy and our Lagrangian multiform, symplectic multiform, Hamiltonian multiform and multi-time Poisson bracket reduce respectively to a Lagrangian, multisymplectic form, covariant Hamiltonian and covariant Poisson bracket. 

As the simplest example, let use freeze all times except $x^1=x$ and $x^2=t$: we specialise to NLS and recover all the results of Section 4.2 in \cite{Caudrelier_Stoppato_2020} by direct calculation. The Lax form is simply
\begin{eq}
 W(\lambda)=Q^{(1)}(\lambda)\,dx+Q^{(2)}(\lambda)\,dt\,,
\end{eq}
which can be computed using again the coordinates $q$, $r$ and derivatives with respect to $x$ for instance to reproduce the well known NLS Lax pair. The Lagrangian multiform reduces to $\cL=L_{12}dx\wedge dt$ where $L_{12}$ is given in \eqref{NLS_Lag} while the Hamiltonian multiform only involves $H_{12}$. Using our general formula, 
\begin{eq}
H_{ij} = \sum_{k=0}^i (2 a_k a_{i+j+1-k} + b_k c_{i+j+1-k} + c_k b_{i+j+1-k} )\,,
\end{eq}
we find
\begin{eqsplit}
H_{12} = &2a_0a_4 + b_0c_4 + c_0 b_4 + 2a_1 a_3 + b_1 c_3 + c_1 b_3 \\
=& -2i(e_1 f_3 + e_2 f_2 + e_3 f_1) + 2ie_1(f_3 + \frac{i}{4}e_1 f_1^2) + 2if_1(e_3 + \frac{i}{4} e_1^2 f_1)\\
=& -2i e_2 f_2 - e_1^2 f_1^2\\
=& -\frac{1}{4}(q_1r_1 - q^2 r^2)\,.
\end{eqsplit}
This is the covariant Hamiltonian for NLS found in \cite{Caudrelier_Stoppato_2020} (up to an irrelevant factor). The symplectic multiform boils down to the following multisymplectic form 
\begin{subgather}
\Omega=\omega_1\wedge dx    +\omega_2\wedge dt\,,\\
\omega_1  = \frac{i}{2} \delta q \wedge \delta r\,,~~
    \omega_2 = \frac{1}{4} \delta r \wedge \delta q_1 + \frac{1}{4}\delta q \wedge \delta r_1\,,
\end{subgather}
also found first in \cite{Caudrelier_Stoppato_2020} (up to irrelevant factors). It gives rise to a covariant Poisson bracket which is simply the reduction of our multi-time Poisson bracket to only two times and our main results, Theorems \ref{rmatrix} and \ref{HamiltonianZCE} restrict accordingly to the results of \cite{Caudrelier_Stoppato_2020}.

We stress however that we can instead choose any pair of times $x^n$ and $x^k$ and apply the same reasoning. Doing so provides a way to unify the results in \cite{Avan_Caudrelier_2017} which established the $r$-matrix structure of dual Lax pairs for an arbitrary pair of times and the results in \cite{Caudrelier_Stoppato_2020} which provided a covariant formulation of this structure but only for the pair of times $(x^1,x^2)$ and $(x^1,x^3)$. 

The salient features of the multiform theory appear when at least three times are combined together. In general, the coefficients $L_{1n}$ (resp. $H_{1n}$) are not too difficult to construct but all the other ones are, and indeed up to now, it was not known how to obtain them. For instance, freezing all times except $x^1$, $x^2$, $x^3$, the coefficient $L_{23}$ was first obtained in \cite{Sleigh_Nijhoff_Caudrelier_2019} by complicated calculations. Here, we obtain it rather easily, see \eqref{eq_L23}, as well as the associated coefficient $H_{23}$ in the Hamiltonian multiform which reads
\begin{eqnarray}
       H_{23} &=&- 2i e_3 f_3 + \frac{1}{2} e_1 f_1(f_1 e_3 + e_1 f_3) - (e_1 f_2 + f_1 e_2)^2 + \frac{i}{8}e_1^3 f_1^3\\
        &=&- \frac{1}{16}q_{11} r_{11} + \frac{qr}{8}(rq_{11} + qr_{11}) - \frac{1}{16}(r q_1 - q r_1)^2 - \frac{1}{4} q^3 r^3
\end{eqnarray}
For completeness, let us also give 
\begin{gather}
     H_{13} = -2i( e_2 f_3 + e_3 f_2) - \frac{3}{2} e_1 f_1 (f_1 e_2 + e_1 f_2)\,,\\
     =\frac{i}{8}\left(q_1r_{11} - r_1q_{11} \right)\,.
\end{gather}
We remark that these coefficients differ from those in \cite{Caudrelier_Stoppato_2020_2} by a factor $\frac{i}{2}$.
In the rest of this section, we illustrate in every detail the calculations involved in our general results when restricted to the first three times. This has only pedagogical value. We hope that this will help the reader familiarise themselves with some of the new formalism while dealing with the most familiar and easiest levels of the AKNS hierarchy.
We now turn to the symplectic multiform $\Omega = \omega_1\wedge dx^1 + \omega_2 \wedge dx^2 + \omega_3 \wedge dx^3$, where
\begin{eq}
\omega_1 = \delta f_1\wedge \delta e_1 \,, \quad \omega_2 = \delta f_1 \wedge \delta e_2 + \delta f_2 \wedge \delta e_1\,, \quad \omega_3 = \delta f_1 \wedge \delta e_3 + \delta f_2 \wedge \delta e_2 + \delta f_3 \wedge \delta e_1\,.
\end{eq}
As done above for $\omega_1$ and $\omega_2$, it is interesting to write $\omega_3$ using $b_1 = q$, $c_1 = r$ and their derivatives with respect to $x^1$, denoted by $q_1$, $r_1$, $q_{11}$, $r_{11}$. We find
\begin{subgather}
    \omega_3 = \frac{i}{8} \delta r \wedge \delta q_{11} + \frac{i}{8} \delta r_{11} \wedge \delta q  + \frac{i}{8} \delta q_1 \wedge \delta r_1 + \frac{3iqr}{4} \delta q \wedge \delta r\,,
\end{subgather}
and we remark that they also differ from the ones in \cite{Caudrelier_Stoppato_2020_2} by the same factor $\frac{i}{2}$, so that the multiform Hamilton equations $\delta \Ham = \sum_j dx^j \wedge \ip{\tilpartial_j} \Omega$ are the same. Let us compute them, in the new $e$ and $f$ coordinates. In components we have
\begin{itemize}
    \item $\delta H_{12} = \ip{\partial_2}{\omega_1} - \ip{\partial_1}{\omega_2}$:
    \begin{subgather}
        \partial_1 f_1 = 2i f_2\,, \qquad \partial_1 e_1= - 2i e_2\,,\\
  \partial_1 f_2 - \partial_2 f_1 = 2 e_1 f_1^2\,, \qquad \partial_2 e_1 - \partial_1 e_2 = 2 e_1^2 f_1\,.
    \end{subgather}
    The top equations give the relations $b_2 = \frac{i}{2} \partial_1 b_1 = \frac{i}{2} q_1$ and $c_2 = - \frac{i}{2} \partial_1   c_1 = -\frac{i}{2} r_1$, and the bottom ones give the NLS equations.
    \item $\delta H_{13} = \ip{\partial_3}{\omega_1} - \ip{\partial_1}{\omega_3}$:
    \begin{subgather}
  \partial_1 f_1 = 2i f_2\,, \qquad \partial_1 e_1= - 2i e_2\,,\\
  \partial_1 f_2 = 2i f_3 + \frac{3}{2} e_1 f_1^2\,, \qquad \partial_1 e_2 = - 2i e_3 - \frac{3}{2} e_1^2 f_1\,,\\
  \partial_1 f_3 - \partial_3 f_1 = \frac{3}{2} e_2 f_1^2 + 3 e_1 f_1 f_2\,, \qquad \partial_3 e_1 - \partial_1 e_3 = \frac{3}{2} e_1^2 f_2 + 3 e_1 f_1 e_2\, ,
    \end{subgather}
    where the top four equations give the relations  $b_2 = \frac{i}{2} q_1$ and $c_2 =  -\frac{i}{2} r_1$, and $b_3= -\frac{1}{4}q_{11} + \frac{1}{2}q^2r$ and $c_3 = -\frac{1}{4}r_{11} + \frac{1}{2}q r^2$, and the bottom ones are the mKdV equations.
    \item  $\delta H_{23} = \ip{\partial_3}{\omega_2} - \ip{\partial_2}{\omega_3}$:
    \begin{subgather}
 \partial_2 f_1 = 2i f_3 - \frac{1}{2} e_1 f_1^2\,, \qquad \partial_2 e_1 = -2 i e_3 + \frac{1}{2} e_1^2 f_1\,,\\
 \partial_2 f_2 - \partial_3 f_1 = 2 f_1^2 e_2 + 2 e_1 f_1 f_2\,, \qquad \partial_3 e_1 - \partial_2 e_2 = 2 e_1^2 f_2 + 2e_1 f_1 e_2\,,\\
 \partial_3 f_2 - \partial_2 f_3 = \frac{1}{2} f_1^2 e_3 + e_1 f_1 f_3 + \frac{3i}{8} e_1^2 f_1^3 - 2 e_1 f_2^2 - 2 f_1 e_2 f_2\,,\\
 \partial_2 e_3 - \partial_3 e_2 = \frac{1}{2} e_1^2 f_3 + e_1 f_1 e_3 + \frac{3i}{8} e_1^3 f_1^2 - 2 f_1 e_2^2 - 2 e_1 e_2 f_2\,,
\end{subgather}
which reduce to differential consequences of the previous equations.
\end{itemize}
The single-time Poisson brackets $\pb{\quad}{\quad}_k$ for $k=1,2,3$
\begin{eq}
\pb{\quad}{\quad}_k = \sum_{i=1}^k \left(\parder{}{f_i}\parder{}{e_{k+1-i}} - \parder{}{e_{k+1-i}}\parder{}{f_i} \right)\,,
\end{eq}
can be re-expressed in the $q$ and $r$ coordinates as
\begin{eq}
\pb{\quad}{\quad}_1 = 2i \left( \parder{}{r} \parder{}{q} - \parder{}{q} \parder{}{r} \right)\,,
\end{eq}
\begin{eq}
\pb{\quad}{\quad}_2 = 4 \left( \parder{}{r} \parder{}{q_1} + \parder{}{q} \parder{}{r_1} - \parder{}{q_1} \parder{}{r} - \parder{}{r_1} \parder{}{q}\right)\,,
\end{eq}
\begin{eqsplit}
\pb{\quad}{\quad}_3 =-8i\Big(&  \parder{}{r}\parder{}{q_{11}} + \parder{}{r_{11}}\parder{}{q} + \parder{}{q_1}\parder{}{r_1} + 6qr \parder{}{q_{11}} \parder{}{r_{11}}\\
&- \parder{}{q_{11}}\parder{}{r} - \parder{}{q}\parder{}{r_{11}} - \parder{}{r_1}\parder{}{q_1} - 6qr \parder{}{r_{11}} \parder{}{q_{11}}\Big)\,.
\end{eqsplit}
These differ from the ones obtained in \cite{Caudrelier_Stoppato_2020_2} by a factor $-2i$.\\

We will now show how to obtain the classical $r$-matrix structure within the multi-time Poisson brackets for the first three times. We will use the first three Lax matrices repackaged into the Lax form $W(\lambda) = Q^{(1)}(\lambda)\,dx^1 + Q^{(2)}(\lambda)\, dx^2 + Q^{(3)}(\lambda)\,dx^3$
\begin{subgather}
    W^+(\lambda) = b_1 \,dx^1 + (\lambda b_1 +b_2) \, dx^2 +(\lambda^2 b_1 + \lambda b_2 + b_3) \, dx^3\,,\\
    W^-(\lambda) = c_1 \,dx^1 + (\lambda c_1 +c_2) \, dx^2 +(\lambda^2 c_1 + \lambda c_2 + c_3) \, dx^3\,,\\
    W^3(\lambda) = -i \lambda dx^1 +( -i\lambda^2 - \frac{i}{2} b_1 c_1)\, dx^1 + (-i \lambda^3 - \frac{i \lambda}{2}b_1 c_1 - \frac{i}{2} (b_1 c_2 + c_1 b_2))\, dx^3 \,,
\end{subgather}
which we can also write in terms of the coordinates $e$ and $f$ as in
\begin{subgather}
    W^+(\lambda) = \sqrt{2i}e_1 \,dx^1 +\sqrt{2i} (\lambda e_1 +e_2) \, dx^2 +\sqrt{2i}(\lambda^2 e_1 + \lambda e_2 + e_3 + \frac{i}{4} e_1^2 f_1) \, dx^3\,,\\
    W^-(\lambda) =\sqrt{2i} f_1 \,dx^1 + \sqrt{2i}(\lambda f_1 +f_2) \, dx^2 +\sqrt{2i}(\lambda^2 f_1 + \lambda f_2 + f_3 + \frac{i}{4} e_1 f_1^2) \, dx^3\,,\\
    W^3(\lambda) = -i \lambda dx^1 +( -i\lambda^2 +e_1f_1)\, dx^1 + (-i \lambda^3 +\lambda e_1 f_1 +e_1 f_2 + e_2 f_1)\, dx^3 \,.
\end{subgather}
We can then compute the Hamiltonian vector field associated to each component of the Lax form:
\begin{subgather}
     \xi_{W^+}(\lambda) = \sqrt{2i}\left( \partial_{f_1} + \lambda \partial_{f_2} + (\lambda^2 + \frac{i}{2}e_1f_1)\partial_{f_3} - \frac{i}{4} e_1^2 \partial_{e_3} \right)\,,\\
      \xi_{W^-}(\lambda) = \sqrt{2i}\left( -\partial_{e_1} - \lambda \partial_{e_2} + (-\lambda^2 - \frac{i}{2}e_1f_1)\partial_{e_3} + \frac{i}{4} f_1^2 \partial_{f_3} \right)\,,\\
      \xi_{W^3}(\lambda) = - e_1 \partial_{e_2} + f_1 \partial_{f_2} + (-\lambda e_1 - e_2)\partial_{e_3} + (\lambda f_1 + f_2) \partial_{f_3}\,.
\end{subgather}
Let us now compute the multi-time Poisson bracket.
\begin{eqsplit}
    \cpb{W_1(\lambda)}{W_2(\mu)} =&\sum_{i,j=+,-,3} \cpb{W^i(\lambda)}{W^j(\mu)} \sigma_i \otimes \sigma_j\\
    =&\cpb{W^+ (\lambda)}{W^+ (\mu)} \sigma_+ \otimes \sigma_+ + \cpb{W^+ (\lambda)}{W^- (\mu)} \sigma_+ \otimes \sigma_-\\
    & + \cpb{W^+ (\lambda)}{W^3 (\mu)} \sigma_+ \otimes \sigma_3 + \cpb{W^- (\lambda)}{W^+ (\mu)} \sigma_- \otimes \sigma_+\\
    &+ \cpb{W^- (\lambda)}{W^- (\mu)} \sigma_- \otimes \sigma_- + \cpb{W^- (\lambda)}{W^3 (\mu)} \sigma_- \otimes \sigma_3\\
    &+ \cpb{W^3 (\lambda)}{W^+ (\mu)} \sigma_3 \otimes \sigma_+ + \cpb{W^3 (\lambda)}{W^- (\mu)} \sigma_3 \otimes \sigma_-\\
    &+ \cpb{W^3 (\lambda)}{W^3 (\mu)} \sigma_3 \otimes \sigma_3\,.
\end{eqsplit}
The reader can check that $\cpb{W^+ (\lambda)}{W^+ (\mu)} = \cpb{W^- (\lambda)}{W^- (\mu)}= \cpb{W^3 (\lambda)}{W^3 (\mu)}=0$, while the other non-zero Poisson brackets are
\begin{subgather}
     \cpb{W^+(\lambda)}{W^-(\mu)} = -2i\, dx^1 - 2i(\lambda + \mu)\, dx^2 - 2i \, (\lambda^2 + \lambda \mu + \mu^2 + i e_1 f_1)\, dx^3\,\\
     \cpb{W^-(\lambda)}{W^+(\mu)} = 2i\, dx^1 + 2i(\lambda + \mu)\, dx^2 + 2i \, (\lambda^2 + \lambda \mu + \mu^2 + i e_1 f_1)\, dx^3\,\\
     \cpb{W^+(\lambda)}{W^3(\mu)} = - \sqrt{2i} e_1 \, dx^2 - \sqrt{2i}( (\lambda + \mu) e_1 + e_2)\, dx^3\,,\\
     \cpb{W^3(\lambda)}{W^+(\mu)} =  \sqrt{2i} e_1 \, dx^2 + \sqrt{2i}( (\lambda + \mu) e_1 + e_2)\, dx^3\,,\\
     \cpb{W^-(\lambda)}{W^3(\mu)} =  \sqrt{2i} f_1 \, dx^2 + \sqrt{2i}( (\lambda + \mu) f_1 + f_2)\, dx^3\,,\\
      \cpb{W^3(\lambda)}{W^-(\mu)} = - \sqrt{2i} f_1 \, dx^2 - \sqrt{2i}( (\lambda + \mu) f_1 + f_2)\, dx^3\,.
\end{subgather}
Adding everything together one realises that $\cpb{W_1(\lambda)}{W_2(\mu)}=[\frac{P_{12}}{\mu - \lambda}, W_1(\lambda) + W_2(\mu)]$, as desired.\\

Let us verify that for the first three times
\begin{eq}
 \sum_{i<j}\cpb{H_{ij}}{W(\lambda)} =W(\lambda) \wedge W(\lambda)= \sum_{i<j}[Q^{(i)}(\lambda),Q^{(j)}(\lambda)] \, dx^{ij}
\end{eq}
or, in components,
\begin{eq}
 \cpb{H_{ij}}{W(\lambda)} = [Q^{(i)}(\lambda),Q^{(j)}(\lambda)]\,.
\end{eq}
We write explicitly the $(1,2)$ term. The coefficient of the Hamiltonian multiform $H_{12} = - 2i e_2 f_2 - e_1^2 f_1^2$ has Hamiltonian vector field
\begin{eq}
  \xi_{12} = 2 e_1^2 f_1 \partial_{e_1} \wedge \partial_1 - 2 e_1 f_1^2 \partial_{f_1} \wedge \partial_1 + 2i e_2 \partial_{e_1} \wedge \partial_2 - 2i f_2 \partial_{f_1} \wedge \partial_2\,,
\end{eq}
so that the left hand-side reads
\begin{eqsplit}
     \cpb{H_{12}}{W(\lambda)} =& \ip{\xi_{12}}{\delta W(\lambda)}\\
    =&\ip{\xi_{12}}{(e_1 \delta f_1\wedge dx^2 + f_1 \delta e_1 \wedge dx^2)} \sigma_3 \\
    &+\ip{\xi_{12}}{( \sqrt{2i} \delta e_1 \wedge dx^1 + \sqrt{2i} \delta e_2 \wedge dx^2 + \sqrt{2i} \lambda \delta e_1 \wedge dx^2)} \sigma_+\\
    &+\ip{\xi_{12}}{( \sqrt{2i} \delta f_1 \wedge dx^1 + \sqrt{2i} \delta f_2 \wedge dx^2 + \sqrt{2i} \lambda \delta f_1 \wedge dx^2)} \sigma_-\\
    =& 2i( e_1 f_2 - f_1 e_2) \sigma_3 + \sqrt{2i}(-2 e_1^2 f_1 - 2i \lambda e_2) \sigma_+ + \sqrt{2i}(2e_1 f_1^2 + 2i \lambda f_2) \sigma_- \\
    =& [Q^{(1)}(\lambda), Q^{(2)}(\lambda)]\,.
\end{eqsplit}
Similarly one obtains $\cpb{H_{13}}{W(\lambda)} = [Q^{(1)}(\lambda),Q^{(3)}(\lambda)]$ and $\cpb{H_{23}}{W(\lambda)} = [Q^{(2)}(\lambda),Q^{(3)}(\lambda)]$.\\

We can also verify that $A = a_2 dx^1 + a_3 dx^2 + a_4 dx^3$ is indeed a conservation law in the usual coordinates $q$ and $r$. In fact, we have that
\begin{subgather}
     a_2 = e_1 f_1 =-\frac{i}{2}qr\,,\\
     a_3 = e_1 f_2 + e_2 f_1 = \frac{1}{4} (q_1 r - q r_1)\,,\\
     a_4 = e_1 f_3 + e_2 f_2 + e_3 f_1 = \frac{i}{8}q r_{11} + \frac{i}{8} q_{11} r - \frac{3i}{8}q^2 r^2 - \frac{i}{8}q_1 r_1\,.
\end{subgather}
Imposing $dA=0$ is equivalent to the equations
\begin{subgather}
     \partial_1 a_3 = \partial_2 a_2\,,\\
     \partial_1 a_4 = \partial_3 a_2\,,\\
     \partial_2 a_4 = \partial_3 a_3\,,
\end{subgather}
which hold on the equations of motion.

\section{Conclusions and perspectives}\label{Conclusions section}

This work constitutes progress towards the understanding of the role played by multi-time consistency and the application of covariant Hamiltonian field theory to integrable systems, which is a line of research that started with \cite{Caudrelier_Stoppato_2020} and continued with \cite{Caudrelier_Stoppato_2020_2} and \cite{Caudrelier_Stoppato_Vicedo_2020}. In this paper, we have presented a new Lagrangian and Hamiltonian multiforms to describe the complete Ablowitz-Kaup-Newell-Segur (AKNS) hierarchy, and recognised in this description the main features of integrability, including the presence of an infinite number of conservation laws and the classical $r$-matrix structure. Unlike the usual approach to an integrable hierarchy, our formalism preserves equal footing with respect to all the times of the integrable hierarchy, and avoids altogether the infinite-dimensional Hamiltonian formalism in favour of a more \enquote{finite-dimensional} approach to a field theory. The Lagrangian multiform is written as a double generating series in the negative loop algebra, and is proved to satisfy the closure relation. In turn, via a \enquote{multi-time Legendre transform}, we obtain a Hamiltonian and a symplectic multiform, which encapsulate respectively all the covariant Hamiltonians and the symplectic structures of the hierarchy. The classical $r$-matrix structure is found within a multi-time Poisson bracket (that is derived from the symplectic multiform), which contains all the single-time Poisson brackets of the hierarchy. The set of zero curvature equations are obtained as multiform Hamilton equations for the complete Lax connection. Conservation laws have a vanishing multi-time Poisson bracket with the Hamiltonian multiform, emulating the familiar condition in finite dimensional Hamiltonian mechanics. The well-known fluxes and conserved quantities are re-obtained naturally from this requirement.\\

These results point to some interesting open questions. In \cite{Caudrelier_Stoppato_Vicedo_2020} we have identified a covariant equivalent of the famous relation \enquote{$ H = \Tr L^2$} between the Hamiltonian multiform and the Lax connection. An open question is if there is such a formula for the AKNS hierarchy as well, and what are its implications. The conservation laws are obtained in this paper without resorting to the monodromy matrix \cite{Sklyanin_1982}, and without involving the $r$-matrix structure at the group level, which is the starting point of the traditional and well-known (quantum) Inverse Scattering Method. This is in our opinion certainly remarkable, and it needs to be understood whether the monodromy matrix is really out of the picture and if so, what is the reason. Moreover, we were not able to relate the closure of the Lagrangian multiform with the \enquote{mutual involution} of the single-time Hamiltonians, which would connect our work to the one of \cite{Vermeeren_2020}. Finally, we believe that a similar approach will be useful in describing other integrable hierarchies, such as the ones containing the (potential) Korteweg-de Vries equation or the sine-Gordon equation, which were first analysed with a Hamiltonian multiform formalism in \cite{Caudrelier_Stoppato_2020_2}. This is left for future investigation.\\

As mentioned above, the classical $r$-matrix structure of a single-time Poisson bracket is the starting point of a well-established procedure of canonical quantisation \cite{Sklyanin_1979}. We wish to remark that this work (together with \cite{Caudrelier_Stoppato_2020}, \cite{Caudrelier_Stoppato_2020_2} and partly \cite{Caudrelier_Stoppato_Vicedo_2020}) belongs to a programme whose overarching goal is a new approach to \emph{canonical covariant quantisation} of an integrable system, and builds an important step towards this objective. We believe that the classical $r$-matrix structure within the covariant (and multi-time) Poisson bracket can provide a new outlook on how to perform this canonical quantisation in a covariant fashion.\\

\appendix

\section{The $e$ and $f$ coordinates}\label{efcoordinates_section}
In this section we discuss some of the properties of the coordinates $e$ and $f$, which are defined as
\begin{eq}
e(\lambda) = \frac{b(\lambda)}{\sqrt{i-a(\lambda)}} \,, \qquad f(\lambda) = \frac{c(\lambda)}{\sqrt{i-a(\lambda)}}\,.
\end{eq}
We remember that we are restricting to the subset where $a^2(\lambda) + b(\lambda) c(\lambda) = -1$, which means that $a(\lambda) = e(\lambda)f(\lambda) -i$, in fact
\begin{eq}
\label{a_ef}
e(\lambda) f(\lambda) = \frac{b(\lambda) c(\lambda)}{i-a(\lambda)} = \frac{-1-a^2(\lambda)}{i-a(\lambda)} = i+a(\lambda)\,.
\end{eq}
We list the first few coefficients of $e$ and $f$:
\begin{subgather}
e_0 = 0\,,\qquad f_0=0\,,\\
 e_1 = \frac{1}{\sqrt{2i}}b_1 \,, \qquad f_1 = \frac{1}{\sqrt{2i}}c_1\,,\\
  e_2 = \frac{1}{\sqrt{2i}}b_2\,, \qquad  f_2 = \frac{1}{\sqrt{2i}}c_2\,,\\
   e_3 = \frac{1}{\sqrt{2i}}(b_3 - \frac{1}{8} b_1^2 c_1 )\,, \qquad f_3 = \frac{1}{\sqrt{2i}}(c_3 - \frac{1}{8} b_1 c_1^2 )\,,\\
     e_4 = \frac{1}{\sqrt{2i}}( b_4- \frac{1}{4} b_1 c_1 b_2 - \frac{1}{8} b_1^2 c_2)\,, \qquad f_4 = \frac{1}{\sqrt{2i}}( c_4- \frac{1}{4} b_1 c_1 c_2 - \frac{1}{8} c_1^2 b_2)\,.    
\end{subgather}
Conversely, we have that
\begin{subgather}
     b_1 = \sqrt{2i} e_1\,, \qquad c_1 = \sqrt{2i} f_1\,,\\
     b_2= \sqrt{2i} e_2\,,\qquad  c_2 = \sqrt{2i} f_2\,,\\
     b_3 = \sqrt{2i} ( e_3 + \frac{i}{4}e_1^2 f_1)\,, \qquad c_3 = \sqrt{2i} (f_3 + \frac{i}{4} e_1 f_1^2)\,,\\
     b_4 = \sqrt{2i} ( e_4 + \frac{i}{2}e_1 f_1 e_2 + \frac{i}{4} e_1^2 f_2)\,, \qquad c_4 = \sqrt{2i} (f_4 + \frac{i}{2} e_1 f_1 f_2 + \frac{i}{4} f_1^2 e_2)\,.
\end{subgather}
Also $a = ef -i$, so $a_k = \sum_{i=1}^{k-1} e_i f_{k-i}$:
\begin{equation}
    a_0 = -i\,, \qquad a_1 = 0\,, \qquad a_2 = e_1 f_1\,, \qquad a_3 = e_1 f_2 + e_2 f_1\,, \qquad a_4 = e_1 f_3+ e_2 f_2 + f_1 e_3\,.
\end{equation}
It is also useful to express these relations in terms of the usual $q$ and $r$ coordinates (and their derivatives with respect to $x^1=x$) we have the following identities
\begin{subgather}
    b_1=q\,,\qquad c_1=r\,,\\
    b_2 = \frac{i}{2}q_1\,,\qquad c_2 =- \frac{i}{2}r_1\,,\\
    b_3 = -\frac{1}{4}q_{11} + \frac{1}{2}q^2r\,,\qquad c_3 = -\frac{1}{4}r_{11} + \frac{1}{2}q r^2\,,\\
    b_4 = - \frac{i}{8} q_{111} + \frac{3i}{4}qrq_1\,,\qquad c_4 = \frac{i}{8} r_{111} - \frac{3i}{4} qr r_1\,.
\end{subgather}
\begin{subgather}
    e_1 = \frac{1}{\sqrt{2i}}q\,, \qquad f_1  =  \frac{1}{\sqrt{2i}}r\,,\\
    e_2 =  \frac{1}{\sqrt{2i}}\frac{i}{2}q_1 \,,\qquad f_2 =  -\frac{1}{\sqrt{2i}}\frac{i}{2}r_1\,,\\
    e_3 =  \frac{1}{\sqrt{2i}}\left(-\frac{1}{4}q_{11} + \frac{3}{8}q^2 r \right)\,,\qquad f_3 =  \frac{1}{\sqrt{2i}}\left( -\frac{1}{4} r_{11} + \frac{3}{8}qr^2\right)\,,\\
    e_4 =  \frac{1}{\sqrt{2i}}\left( - \frac{i}{8} q_{111} + \frac{5i}{8} qrq_1 + \frac{i}{16} q^2 r_1 \right)\,,\qquad f_4 =  \frac{1}{\sqrt{2i}}\left( \frac{i}{8}r_{111} - \frac{5i}{8}qrr_1 - \frac{i}{16} q_1r^2 \right)\,.
\end{subgather}
Conversely:
\begin{subgather}
    q = \sqrt{2i} e_1 \,,\qquad r = \sqrt{2i}f_1\,,\\
    q_1 = -\sqrt{2i}2i e_2 \,,\qquad r_1 = \sqrt{2i}2i f_2\,,\\
    q_{11} = \sqrt{2i}\left( -4e_3 + 3i e_1^2 f_1\right)\,,\qquad r_{11} = \sqrt{2i}\left( -4 f_3 + 3i e_1 f_1^2 \right)\,,\\
    q_{111} =\sqrt{2i}\left( 8ie_4 + 20e_1 f_1 e_2- 2e_1^2 f_2\right)\,,\quad r_{111} = \sqrt{2i}\left( -8i f_4 -20 e_1 f_1 f_2 + 2 f_1^2 e_2\right)\,.
\end{subgather}
We can also write the expressions for the derivatives of $Q$ with respect to the coordinates $e$ and $f$:
\begin{equation}
 \frac{\partial Q(\lambda)}{\partial e_k}=\frac{\lambda^{-k}}{\sqrt{i-a(\lambda)}}\begin{pmatrix}
    c(\lambda) & \frac{i-3a(\lambda)}{2}\\
    -\frac{c^2(\lambda)}{2(i-a(\lambda))} & -c(\lambda)
    \end{pmatrix}\,,~~    \frac{\partial Q(\lambda)}{\partial f_k}=\frac{\lambda^{-k}}{\sqrt{i-a(\lambda)}} \begin{pmatrix}
    b(\lambda) & -\frac{b^2(\lambda)}{2(i-a(\lambda))}\\
    \frac{i-3a(\lambda)}{2} & -b(\lambda)
    \end{pmatrix}\,.
\end{equation}
Therefore we have
\begin{subgather}
\label{formula_der_e1}
    \parder{a_i}{e_j} = f_{i-j}\,,\\
    \parder{b_i}{e_j} = \left(\frac{i-3a(\lambda)}{2\sqrt{i-a(\lambda)}}\right)_{i-j} \,,\\
    \label{formula_der_e3}
    \parder{c_i}{e_j} = \left(\frac{-f^2(\lambda)}{2\sqrt{i-a(\lambda)}}\right)_{i-j} = \left(\frac{-c^2(\lambda)}{2(i-a(\lambda))^{3/2}} \right)_{i-j}\,,
\end{subgather}
\begin{subgather}
\label{formula_der_f}
    \parder{a_i}{f_j} = e_{i-j}\,,\\
    \parder{b_i}{f_j} = \left(\frac{-e^2(\lambda)}{2\sqrt{i-a(\lambda)}}\right)_{i-j} = \left(\frac{-b^2(\lambda)}{2(i-a(\lambda))^{3/2}} \right)_{i-j}\,,\\
    \parder{c_i}{f_j} = \left(\frac{i-3a(\lambda)}{2\sqrt{i-a(\lambda)}}\right)_{i-j}\,.
\end{subgather}

\section{Proofs}\label{proof section}
\subsection{Proof of Theorem \ref{Lag_multi}}\label{Proof_Lag_multi}
\begin{proof}
We need to calculate $\delta d K$ and then $\delta d V$. 
We do so with the help of the generating functions as follows. Note that 
\begin{eq}
 dK=\sum_{i<j<k}\left(\partial_iK_{jk}+\partial_kK_{ij}+\partial_jK_{ki} \right)\,dx^{ijk}\,,
\end{eq} 
hence we associate to it the generating function\footnote{With $\circlearrowleft$ we mean the cyclic permutations of $(\nu,\lambda,\mu)$.} $D_\nu K(\lambda,\mu) + \circlearrowleft$. To obtain $\delta dK$, we simply calculate $D_\nu K(\lambda,\mu) + \circlearrowleft$. The same holds for $\delta d V$.
We will need the following identities:
\begin{eqnarray}
&&\Tr Q(\lambda)\delta(D_\nu Q(\mu)) = \Tr\phi(\mu)^{-1} \left( [D_\nu Q(\mu),Q(\lambda)] + D_\nu \phi(\mu) \phi(\mu)^{-1} [Q(\lambda),Q(\mu)]\right) \delta \phi(\mu) \nonumber\\
&&\hspace{3.2cm}- \Tr \phi(\mu)^{-1}[Q(\lambda),Q(\mu)] \delta (D_\nu  \phi(\mu)) \,,\\
&&\Tr D_\nu Q(\lambda) \delta Q(\mu) =\Tr \phi(\mu)^{-1} [Q(\mu),D_\nu Q(\lambda)] \delta \phi(\mu)\,.
\end{eqnarray}
We have that
\begin{eqsplit}
    D_\nu K(\lambda,\mu) =  & \Tr( - \phi(\mu)^{-1} D_\nu \phi(\mu) \phi(\mu)^{-1} D_\lambda \phi(\mu) Q_0 + \phi^{-1}(\mu) D_\nu  D_\lambda \phi(\mu) Q_0\\
    &+ \phi(\lambda)^{-1} D_\nu  \phi(\lambda) \phi(\lambda)^{-1} D_\mu \phi(\lambda) Q_0 - \phi^{-1}(\lambda) D_\nu  D_\mu \phi(\lambda) Q_0)\,.
\end{eqsplit}
We now apply the $\delta$-differential. 
\begin{eqsplit}
    \delta D_\nu  K(\lambda,\mu) 
    =& \Tr(\phi(\mu)^{-1} \big( D_\nu  \phi(\mu) \phi(\mu)^{-1} D_\lambda \phi(\mu) \phi(\mu)^{-1} Q(\mu) \\
    &+ D_\lambda \phi(\mu) \phi(\mu)^{-1} Q(\mu) D_\nu  \phi(\mu) \phi(\mu)^{-1}- D_\nu  D_\lambda \phi(\mu) \phi(\mu)^{-1} Q(\mu)\big) \delta \phi(\mu)\\
    &- \phi(\mu)^{-1} D_\lambda \phi(\mu) \phi(\mu)^{-1} Q(\mu) \delta (D_\nu  \phi(\mu)) \\
    &- \phi(\mu)^{-1} Q(\mu) D_\nu  \phi(\mu) \phi(\mu)^{-1} \delta (D_\lambda \phi(\mu))\\
    & + \phi(\mu)^{-1} Q(\mu) \delta (D_\nu  D_\lambda \phi(\mu)) - (\lambda \leftrightarrow \mu))\,.
\end{eqsplit}
We add the cyclic sum and we select the coefficients of $\delta \phi(\mu)$, $\delta D_\nu  \phi(\mu)$, etc. adding the corresponding terms from $\delta D_\lambda K(\mu,\nu )$.
\begin{eqsplit}
   \delta d K 
    =& \Tr(\phi(\mu)^{-1} \big( D_\mu \phi(\mu) \phi(\mu)^{-1} D_\lambda Q(\mu) - D_\lambda \phi(\mu) \phi(\mu)^{-1} D_\mu Q(\mu)) \delta \phi(\mu)\\
    &- \phi(\mu)^{-1} D_\lambda Q(\mu) \delta (D_\mu \phi(\mu)) + \phi(\mu)^{-1} D_\mu Q(\mu) \delta (D_\lambda\phi(\mu))+ \circlearrowleft)\,.
\end{eqsplit}
We do the same for $V(\lambda,\mu) = \frac{1}{2(\mu-\lambda)} \Tr(Q(\lambda)-Q(\mu))^2$. 
\begin{eqsplit}
    D_\nu V(\lambda,\mu) = & \frac{1}{\mu-\lambda}\Tr ( D_\nu Q(\lambda) - D_\nu Q(\mu))(Q(\lambda) - Q(\mu))\\
    =& \frac{1}{\lambda-\mu}\Tr(D_\nu Q(\lambda) Q(\mu) + D_\nu Q(\mu) Q(\lambda))\,,
\end{eqsplit}
where we used $\Tr Q(\lambda) D_\nu Q(\lambda) =\Tr Q(\mu) D_\nu Q(\mu)=0$. We now apply the $\delta$-differential.
\begin{eq}
    \delta D_\nu V(\lambda,\mu) =  \frac{1}{\lambda-\mu} \Tr \left(D_\nu Q(\lambda) \delta Q(\mu) + Q(\lambda) \delta (D_\nu Q(\mu)) \right) - (\lambda\leftrightarrow \mu)
\end{eq}
and using the identities above we get
\begin{eqsplit}
    \delta D_\nu  V(\lambda,\mu)
 =& \frac{1}{\lambda-\mu} \Tr \Big( - \phi(\mu)^{-1}[Q(\lambda),Q(\mu)] \delta (D_\nu  \phi(\mu)) \\
&+ \phi(\mu)^{-1} \left( [D_\nu Q(\mu),Q(\lambda)] +  [Q(\mu),D_\nu Q(\lambda)] + D_\nu  \phi(\mu) \phi(\mu)^{-1} [Q(\lambda),Q(\mu)] \right) \delta \phi(\mu) \Big)\\
& - (\lambda \leftrightarrow \mu)\,.
\end{eqsplit}
We add the cyclic sum and we select the coefficients of $\delta \phi(\mu)$, $\delta D_\nu  \phi(\mu)$, etc. adding the corresponding terms from $\delta D_\lambda V(\mu,\nu )$.
\begin{eqsplit}
    \delta d V = & \frac{1}{\lambda-\mu} \Tr \Big( - \phi(\mu)^{-1}[Q(\lambda),Q(\mu)] \delta (D_\nu  \phi(\mu)) \\
&+ \phi(\mu)^{-1} \left( [D_\nu Q(\mu),Q(\lambda)] +  [Q(\mu),D_\nu Q(\lambda)] + D_\nu  \phi(\mu) \phi(\mu)^{-1} [Q(\lambda),Q(\mu)] \right) \delta \phi(\mu) \Big)\\
&+ \frac{1}{\nu -\mu} \Tr \Big(  \phi(\mu)^{-1}[Q(\nu ),Q(\mu)] \delta (D_\lambda \phi(\mu)) \\
&- \phi(\mu)^{-1} \left( [D_\lambda Q(\mu),Q(\nu )] +  [Q(\mu),D_\lambda Q(\nu )] + D_\lambda  \phi(\mu) \phi(\mu)^{-1} [Q(\nu ),Q(\mu)] \right) \delta \phi(\mu) \Big) + \circlearrowleft\,.
\end{eqsplit}
By comparing the coefficients of $\delta D_\nu  \phi(\mu)$ and of $\delta D_\lambda  \phi(\mu)$ we get the desired equations \eqref{eq_Q}. The equations coming from the coefficients of $\delta \phi(\mu)$ are differential consequences of them.

We turn to the closure relation. We are going to use the following identities:
\begin{gather}
    \frac{1}{\mu - \nu } - \frac{1}{\lambda  - \nu } = \frac{\lambda  - \mu}{(\mu - \nu )(\lambda  - \nu )}\,,\\
    \frac{1}{(\mu - \nu )(\lambda -\nu )} + \frac{1}{(\nu  - \lambda )(\mu - \lambda )} + \frac{1}{(\nu -\mu)(\lambda -\mu)}=0\,,\\
    \Tr [Q(\lambda ),Q(\mu)]Q(\lambda ) =0\,,\\
    \Tr[Q(\lambda ),Q(\nu )]Q(\mu) =  \Tr[Q(\mu),Q(\lambda )]Q(\nu )\,.
\end{gather}
A direct computation shows that the kinetic term brings
\begin{eqsplit}
   & D_\nu  K(\lambda ,\mu) + D_\lambda  K(\mu,\nu ) + D_\mu K (\nu ,\lambda ) = \Tr(D_\nu  \phi(\lambda ) \phi(\lambda )^{-1} D_\mu Q(\lambda ) +\circlearrowleft)\\
   &=\Tr(\frac{1}{\lambda -\mu} D_\nu Q(\lambda )Q(\mu) + \circlearrowleft)\\
   &=\Tr(\frac{1}{(\lambda -\mu)(\lambda -\mu)}[Q(\lambda ),Q(\nu )]Q(\mu) +\circlearrowleft)\\
   & =\Tr\left( \left(\frac{1}{(\lambda -\mu)(\lambda -\mu)}  +\circlearrowleft \right)[Q(\lambda ),Q(\nu )]Q(\mu)\right) = 0
\end{eqsplit}
We have that the potential term brings:
\begin{eqsplit}
 D_\nu  V(\lambda ,\mu) &= - \frac{1}{2(\lambda -\mu)} 2 \Tr (D_\nu  Q(\lambda ) - D_\nu Q(\mu))(Q(\lambda )-Q(\mu))\\
 &= \frac{1}{\mu-\lambda } \Tr \frac{[Q(\lambda ),Q(\nu )]}{\lambda -\nu } (Q(\lambda ) - Q(\mu)) - \frac{1}{\mu-\lambda } \Tr \frac{[Q(\mu),Q(\nu )]}{\mu-\nu }(Q(\lambda )-Q(\mu))\\
 &=\frac{1}{(\lambda -\mu)(\lambda -\nu )} \Tr[Q(\lambda ),Q(\nu )]Q(\mu)  + \frac{1}{(\lambda -\mu)(\mu-\nu )} \Tr [Q(\mu),Q(\nu )]Q(\lambda )\\
 &=\frac{1}{\lambda -\mu} \left( \frac{1}{\lambda -\nu } - \frac{1}{\mu-\nu }\right) \Tr[Q(\lambda ),Q(\nu )]Q(\mu)\\
 & = \frac{1}{(\lambda -\nu )(\nu -\mu)} \Tr[Q(\lambda ),Q(\nu )]Q(\mu)
\end{eqsplit}
The cyclic sum then is
\begin{eq}
\left(\frac{1}{(\lambda -\nu )(\nu -\mu)} + \frac{1}{(\nu -\mu)(\mu-\lambda )} + \frac{1}{(\mu-\lambda )(\lambda -\nu )} \right)\Tr [Q(\mu),Q(\lambda )]Q(\nu ) = 0
\end{eq}
so that we have that the Lagrangian multiform satisfies the closure relation $d \Lag =0$.
\end{proof}
\subsection{Proof of Proposition \ref{prop_symplecticmultiform}}\label{proof_symplecticmultiform}
\begin{proof}
First, we claim that $\Omegaone$ is given by the generating function 
\begin{eq}
\Omegaone(\lambda) = \Tr\left( Q_0 \phi(\lambda)^{-1} \delta \phi(\lambda)\right)\,.
\end{eq}
We need to show that $\delta \Lag + d \Omegaone = 0$ on the multiform Euler-Lagrange equations $D_\mu Q(\lambda) =\frac{[Q(\mu),Q(\lambda)]}{\mu-\lambda}$.
For convenience, let us denote $\psi(\lambda) \coloneqq \phi^{-1}(\lambda)$. A direct computation shows that
\begin{eqsplit}
    \delta K(\lambda,\mu) =& \Tr\bigg( D_\lambda \phi(\mu) Q_0 \delta \psi(\mu) + Q_0 \psi(\mu) \delta (D_\lambda \phi(\mu))\\
    &- D_\mu \phi(\lambda) Q_0 \delta \psi(\lambda) - Q_0 \psi(\lambda) \delta (D_\mu \phi(\lambda))\bigg)\,,
\end{eqsplit}
and
\begin{eq}
    \delta V(\lambda,\mu) =  \Tr \bigg( \frac{1}{\lambda-\mu}\psi(\lambda) [Q(\lambda),Q(\mu)] \delta\phi(\lambda)  - \frac{1}{\lambda-\mu}\psi(\mu)[Q(\lambda),Q(\mu)]\delta \phi(\mu)\bigg)\,.
\end{eq}
The coefficient of the generating function $\displaystyle \Omegaone(\lambda) = \sum_{k=0}^\infty \omegaone_k/\lambda^{k+1}$ are obtained as (note that $\omegaone_0=0$) 
\begin{eq}
\omegaone_k = \Tr \sum_{i=1}^k Q_0 \psi_i \delta \phi_{k+1-i}\,.
\end{eq}
Hence, for the corresponding form, we have using the variational bicomplex calculus,
\begin{eqsplit}
    d\Omegaone =&d\left(\sum_{k=1}^\infty \omegaone_k \wedge dx^k  \right)\\
    =& \Tr\sum_{j<k=1}^\infty \sum_{i=1}^k\left( \partial_k \psi_i \delta \phi_{j+1-i} + \psi_i\delta (\partial_k \phi_{j+1-i}) - \partial_j \psi_i \delta \phi_{k+1-i} - \psi_i \delta (\partial_j \phi_{k+1-i}) \right) \wedge dx^{jk}\,.
\end{eqsplit}
The associated generating function is given by
\begin{eqsplit}
d\Omegaone(\lambda,\mu) = \Tr \bigg(& Q_0 D_\mu \psi(\lambda) \delta \phi(\lambda) + Q_0 \psi(\lambda) \delta (D_\mu \phi(\lambda))\\
&- Q_0 D_\lambda \psi(\mu) \delta \phi(\mu) -Q_0 \psi(\mu) \delta(D_\lambda \phi(\mu)) \bigg)\,.
\end{eqsplit}
So the sum $\delta K(\lambda,\mu) - \delta V(\lambda,\mu) + d\Omegaone(\lambda,\mu)$ reads 
\begin{eqsplit}
    \Tr \bigg(&\psi(\lambda) D_\mu Q(\lambda) \delta \phi(\lambda)- \frac{1}{\lambda-\mu} \psi(\lambda)[Q(\lambda),Q(\mu)] \delta \phi(\lambda) \\
   &-\psi(\mu) D_\lambda Q(\mu) + \frac{1}{\lambda-\mu}\psi(\lambda) [Q(\lambda),Q(\mu)] \delta \phi(\mu)\bigg)\,.
\end{eqsplit}
This vanishes on the multiform Euler-Lagrange equations 
\begin{eq}
D_\mu Q(\lambda) =\frac{[Q(\mu),Q(\lambda)]}{\mu-\lambda}\,,\qquad D_\lambda Q(\mu) =\frac{[Q(\lambda),Q(\mu)]}{\lambda-\mu}\,,
\end{eq}
thus completing the argument. As a consequence,
\begin{eq}
\Omega(\lambda)=\delta \Omegaone(\lambda)=- \Tr \left(Q_0 \phi(\lambda)^{-1} \delta \phi(\lambda) \wedge \phi(\lambda)^{-1} \delta \phi(\lambda)\right)\,,
\end{eq}
as required.
\end{proof}
\subsection{Proof of Proposition \ref{Hamiltonian1forms}}\label{proof_Hamiltonian1forms}
\begin{proof}
We start with the general expression of the vertical vector field
\begin{equation}
    \xi_F = \sum_{j=1}^\infty \left( A_j \partial_{f_j}+B_j \partial_{e_j}   \right)\,,
\end{equation}
and determine $A_j$, $B_j$ such that $\ip{\xi_F}{\Omega} = \delta F$ holds, or equivalently,
\begin{equation}
    \ip{\xi_F}{\omega_k} = \delta F_k\,, \qquad \forall k\ge 0 \,.
\end{equation}
Since $\omega_0=0$ we instantly get that $F_0$ has to be constant. The left-hand side reads
\begin{eqsplit}
    \ip{\xi_F}{\omega_k} &=  \sum_{i=1}^k \sum_{j=1}^\infty (A_j \delta_{i,j} \delta e_{k-i+1} - B_j \delta_{j,k-i+1} \delta f_i)\\
    &= \sum_{i=1}^k(A_{k-i+1} \delta e_i - B_{k-i +1} \delta f_i)\,,
\end{eqsplit}
whilst the right hand-side is
\begin{equation}
    \sum_{i=1}^\infty \left( \parder{F_k}{e_i} \delta e_i + \parder{F_k}{f_i} \delta f_i \right)\,.
\end{equation}
Comparing the two we get
\begin{subgather}
    \parder{F_k}{e_i} = \parder{F_k}{f_i}=0\,, \qquad \forall i>k\,,\\
    \parder{F_k}{e_i} = A_{k-i+1}\,,\quad \parder{F_k}{f_i} = -B_{k-i+1}\,,\qquad \forall i\le k\,.
\end{subgather}
The latter brings that
\begin{equation}
    \parder{F_k}{e_i} = A_{k-i+1} = A_{(k+1) -(i+1) +1} = \parder{F_{k+1}}{e_{i+1}}\,,
\end{equation}
and similarly for $f_i$. These conditions are necessary and sufficient. 
\end{proof}
\subsection{Proof of Proposition \ref{Hamiltonian0forms}}\label{proof_Hamiltonian0forms}
\begin{proof}
We need to show that 
\begin{equation}
    \ip{\xi_H}{\Omega} = \delta H \quad\text{where}\quad \Omega = \sum_{k=1}^\infty \sum_{m=1}^k \delta f_m \wedge \delta e_{k+1-m} \wedge d x^k\,.
\end{equation}
We start with the left hand-side
\begin{eqsplit}
    \ip{\xi_H}{\Omega} =& \sum_{i=1}^\infty  \sum_{k=1}^\infty \sum_{m=1}^k \ip{\left( - \parder{H}{f_i} \partial_{e_1} \wedge \partial_i +  \parder{H}{e_i} \partial_{f_1} \wedge \partial_i \right)}{\left( \delta f_m \wedge \delta e_{k+1-m} \wedge \delta x^k\right)}\\
   = &  \sum_{i=1}^\infty  \sum_{k=1}^\infty \sum_{m=1}^k  \left( \parder{H}{f_i} \delta_{ik} \delta_{k+1-m,1} \delta f_m \right) + \sum_{i=1}^\infty  \sum_{k=1}^\infty \sum_{m=1}^k  \left( \parder{H}{e_i} \delta_{ik} \delta_{m,1} \delta e_{k+1-m} \right)\\
   =& \sum_{i=1}^\infty \parder{H}{f_i} \delta f_i + \sum_{i=1}^\infty \parder{H}{e_i} \delta e_i = \delta H \,.\qquad
\end{eqsplit}
\end{proof}

\subsection{Proof of Theorem \ref{rmatrix}}\label{proof_rmatrix}
\begin{lmm}
\label{lemmaPB}
For each $k\ge 0$, the only non-zero single-time Poisson of $a_i$, $b_i$ and $c_i$, $0\le i \le k$, are given by
\begin{gather}
    \pb{a_i}{b_j}_k = b_{i+j-k-1}\,,\\
    \pb{a_i}{c_j}_k = - c_{i+j-k-1}\,,\\
    \pb{b_i}{c_j}_k = 2a_{i+j-k-1} \,.
\end{gather}
For convenience, we use the convention that a coefficient in a series vanishes when its index is negative. Hence, it is understood that $\pb{a_i}{b_j}_k = \pb{a_i}{c_j}_k = \pb{b_i}{c_j}_k=0$ whenever $i+j <k+1$.  
\end{lmm}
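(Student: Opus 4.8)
The plan is to pass to generating functions in the spectral parameter and to exploit that $e,f$ are Darboux coordinates for every $\omega_k$. First I would record the fundamental single-time brackets. Since $\omega_k=\sum_{p=1}^{k}\delta f_p\wedge\delta e_{k+1-p}$ by \eqref{form_omegak}, the only nonvanishing brackets among the coordinates are $\pb{f_p}{e_{p'}}_k=\delta_{p+p',\,k+1}$, whereas $\pb{e_p}{e_{p'}}_k=\pb{f_p}{f_{p'}}_k=0$. In generating form this reads
\[
\pb{f(\lambda)}{e(\mu)}_k=\sum_{p=1}^{k}\lambda^{-p}\mu^{-(k+1-p)}=\frac{\lambda^{-k}-\mu^{-k}}{\mu-\lambda}=:S_k(\lambda,\mu).
\]

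Next, since $\pb{~}{~}_k$ is a biderivation, I would compute $\pb{a(\lambda)}{b(\mu)}_k$, $\pb{a(\lambda)}{c(\mu)}_k$, $\pb{b(\lambda)}{c(\mu)}_k$ (and the $aa,bb,cc$ ones) by the chain rule, feeding in the closed forms of Appendix \ref{efcoordinates_section}: $\parder{a(\lambda)}{e_p}=\lambda^{-p}f(\lambda)$, $\parder{a(\lambda)}{f_p}=\lambda^{-p}e(\lambda)$, and $\parder{b(\lambda)}{e_p}=\lambda^{-p}\beta(\lambda)$, $\parder{b(\lambda)}{f_p}=\lambda^{-p}\gamma(\lambda)$ with $\beta=\frac{i-3a}{2\sqrt{i-a}}$, $\gamma=\frac{-e^2}{2\sqrt{i-a}}$, and likewise for $c$ through $\tilde\gamma=\frac{-f^2}{2\sqrt{i-a}}$ (see \eqref{formula_der_e1}--\eqref{formula_der_f}). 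Because each derivative factorises as $\lambda^{-p}$ times a series evaluated at the same point, every such bracket acquires the uniform shape $\pb{g(\lambda)}{h(\mu)}_k=S_k(\lambda,\mu)\,\big(g_1(\lambda)h_2(\mu)-g_2(\lambda)h_1(\mu)\big)$; for instance $\pb{a(\lambda)}{b(\mu)}_k=S_k\big(e(\lambda)\beta(\mu)-f(\lambda)\gamma(\mu)\big)$.

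The crux is then a collapse identity: for $0\le i,j\le k$ and any two series $g,h$ in $\lambda^{-1}$ with no positive powers, the coefficient of $\lambda^{-i}\mu^{-j}$ in $S_k(\lambda,\mu)\,g(\lambda)h(\mu)$ equals the coefficient of $\lambda^{-(i+j-k-1)}$ in the single-variable product $g(\lambda)h(\lambda)$. I would prove this by writing that coefficient as $\sum_{p=1}^{k}g_{i-p}\,h_{j-k-1+p}$ and substituting $q=p-(k+1-j)$ to recast it as $\sum_{q=j-k}^{\,j-1}g_{(i+j-k-1)-q}\,h_q$; the bounds $i,j\le k$ ensure that this truncated range contains the entire support of the convolution, the extra terms vanishing since coefficients of negative index are zero (and $e_0=f_0=0$), so the sum reproduces the full product coefficient. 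This is the main obstacle, and it is precisely where the hypothesis $0\le i,j\le k$ is used; outside that range the formula fails, as $\pb{a_2}{b_2}_1=0\neq b_2$ shows.

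Finally I would apply the collapse to each bracket, reducing everything to single-variable algebra simplified by the orbit constraint \eqref{a_ef}, namely $ef=i+a$, together with $b=e\sqrt{i-a}$ and $c=f\sqrt{i-a}$. A short computation gives $e\beta-f\gamma=b$, $e\tilde\gamma-f\beta=-c$ and $\gamma\tilde\gamma-\beta^2=2a$, which yield the three stated identities with index $i+j-k-1$; meanwhile the combinations controlling $\pb{a_i}{a_j}_k$, $\pb{b_i}{b_j}_k$ and $\pb{c_i}{c_j}_k$ are antisymmetric single products ($ef-fe$, $\gamma\beta-\beta\gamma$, $\tilde\gamma\beta-\beta\tilde\gamma$) and vanish identically, establishing that these are the only nonzero brackets.
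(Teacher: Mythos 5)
Your proof is correct and is essentially the paper's own argument in generating-function clothing: your ``collapse identity'' for $\sum_{p=1}^{k} g_{i-p}\,h_{j-k-1+p}=(gh)_{i+j-k-1}$ with $0\le i,j\le k$ is exactly the paper's key identity \eqref{formula_sum}, and your subsequent reductions via the Appendix \ref{efcoordinates_section} derivative formulas and the constraint $e(\lambda)f(\lambda)=i+a(\lambda)$ (giving $e\beta-f\gamma=b$, $e\tilde\gamma-f\beta=-c$, $\gamma\tilde\gamma-\beta^2=2a$) coincide term-for-term with the paper's computations of $\pb{a_i}{b_j}_k$, $\pb{a_i}{c_j}_k$ and $\pb{b_i}{c_j}_k$. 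One incidental merit of your uniform packaging is that it makes the vanishing of the diagonal brackets $\pb{a_i}{a_j}_k$, $\pb{b_i}{b_j}_k$, $\pb{c_i}{c_j}_k$ explicit (antisymmetric single-variable products after the collapse), a point needed for the ``only non-zero'' claim that the paper's proof leaves implicit: its third displayed computation is labeled $\pb{a_i}{a_j}_k$ but, judging from the derivatives used, actually computes $\pb{b_i}{c_j}_k$.
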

\begin{proof}
 We start with the fact that for any power series $\alpha$ and $\beta$ we have
 \begin{eq}
 \label{formula_sum}
     \sum_{\ell=1}^k \alpha_{i-\ell} \beta_{j+\ell-k-1} = (\alpha\beta)_{i+j-k-1}\,.
 \end{eq}
 In fact, by limiting the sum only to the non-zero terms:
 \begin{eq}
     \sum_{\ell=1}^k \alpha_{i-l} \,\beta_{j+\ell-k-1} = \sum_{\ell=k+1-j}^i \alpha_{i-\ell}\,\beta_{j+\ell-k-1} = \sum_{m=0}^{i+j-k-1} \alpha_{i+j-k-1-m}\,\beta_m = (\alpha\beta)_{i+j-k-1}\,. 
 \end{eq}
 We study the case where $k+1-j\le j$, namely $i+j \ge k+1$. If $i+j <k+1$ then the sum is empty, and the result is zero. We are now ready to compute the following Poisson brackets using the formulas in Appendix \ref{efcoordinates_section}.
 \begin{eqsplit}
     \pb{a_i}{b_j}_k =& \sum_{\ell=1}^k \left(  \parder{a_i}{f_\ell} \parder{b_j}{e_{k+1-\ell}} -\parder{b_j}{f_\ell} \parder{a_i}{e_{k+1-\ell}} \right)\\
    =&  \sum_{\ell=1}^k \left(  e_{i-\ell} \left(\frac{i-3a}{2\sqrt{i-a}}\right)_{j+\ell-k-1} -\left(\frac{-e^2}{2\sqrt{i-a}}\right)_{j-\ell}f_{i+\ell-k-1} \right)\\
    =&  \left( \frac{ie-3ae +e^2f }{2\sqrt{i-a}} \right)_{i+j-k-1}  = \left( \frac{ie -3ae + (i+a)e}{2\sqrt{i-a}}\right)_{i+j-k-1}\\
    =& (e\sqrt{i-a})_{i+j-k-1} = b_{i+j-k-1}\,.
 \end{eqsplit}
 \begin{eqsplit}
     \pb{a_i}{c_j}_k =& \sum_{\ell=1}^k \left(  \parder{a_i}{f_\ell} \parder{c_j}{e_{k+1-\ell}} -\parder{c_j}{f_\ell} \parder{a_i}{e_{k+1-\ell}} \right)\\
    =&  \sum_{\ell=1}^k \left(  e_{i-\ell} \left(\frac{-f^2}{2\sqrt{i-a}}\right)_{j+\ell-k-1} -\left(\frac{i-3a}{2\sqrt{i-a}}\right)_{j-\ell}f_{i+\ell-k-1} \right)\\
    =&  \left( \frac{-ef^2 -f(i-3a)}{2\sqrt{i-a}} \right)_{i+j-k-1}  = \left( \frac{-f(i+a) - if+3af}{2\sqrt{i-a}}\right)_{i+j-k-1}\\
    =& -(f\sqrt{i-a})_{i+j-k-1} =- c_{i+j-k-1}\,.
 \end{eqsplit}
\begin{eqsplit}
     \pb{a_i}{a_j}_k =& \sum_{\ell=1}^k \left(  \parder{a_i}{f_\ell} \parder{a_j}{e_{k+1-\ell}} -\parder{a_j}{f_\ell} \parder{a_i}{e_{k+1-\ell}} \right)\\
    =&  \sum_{\ell=1}^k \left(  \left( \frac{-e^2}{2\sqrt{i-a}} \right)_{i-\ell} \left(\frac{-f^2}{2\sqrt{i-a}}\right)_{j+\ell-k-1} -\left(\frac{i-3a}{2\sqrt{i-a}}\right)_{j-\ell}\left(\frac{i-3a}{2\sqrt{i-a}}\right)_{i+\ell-k-1} \right)\\
    =&  \left( \frac{e^2f^2 - (i-3a)^2 }{4(i-a)} \right)_{i+j-k-1}  = \left( \frac{(i+a)^2 - (i-3a)^2 }{4(i-a)}\right)_{i+j-k-1}\\
    =&2a _{i+j-k-1}\,.
 \end{eqsplit}
\end{proof}
\begin{remark}
These Poisson bracket coincides with the $\pb{~}{~}_{-k}$ in \cite{Avan_Caudrelier_2017}. In this instance we don't take the Poisson brackets of $a_i,b_i,c_i$ for $i>k$ because they do not belong to the $k$-th single-time phase space.
\end{remark}

\begin{proof}[Proof of Theorem \ref{rmatrix}]
We start by proving that
\begin{equation}
    \parder{Q}{e_k}(\lambda) = \lambda \parder{Q}{e_{k+1}}(\lambda)\,, \qquad \parder{Q}{f_k}(\lambda) = \lambda \parder{Q}{f_{k+1}}(\lambda)\,.
\end{equation}
This is done for each matrix element. In fact
\begin{eq}
    \parder{Q}{e_k} = \parder{Q}{e} \parder{e}{e_k} = \lambda^{-k} \parder{Q}{e} = \lambda \lambda ^{-k-1}\parder{Q}{e} = \lambda \parder{Q}{e}\parder{e}{e_{k+1}} = \lambda \parder{Q}{e_{k+1}}\,.
\end{eq}
Similarly $\parder{Q}{f_k} = \lambda \parder{Q}{f_{k+1}}$. By virtue of this result, and since $Q_0$ is constant, we have the following:
\begin{subgather}
\label{ham_prop_Q}
     \parder{Q}{e_k} = \sum_{j=0}^\infty \parder{Q_j}{e_k} \lambda^{-j} \\
     \lambda \parder{Q}{e_{k+1}} = \lambda \sum_{i=0}^\infty \parder{Q_i}{e_{k+1}} \lambda^{-i} = \sum_{i=0}^\infty \parder{Q_i}{e_{k+1}} \lambda^{-i+1} = \sum_{j=0}^\infty \parder{Q_{j+1}}{e_{k+1}} \lambda^{-j}\,.
\end{subgather}
If we look at the coefficients in $\lambda$ we see that, for all $j$ and $k$, $\parder{Q_j}{e_k} = \parder{Q_{j+1}}{e_{k+1}}$. Similarly one can obtain that $\parder{Q_j}{f_k} = \parder{Q_{j+1}}{f_{k+1}}$.\\
Finally, we check that the Lax form is Hamiltonian, using Proposition \ref{Hamiltonian1forms}, i.e. that 
\begin{eqsplit}
    \parder{Q^{(i)}}{e_j} &= \sum_{k=0}^i \lambda^{i-k} \parder{Q_k}{e_j}= \sum_{k=0}^i \lambda^{i-k} \parder{Q_{k+1}}{e_{j+1}}=  \sum_{k=0}^i \lambda^{(i+1)-(k+1)} \parder{Q_{k+1}}{e_{j+1}}\\
    &= \sum_{k=1}^{i+1} \lambda^{i+1-k} \parder{Q_k}{e_{j+1}} =\parder{Q^{(i+1)}}{e_{j+1}} - \lambda^{i+1} \parder{Q_0}{e_{j+1}} =  \parder{Q^{(i+1)}}{e_{j+1}}\,,
\end{eqsplit}
where we used that $Q_0$ is constant. Similarly $\parder{Q^{(i)}}{f_j}= \parder{Q^{(i+1)}}{f_{j+1}}$. \\

We now turn to the proof of \eqref{PBLax}. Thanks to the decomposition of the multi-time Poisson bracket into single-time Poisson brackets, we have that  $\cpb{W_1(\lambda)}{W_2(\mu)} = [r_{12}(\lambda,\mu),W_1(\lambda) + W_2(\mu)]$ if and only for all $k\ge 0$,
\begin{eq}\label{rmatrix_singletime}
    \pb{Q^{(k)}_1(\lambda)}{Q^{(k)}_2(\mu)}_k = [r_{12}(\lambda,\mu),Q^{(k)}_1(\lambda)+Q^{(k)}_2(\mu)]\,.
\end{eq}
Writing $Q^{(k)}(\lambda)=Q^{(k)}_+(\lambda)\sigma_++Q^{(k)}_-(\lambda)\sigma_-+Q^{(k)}_3(\lambda)\sigma_3$, the right hand-side of \eqref{rmatrix_singletime} reads
\begin{eqsplit}
\label{RHS}
    &[r_{12}(\lambda,\mu),Q^{(k)}_1(\lambda)+Q^{(k)}_2(\mu) ]=\frac{2}{\mu-\lambda} (Q^{(k)}_3(\mu) -Q^{(k)}_3(\lambda) ) (\sigma_+ \otimes \sigma_- - \sigma_- \otimes \sigma_+) \\
    &+ \frac{Q^{(k)}_+(\mu) -Q^{(k)}_+(\lambda)}{\mu-\lambda}(\sigma_3 \otimes \sigma_+ - \sigma_+ \otimes \sigma_3) + \frac{Q^{(k)}_-(\lambda) -Q^{(k)}_-(\mu)}{\mu-\lambda}(\sigma_3 \otimes \sigma_- - \sigma_- \otimes \sigma_3) \,,
\end{eqsplit}
while the left hand-side is given by
\begin{eqsplit}
\label{LHS}
    \pb{Q^{(k)}_1(\lambda)}{Q^{(k)}_2(\mu)}_k = &\sum_{i,j=0}^k \frac{(\lambda \mu)^k}{\lambda^i \mu^j} \big(\pb{a_i}{a_j}_k \sigma_3 \otimes \sigma_3 +\pb{b_i}{b_j}_k \sigma_+ \otimes \sigma_+ + \pb{c_i}{c_j}_k \sigma_- \otimes \sigma_-\\
    &+\pb{b_i}{c_j}_k \sigma_+ \otimes \sigma_- + \pb{c_i}{b_j}_k \sigma_- \otimes \sigma_+ + \pb{a_i}{b_j}_k \sigma_3 \otimes \sigma_+\\
    &+ \pb{b_i}{a_j}_k \sigma_+ \otimes \sigma_3 + \pb{a_i}{c_j}_k \sigma_3 \otimes \sigma_- + \pb{c_i}{a_j}_k \sigma_- \otimes \sigma_3 \big)\,.
\end{eqsplit}
We now invoke Lemma \ref{lemmaPB} which gives the necessary single-time Poisson brackets and allows us to check directly that \eqref{RHS} is equal to \eqref{LHS}. We show it for 
the $\sigma_+ \otimes \sigma_-$ component, as the others are obtained similarly. In the left hand-side we have
\begin{equation}
    \frac{2}{\mu - \lambda} ( Q^{(k)}_3(\mu) - Q^{(k)}_3(\lambda) ) = \frac{2}{\mu - \lambda} \sum_{j=0}^k (\mu^{k-j} - \lambda^{k-j}) a_j = 2 \sum_{j=0}^k \sum_{i=1}^{k-j-1} \lambda^i \mu^{k-i-j-1} a_j\,,
\end{equation}
while right hand-side is equal to
\begin{eq}
 2   \sum_{i,j=0}^k \frac{(\lambda \mu)^k}{\lambda^i \mu^j} a_{i+j-k-1} = 2 \sum_{i=0}^k \sum_{m=0}^{i-1} \frac{a_m}{\lambda^{i-k} \mu^{m+1-i}} = 2 \sum_{n=0}^k \sum_{m=0}^{k-n-1} \lambda^n \mu^{k-n-m-1}a_m\,.
\end{eq}
This concludes the proof.
\end{proof}

\subsection{Proof of Theorem \ref{HamiltonianZCE}}\label{proof_HamiltonianZCE}
\begin{proof}
Note the set of zero curvature equations can be written as
\begin{equation}
    d W(\lambda) = W(\lambda) \wedge W(\lambda)\,,
\end{equation}
where the right-hand side is understood as 
\begin{equation}
    W(\lambda) \wedge W(\lambda) = \left( \sum_{i=0}^\infty Q^{(i)}(\lambda) dx^i\right) \wedge \left( \sum_{j=0}^\infty Q^{(j)}(\lambda) dx^j\right) = \sum_{i<j}[Q^{(i)}(\lambda),Q^{(j)}(\lambda)]\, dx^{ij} \,,
\end{equation}
and the left-hand side is of course $dW(\lambda) = \sum_{i<j} (\partial_i Q^{(j)}(\lambda) - \partial_j Q^{(i)}(\lambda)) \,dx^{ij}$. Thus, we have to prove that 
\begin{equation}
\label{goal1}
 W(\lambda) \wedge W(\lambda)=   \sum_{i<j} \cpb{H_{ij}}{W(\lambda)}\,dx^{ij} .
\end{equation}
By definition $\sum_{i<j} \cpb{H_{ij}}{W(\lambda)}= \ip{\xi_{W}(\lambda)}{\delta \Ham} = \sum_{i<j} ( \ip{\xi_{W}(\lambda)}{\delta H_{ij}} )\,dx^{ij}$, where, using the expression \eqref{Lax_vector_field} for $\xi_W(\lambda)$, we find
\begin{equation}
 \ip{\xi_{W}(\lambda)}{\delta H_{ij}}=\sum_{k=1}^j \left( \parder{Q^{(k)}(\lambda)}{e_1} \parder{H_{ij}}{f_k} - \parder{Q^{(k)}(\lambda)}{f_1} \parder{H_{ij}}{e_k}\right)\,.
\end{equation}
Hence \eqref{goal1} is equivalent to, for $i<j$,
\begin{equation}\label{constructionHij}
 [Q^{(i)}(\lambda),Q^{(j)}(\lambda)] = \sum_{k=1}^j \left( \parder{Q^{(k)}(\lambda)}{e_1} \parder{H_{ij}}{f_k} - \parder{Q^{(k)}(\lambda)}{f_1} \parder{H_{ij}}{e_k}\right)  \,.
\end{equation}
We prove the latter in generating form as follows. 
We multiply both sides by $\mu ^{-i-1}\nu ^{-j-1}$ and form the following sums over $i$ and $j$
\begin{equation}
  \sum_{j=0}^\infty \sum_{i=0}^j \frac{1}{\mu ^{i+1}\nu ^{j+1}}  [Q^{(i)}(\lambda),Q^{(j)}(\lambda)] =   \sum_{j=0}^\infty \sum_{i=0}^j  \frac{1}{\mu ^{i+1}\nu ^{j+1}}\sum_{k=1}^j \left( \parder{Q^{(k)}(\lambda)}{e_1} \parder{H_{ij}}{f_k} - \parder{Q^{(k)}(\lambda)}{f_1} \parder{H_{ij}}{e_k}\right)  \,.
\end{equation}
We can rearrange the sums in the right-hand side to get
\begin{equation}
     \sum_{k=1}^\infty \sum_{j=k}^\infty \sum_{i=0}^j  \frac{1}{\mu ^{i+1}\nu ^{j+1}} \left( \cdots\right)
     =
     \sum_{k=1}^\infty \sum_{j=0}^\infty \sum_{i=0}^j  \frac{1}{\mu ^{i+1}\nu ^{j+1}} \left( \cdots\right)
     \,,
\end{equation}
where we have used the fact that $H_{ij}$ depends only on $e_1,\dots,e_j$ and $f_1,\dots,f_j$ in the second step to extend the sum over $j$ from $0$ instead of $k$. We can similarly form the sums with $\mu \leftrightarrow \nu $ and use the same trick to rearrange the sums in the right-hand side. Using the anti-symmetry of both left and right-hand side of \eqref{constructionHij}, we come to the following generating form of \eqref{constructionHij}
\begin{eqsplit}
        &  \sum_{i,j=0}^\infty \frac{1}{\mu ^{i+1}\nu ^{j+1}}  [Q^{(i)}(\lambda),Q^{(j)}(\lambda)] \\
    =& \sum_{k=1}^\infty \left( \parder{Q^{(k)}(\lambda)}{e_1} \parder{}{f_k}\sum_{i,j=0}^\infty\frac{H_{ij}}{\mu ^{i+1}\nu ^{j+1}} - \parder{Q^{(k)}(\lambda)}{f_1} \parder{}{e_k}\sum_{i,j=0}^\infty\frac{H_{ij}}{\mu ^{i+1}\nu ^{j+1}}\right)\,,
\end{eqsplit}
\ie,
\begin{equation}\label{constructionH:generating}
    \frac{[Q(\mu ),Q(\nu )]}{(\mu -\lambda)(\nu -\lambda)}= \sum_{k=1}^\infty \parder{Q^{(k)}(\lambda)}{e_1} \parder{\Ham(\mu ,\nu )}{f_k} - \sum_{k=1}^\infty\parder{Q^{(k)}(\lambda)}{f_1} \parder{\Ham(\mu ,\nu )}{e_k} \,,
\end{equation}
where we have used 
\begin{equation}
    \sum_{i=0}^\infty \frac{Q^{(i)}(\lambda)}{\mu ^{i+1}} = \frac{Q(\mu )}{\mu -\lambda}\,.
\end{equation}
We now show that \eqref{constructionH:generating} holds by computing its right-hand side recalling that 
\begin{equation}
    \Ham(\mu ,\nu )= \frac{1}{2(\nu -\mu )} \Tr (Q(\mu )-Q(\nu ))^2\,.
\end{equation}
For convenience, denote $a(\mu ), a(\nu ), a(\lambda)$ by $ a,  a', a''$ respectively and similarly for $b$ and $c$. We have
\begin{equation}
    \parder{\Ham(\mu ,\nu )}{f_k} = \frac{1}{(\nu -\mu )}\Tr{\left( \parder{Q(\mu )}{f_k} - \parder{Q(\nu )}{f_k}\right)(Q(\mu )-Q(\nu ))}
\end{equation}
and 
\begin{eqsplit}
&\frac{1}{\nu -\mu } \sum_{k=1}^\infty \parder{Q^{(k)}(\lambda)}{e_1} \Tr \parder{Q(\mu )}{f_k} (Q(\mu )- Q(\nu ))\\
&=  \frac{1}{\nu -\mu } \sum_{k=1}^\infty \parder{Q^{(k)}(\lambda)}{e_1} \frac{1}{\mu ^k \sqrt{i-a}} \Tr \begin{pmatrix}
    b & - \frac{b^2}{2(i-a)}\\
    \frac{i-3a}{2} &-b
    \end{pmatrix}
    \begin{pmatrix}
    a-a'&b-b'\\
    c-c'& a'-a
    \end{pmatrix}\\
    &= \frac{1}{\nu -\mu } \sum_{k=1}^\infty \parder{Q^{(k)}(\lambda)}{e_1} \frac{1}{\mu ^k \sqrt{i-a}} \left(2b (a-a') - \frac{b^2(c-c')}{2(i-a)} + \frac{(i-3a)(b-b')}{2}\right)\\
    & = \frac{\mu }{(\nu -\mu )}  \parder{}{e_1} \frac{Q(\mu )}{\mu -\lambda}\frac{1}{ \sqrt{i-a}} \left(2b (a-a') - \frac{b^2(c-c')}{2(i-a)} + \frac{(i-3a)(b-b')}{2}\right)\\
    & = \frac{1}{(\nu -\mu )(\mu -\lambda)} \frac{1}{ i-a}\left(2b (a-a') - \frac{b^2(c-c')}{2(i-a)} + \frac{(i-3a)(b-b')}{2}\right) \begin{pmatrix}
    c & \frac{i-3a}{2}\\
    - \frac{c^2}{2(i-a)} & -c    
    \end{pmatrix}\,,
\end{eqsplit}
where we have used that $\displaystyle\sum_{k=1}^\infty \frac{Q^{(k)}}{\mu ^k} =\mu ( \frac{Q(\mu )}{\mu -\lambda} - \frac{Q_0}{\mu })$ and that $Q_0$ is constant.
 Similarly, we have 
\begin{eqsplit}
     &\frac{1}{\nu -\mu } \sum_{k=1}^\infty \parder{Q^{(k)}(\lambda)}{e_1} \Tr \parder{Q(\nu )}{f_k} (Q(\mu )- Q(\nu )) =\\
     & \frac{1}{(\nu -\mu )(\nu -\lambda)} \frac{1}{ i-a'}(2b' (a-a') - \frac{{b'}^2(c-c')}{2(i-a')} + \frac{(i-3a')(b-b')}{2}) \begin{pmatrix}
    c' & \frac{i-3a'}{2}\\
    - \frac{{c'}^2}{2(i-a')} & -c'    
    \end{pmatrix}\,,
\end{eqsplit}
\begin{eqsplit}
    &\frac{1}{\nu -\mu } \sum_{k=1}^\infty \parder{Q^{(k)}(\lambda)}{f_1} \Tr \parder{Q(\mu )}{e_k} (Q(\mu )- Q(\nu )) =\\
     & \frac{1}{(\nu -\mu )(\mu -\lambda)} \frac{1}{ i-a}(2c (a-a') - \frac{{c}^2(b-b')}{2(i-a)} + \frac{(i-3a)(c-c')}{2}) \begin{pmatrix}
    b &- \frac{{b}^2}{2(i-a)}\\
      \frac{i-3a}{2}& -b    
    \end{pmatrix}\,,
\end{eqsplit}
and
\begin{eqsplit}
    & \frac{1}{\nu -\mu } \sum_{k=1}^\infty \parder{Q^{(k)}(\lambda)}{f_1} \Tr \parder{Q(\nu )}{e_k} (Q(\mu )- Q(\nu ))=\\
     & \frac{1}{(\nu -\mu )(\nu -\lambda)} \frac{1}{ i-a'}(2c' (a-a') - \frac{{c'}^2(b-b')}{2(i-a')} + \frac{(i-3a')(c-c')}{2}) \begin{pmatrix}
    b' &- \frac{{b'}^2}{2(i-a')}\\
      \frac{i-3a'}{2}& -b'    
    \end{pmatrix}\,.
\end{eqsplit}
We collect all the contributions on the $\sigma_3$ component for instance (the other two are obtained similarly).  The numerator of $\frac{N_1}{(\nu -\mu )(\mu -\lambda)(i-a)}$ is
\begin{eqsplit}
    &N_1=2bc(a-a') - \frac{cb^2(c-c')}{2(i-a)} + \frac{(i-3a)(cb-cb')}{2} - 2cb(a-a') + \frac{bc^2(b-b')}{2(i-a)} - \frac{(i-3a)(bc-bc')}{2} \\
    =&\frac{1}{2(i-a)} (-c^2 b^2 + cc'b^2 + b^2c^2 - bb'c^2) + \frac{i-3a}{2}(cb-cb'-bc+bc')\\
    =&\frac{bc}{2(i-a)}(bc'-b'c) + \frac{i-3a}{2} (bc'-cb')\\
    =&(i-a)(bc'-b'c)
\end{eqsplit}
where in the last equality, we have used that $bc = -1-a^2=(i-a)(i+a)$. Similarly, the numerator of $\frac{N_2}{(\nu -\mu )(\nu -\lambda)(i-a')}$ is $-(i-a')(bc'-b'c)$, by simply swapping $\mu $ and $\nu $. So, in total the $\sigma_3$ component of the right-hand side of \eqref{constructionH:generating} is given by
\begin{equation}
    \frac{bc'-b'c}{\nu -\mu } \left( \frac{1}{\mu -\lambda} - \frac{1}{\nu -\lambda} \right) = \frac{bc'-b'c}{(\mu -\lambda) (\nu -\lambda)}\,.
\end{equation}
This is exactly the coefficient of the $\sigma_3$ component of $\frac{[Q(\mu ),Q(\nu )]}{(\mu -\lambda)(\nu -\lambda)}$ as is readily seen. The other components are dealt with in the same way, and are omitted for brevity.
\end{proof}

\addcontentsline{toc}{chapter}{References}
\renewcommand\bibname{References}
\printbibliography

\end{document}